\newtheorem{theorem}{Theorem}
\newtheorem{lemma}{Lemma}
\newtheorem{proposition}{Proposition}
\newtheorem{definition}{Definition}
\newcommand{\nn}{\nonumber}
\newcommand{\dd}{{\rm d}}
\newcommand{\w}{\wedge}
\newcommand{\p}{\partial}
\newcommand{\End}{{\rm End}}
\newcommand{\cd}{\check \dd}
\newcommand{\tr}{{\rm tr}}
\newcommand{\dth}{{{\dd}}_\theta}
\newcommand{\cdth}{{{\check\dd}}_\theta}
\newcommand{\cN}{\mathcal{N}}
\newcommand{\be}{\begin{equation}}
\newcommand{\ee}{\end{equation}}
\def\bea#1\eea{\begin{align}#1\end{align}}
\title{
Infinitesimal moduli of G2 holonomy manifolds with instanton bundles
}
\author[a]{Xenia de la Ossa,}
\author[b]{Magdalena Larfors,}
\author[c,d,e]{Eirik E.~Svanes}
\affiliation[a]{Mathematical Institute, Oxford University\\Andrew Wiles Building, Woodstock Road\\Oxford OX2 6GG, UK }
\affiliation[b]{Department of Physics and Astronomy,Uppsala University\\ SE-751 20 Uppsala, Sweden}
\affiliation[c]{Sorbonne Universit\'es, UPMC Univ. Paris 06, UMR 7589, LPTHE, F-75005, Paris, France}
\affiliation[d]{CNRS, UMR 7589, LPTHE, F-75005, Paris, France}
\affiliation[e]{Sorbonne Universit\'es, Institut Lagrange de Paris, 98 bis Bd Arago, 75014 Paris, France\\}
\emailAdd{delaossa@maths.ox.ac.uk, magdalena.larfors@physics.uu.se, esvanes@lpthe.jussieu.fr}
\abstract{We describe the infinitesimal moduli space of pairs $(Y, V)$ where $Y$ is a manifold with $G_2$ holonomy, and $V$ is a vector bundle on $Y$ with an instanton connection. These structures arise in connection to the moduli space of heterotic string compactifications on compact and non-compact seven dimensional spaces, e.g. domain walls. Employing the canonical $G_2$  cohomology developed by Reyes-Carri\'on and Fern\'andez and Ugarte, we show that the moduli space decomposes into the sum of the bundle moduli $H^1_{\cd_A}(Y,\End(V))$ plus the moduli of the $G_2$ structure preserving the instanton condition. The latter piece is contained in $H^1_{\cd_\theta}(Y,TY)$, and is given by the kernel of a map ${\cal\check F}$ which generalises the concept of the Atiyah map for  holomorphic bundles on complex manifolds to the case at hand.  In fact, the map ${\cal\check F}$ is given in terms of the curvature of the bundle and maps $H^1_{\cd_\theta}(Y,TY)$  into $H^2_{\cd_A}(Y,\End(V))$, and moreover can be used to define a cohomology on an extension bundle of $TY$ by $\End(V)$. We comment further on the resemblance with the holomorphic Atiyah algebroid and connect the story to physics, in particular to heterotic compactifications on $(Y,V)$ when $\alpha'=0$.}
\begin{document}

\maketitle
\flushbottom

\newpage


\section{Introduction}

Manifolds with special holonomy have, since long, been used to construct supersymmetric lower-dimensional vacuum solutions of string and M theory. Seven-dimensional manifolds with $G_2$ holonomy are of interest for two types of vacua: Firstly, compact  $G_2$ holonomy manifolds may be used as the internal space in M theory constructions of four-dimensional vacua preserving $\cN=1$ supersymmetry. Secondly, non-compact $G_2$ holonomy manifolds have been used to construct four-dimensional $\cN=1/2$ BPS domain wall solutions of the heterotic string. In both types of configurations, the moduli space of the compactification is of fundamental importance for the lower dimensional model.

In the mathematical literature, $G_2$ manifolds were first discussed by Berger \cite{berger55}, and the first examples of $G_2$ metrics were constructed by Bryant \cite{MR916718}, Bryant--Salamon \cite{bryant89} and Joyce \cite{joyce1996:1,joyce1996:2}. 
Deformations of  $G_2$ holonomy manifolds, and their associated moduli space, have been thoroughly studied, both by mathematicians and theoretical physicists \cite{joyce1996:1,joyce1996:2,2003math......1218K,Dai2003,deBoer:2005pt,2007arXiv0709.2987K,Grana:2014vxa}
(see \cite{Grigorian:2009ge} for a recent review). It has been shown, by Joyce \cite{joyce1996:1,joyce1996:2}, that, for compact spaces, the third Betti number sets the dimension of the infinitesimal moduli  space.\footnote{See \cite{2012arXiv1212.6457K}  for a recent discussion of deformations of non-compact $G_2$ holonomy manifolds. The study of large deformations of $G_2$ holonomy manifolds is complicated by the fact that the deformation may lead to a torsionful $G_2$ structure \cite{2003math......1218K}. In this paper, we restrict to infinitesimal deformations of $G_2$ holonomy manifolds, and will return to the topic of deformations of torsionful $G_2$ structures in a companion paper \cite{delaOssa}.} This space may be endowed by a metric \cite{Hitchin:2000jd,Gutowski:2001fm,Beasley:2002db},  that shares certain properties with the K\"ahler metric on a Calabi--Yau moduli space \cite{deBoer:2005pt,2007arXiv0709.2987K}. In particular, when used in M theory compactifications, Grigorian and Yau \cite{Grigorian:2008tc} have proposed a local K\"ahler metric for the combined deformation space of the geometry and  M theory flux potential.

However, to the best of our knowledge, the moduli space of the  $G_2$ structure manifolds needed for heterotic BPS domain walls of  \cite{deCarlos:2005kh,Gurrieri:2004dt,Gurrieri:2007jg,Klaput:2011mz,Gray:2012md,Klaput:2012vv,Klaput:2013nla,Lukas:2010mf,Gemmer:2013ica} remains largely to be explored.\footnote{See \cite{Gran:2007kh,Papadopoulos:2008rx,Papadopoulos:2009br,Gutowski:2014ova,Beck:2015gqa,Gran:2016zxk} for discussions on the classification of this type of heterotic and M theory vacua.} In this paper, we will focus on this topic. Our study follows up on our recent paper \cite{delaOssa:2014lma}, where the moduli space of certain six-dimensional $SU(3)$ structure manifolds was explored using an embedding manifold with $G_2$ structure. Here, we take a different perspective and study the moduli space of $G_2$ holonomy manifolds together with that of a vector bundle that encodes the heterotic gauge field. As we will discuss in section \ref{sec:instbund}, supersymmetry translates into an instanton condition on the vector bundle. Deformations of instanton bundles over $G_2$ manifolds have been studied before, see e.g. \cite{reyes1993some, carrion1998generalization, donaldson1998gauge, 2009arXiv0902.3239D, sa2009instantons}, and deformation studies of $G$ structures with instantons also appeared recently in \cite{2014SIGMA..10..083S, Bunk:2014ioa, Charbonneau:2015coa}.

In this article, we will construct the infinitesimal moduli space of  the system $(Y, V)$, where $Y$ is a manifold with $G_2$ holonomy and $V$ is a vector bundle on $Y$ with an instanton connection. This is a well-defined mathematical problem, and provides a first approximation of the geometry and bundle relevant for heterotic $\cN=1/2$ BPS solutions. Our main result is that the infinitesimal moduli space of this system is restricted to lie in the kernel of a map $\check{\mathcal F}$ in the canonical $G_2$ cohomology of \cite{carrion1998generalization,reyes1993some,fernandez1998dolbeault}. We thus show that the so-called Atiyah map stabilisation mechanism for Calabi--Yau moduli in $\cN=1$ heterotic string vacua, which was first discussed by Anderson {\it et.al} \cite{Anderson:2010mh,Anderson:2011ty, Anderson:2016byt}, may be extended to less supersymmetric configurations. We term this map the {\it $G_2$ Atiyah map}, in analogy with the corresponding map in Dolbeault cohomology on complex manifolds with holomorphic vector bundles. 

Recently, a sequence of papers \cite{delaOssa:2014cia,delaOssa:2014msa,Anderson:2014xha,Garcia-Fernandez:2015hja}
, two of which written by two of the present authors, have shed new light on the Atiyah stabilisation mechanism in $\cN=1$ heterotic string vacua. Due to the heterotic anomaly condition, which relates the gauge field strength, tangent bundle curvature to the $H$-flux of the Kalb--Ramond $B$-field, the infinitesimal moduli space is restricted to a more intricate nested kernel in Dolbeault cohomology, which is most conveniently encoded as a holomorphic structure on an extension bundle. This $\cN=1$ result is also of importance for the development of a generalised geometry for the heterotic string
\cite{Hohm:2011ex,Garcia-Fernandez:2013gja,Baraglia:2013wua,Bedoya:2014pma,Coimbra:2014qaa,Hohm:2014eba,Hohm:2014xsa,Garcia-Fernandez:2015hja}.
We expect to obtain similar result for the $\cN=1/2$ compactifications, once we allow $H$ flux. We will return to a study of this system, which corresponds to instanton bundles on manifolds with so-called integrable $G_2$ structure, in the companion paper \cite{delaOssa}. Let us remark already now that, to a large degree, the new results of this paper carry through to this general case.

We also mention that when finalising the current paper, an article appeared on ArXiv \cite{Clarke:2016qtg}, wherein the authors compute the infinitesimal moduli space of seven-dimensional heterotic compactifications and show by means of elliptic operator theory that the resulting space is finite dimensional. They also relate the resulting geometric structures to generalised geometry in a similar fashion to the six-dimensional Strominger system \cite{Garcia-Fernandez:2015hja}. Our approach to the problem resembles more that of \cite{delaOssa:2014cia,delaOssa:2014msa,Anderson:2014xha}, and it would be very interesting to compare with the findings of \cite{Clarke:2016qtg}, as can be done in the six-dimensional case. 

The structure of this paper is as follows. In section \ref{sec:g2struc} we recall the basic properties of manifolds with $G_2$ structure, and review the cohomologies that may be defined on such spaces. In particular, we introduce the canonical $G_2$ cohomologies $H_{\check \dd}^*(Y)$ and $H_{\check \dd_{\theta}}^*(Y,TY)$ for differential forms with values in the reals and the tangent bundle $TY$, respectively. Section \ref{sec:instbund} contains a review of instanton bundles on manifolds with integrable $G_2$ structure. We also prove, following \cite{carrion1998generalization,reyes1993some},  that a canonical $G_2$ cohomology can be constructed for any system $(Y,V)$, where $Y$ is a manifold with integrable $G_2$ structure, and $V$ and instanton bundle. To achieve this, we define a new operator $\check \dd_A$, and show that this gives rise to an elliptic complex. In section \ref{sec:geommod} we reproduce known results for the infinitesimal moduli space of $G_2$ manifolds, and in particular how the moduli are mapped to the canonical $G_2$ cohomology group $H_{\check \dd_{\theta}}^1(Y,TY)$. Finally, in section \ref{sec:bundmod}, we study the variations of the instanton bundle $V$, and the combined system $(Y,V)$. We show that the moduli space corresponds to 
\[  
H^1_{\check \dd_A}(Y, {\rm End}(V))\oplus {\rm ker}(\check{\cal F})\subset H^1_{\check \dd_A}(Y, {\rm End}(V))\oplus H_{\check \dd_{\theta}}^1(Y,TY)~,\]
where elements in $H^1_{\check \dd_A}(Y, {\rm End}(V))$ correspond to bundle moduli and the geometric moduli are restricted to lie in the kernel of the $G_2$ Atiyah map $\check{\cal F}$. This result is also discussed from the perspective of extension bundles.


\section{Manifolds with $G_2$ structure}
\label{sec:g2struc}

In this section, we recall relevant facts about manifolds with $G_2$ holonomy. Our discussion is brief, and the reader is referred to \cite{MR916718,bonan66,FerGray82,Hitchin:2000jd,joyce2000,Bryant:2005mz} for further details. Let $Y$ be a 7-dimensional manifold. A $G_2$ structure on $Y$ exists when the first and second Stiefel-Whitney classes are trivial, that is when $Y$ is orientable and spin. When this is the case, $Y$ admits a nowhere-vanishing Majorana spinor $\eta$. Equivalently, $Y$ has a non-degenerate,   associative 3-form $\varphi$, constructed as a spinor bilinear:
\[
\varphi_{abc} =-i  \eta^{\dagger} \gamma_{abc} \eta~.
\]
Here $\gamma_{abc}$ is an antisymmetric product of three 7-dimensional $\gamma$ matrices, that we take to be Hermitian and purely imaginary. We note that the three-form $\varphi$ is positive, as is required to define a $G_2$ structure \cite{joyce2000}. We will often refer to $\varphi$ as a $G_2$ structure. $Y$ has $G_2$ holonomy when $\eta$ is covariantly constant with respect to the Levi--Civita connection:
\be
\label{eq:covconstspin}
\nabla \eta = 0
\ee
 or equivalently when $\varphi$ is closed and co-closed.

The form $\varphi$ determines a Riemannian metric $g_\varphi$
on $Y$ by
\begin{equation}
6 g_\varphi(x, y)\, \dd {\rm vol}_\varphi
= (x\lrcorner\varphi)\wedge(y\lrcorner\varphi)\wedge\varphi~,
\label{eq:g2metric}
\end{equation}
for all vectors $x$ and $y$ in $\Gamma(TY)$.  In components this means
\begin{equation}
g_{\varphi\, ab} = \frac{\sqrt{\det g_\varphi}}{3!\, 4!}\, 
\varphi_{a c_1 c_2}\, \varphi_{b c_3 c_4}\, \varphi_{c_5 c_6 c_7}\,
\epsilon^{c_1\cdots c_7}
=  \frac{1}{4!}\, 
\varphi_{a c_1 c_2}\, \varphi_{b c_3 c_4}\, 
\psi^{c_1 c_2 c_3 c_4}~,
\label{eq:g2metricab}
\end{equation}
where
\[ \psi = *\varphi ~,\]
which in terms of spinors corresponds to $\psi_{abcd}= \eta^{\dagger} \gamma_{abcd} \eta$, and 
\[ \dd x^{a_1\cdots a_7} = \sqrt{\det g_\varphi}
\ \epsilon^{a_1\cdots a_7}\, \dd {\rm vol}_\varphi~.\]
With respect to this metric, the 3-form $\varphi$, and hence its Hodge dual $\psi$, are  normalised so that
\[ \varphi\wedge *\varphi = ||\varphi||^2\, \dd{\rm vol}_\varphi
~, \qquad ||\varphi||^2= 7~,\]
that is
\[ \varphi\lrcorner\varphi = \psi\lrcorner\psi = 7
~.\]

\subsection{Decomposition of forms}
\label{sec:formdec}

The existence of a $G_2$ structure $\varphi$ on $Y$ determines a decomposition of differential forms on $Y$ into irreducible representations of $G_2$.  This decomposition changes when one deforms the $G_2$ structure. 

Let $\Lambda^k(Y)$ be the space of $k$-forms on $Y$ and $\Lambda_p^k(Y)$ be the subspace of $\Lambda^k(Y)$ of $k$-forms which transform in the $p$-dimensional irreducible representation of $G_2$.   We have the following decomposition for each $k= 0,1, 2, 3$:\footnote{Note that $T^*Y \cong TY$ only as vector spaces.}
\begin{align*}
\Lambda^0 &= \Lambda_1^0
~,\\
\Lambda^1 &= \Lambda_7^1 = T^*Y \cong TY
~,\\
\Lambda^2 &= \Lambda_7^2\oplus \Lambda_{14}^2
~,\\
\Lambda^3 &= \Lambda_1^3\oplus\Lambda_7^3\oplus\Lambda_{27}^3
~.
\end{align*}
The decomposition for $k = 4, 5, 6, 7$ follows from the Hodge dual for $k = 3, 2, 1, 0$ respectively. For a form of a given degree, the decomposition into $G_2$ representations is obtained using contractions and wedge products with $\varphi$, see  \cite{MR916718}. A comprehensive discussion will also appear in \cite{delaOssa}.

\subsection{Torsion classes}

Decomposing into representations of $G_2$ the exterior derivatives of $\varphi$ and $\psi$  we have 
\begin{align}
\dd_7\varphi &= \tau_0\psi + 3\, \tau_1\wedge\varphi + *_7\tau_3~,
\label{eq:Intphi}\\
\dd_7\psi &= 4\, \tau_1\wedge\psi + *\tau_2~,\label{eq:Intpsi}
\end{align}
where the $\tau_i\in \Lambda^i(Y)$ are the {\it torsion classes}, which are {\it uniquely} determined by the $G_2$-structure 
$\varphi$ on $Y$ \cite{FerGray82}. We note that $\tau_2\in \Lambda^2_{14}$ and that $\tau_3\in \Lambda^3_{27}$. 
A $G_2$ structure for which 
\[ 
\tau_2 = 0~,
\]
will be called an {\it integrable} $G_2$ structure, using the parlance of Fern\'andez-Ugarte \cite{fernandez1998dolbeault}. The manifold $Y$ has {\it $G_2$ holonomy} if and only if all torsion classes vanish.

\subsection{Cohomologies on $G_2$ structure manifolds}
In this section, we recall different cohomologies that are of relevance for $G_2$ holonomy manifolds. In fact, a large part of our discussion is valid for a larger class of $G_2$ structure manifolds, namely the integrable ones. When we can, we will state our results for this larger class of manifolds, of which the $G_2$ holonomy manifolds form a subclass.

\subsubsection{de Rham cohomology}
For completeness, and to state our notation, let us first discuss the de Rham complex. As above, $\Lambda^p(Y)$ denotes the bundle of $p$-forms on $Y$. The exterior derivative 
\be
\dd: \Lambda^p(Y)\rightarrow\Lambda^{p+1}(Y)
\ee
maps $p$-forms to $p+1$ forms:
\be
\dd \omega = \sum_{j,I} \frac{\partial \omega_I}{\partial x_j} \dd x_j \wedge \dd x^I \; .
\ee
Since $\dd^2 = 0$, the sequence
\be
0\xrightarrow{\dd}\Lambda^0(Y)\xrightarrow{\dd}\Lambda^1(Y) .... \xrightarrow{\dd}\Lambda^d(Y)\xrightarrow{\dd}0 \; .
\ee
forms a complex. We show in detail in appendix \ref{app:Elliptic} that this de Rham complex  is elliptic. As a consequence, the de Rham cohomology groups
\be
H^p(Y) = {\rm ker}  (\dd_p) / {\rm im}  (\dd_{p-1})
\ee
are finite-dimensional for compact $Y$. Finally, using the wedge product, we see that $H^*(Y)$ is endowed with a natural ring structure, {\it cf.}~Theorem \ref{tm:ringY} below.

\subsubsection{The canonical $G_2$ cohomology}
\label{subsec:checkD}

We now turn to the Dolbeault complex for manifolds with an integrable $G_2$ structure which was first constructed in \cite{reyes1993some} and \cite{fernandez1998dolbeault}.  In these references, a differential operator $\cd$ acting on a sub-complex of the de Rham complex of $Y$, is defined in analogy with a Dolbeault operator on a complex manifold. 

\begin{definition} The differential operator $\cd$ is defined by 
the maps 
\begin{align*}
\cd_0&: \Lambda^0(Y)\rightarrow\Lambda^1(Y)~,\qquad\qquad  
\cd_0f = \dd f~, \qquad\quad f\in\Lambda^0(Y)
~,\\
\cd_1&: \Lambda^1(Y)\rightarrow\Lambda_7^2(Y)~,\qquad\qquad 
\cd_1\alpha = \pi_7(\dd \alpha)
~, \quad \alpha\in\Lambda^1(Y)
~,\\
\cd_2 &: \Lambda_7^2(Y)\rightarrow\Lambda_1^3(Y)~,\qquad\qquad
\cd_2\beta = \pi_1(\dd\beta)~, \quad \beta\in\Lambda_7^2(Y)~.
\end{align*}
That is, 
\begin{equation*}
\cd_0=\dd~,\quad\cd_1=\pi_7\circ\dd~,\quad\cd_2=\pi_1\circ\dd ~.
\end{equation*}
\end{definition}

Consider the following lemma
\begin{lemma}
\label{lem:dontwo14}
Let $Y$ be an integrable $G_2$ holonomy manifold and $\beta\in\Lambda^2_{14}(Y)$. Then 
\[\dd\beta\in\Lambda^3_7(Y)\oplus\Lambda^3_{27}(Y)~.\]
\end{lemma}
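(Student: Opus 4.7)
The statement amounts to showing that the $\Lambda^3_1$-component of $\dd\beta$ vanishes, since by the decomposition recalled in section \ref{sec:formdec} we have $\Lambda^3 = \Lambda^3_1\oplus\Lambda^3_7\oplus\Lambda^3_{27}$. Because $\Lambda^3_1$ is the real line spanned by $\varphi$, the projector $\pi_1$ onto this subspace is (up to a constant) the pairing with $\psi=*\varphi$; explicitly, for $\alpha\in\Lambda^3(Y)$ one has $\pi_1(\alpha)=\frac{1}{7}(\alpha\lrcorner\varphi)\,\varphi$, and $\alpha\lrcorner\varphi$ is proportional to $*(\alpha\wedge\psi)$. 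So my plan is to reduce everything to showing
\[
\dd\beta\wedge\psi=0\quad\text{in}\quad \Lambda^7(Y).
\]

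The key input is the standard $G_2$ characterisation of the $\Lambda^2_{14}$-component: a 2-form $\beta$ lies in $\Lambda^2_{14}(Y)$ if and only if $\beta\wedge\psi=0$ (equivalently, $\beta\wedge\varphi=-*\beta$). I would recall (or briefly derive) this from the contraction/wedge identities between $\varphi,\psi$ and the decomposition of $\Lambda^2$ mentioned in section \ref{sec:formdec}. With that in hand, since $\beta\in\Lambda^2_{14}(Y)$ we have $\beta\wedge\psi=0$, hence also
\[
0=\dd(\beta\wedge\psi)=\dd\beta\wedge\psi+\beta\wedge\dd\psi.
\]

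Now I invoke integrability. By assumption $\tau_2=0$, so equation \eqref{eq:Intpsi} reduces to $\dd\psi=4\,\tau_1\wedge\psi$. Substituting and using that $\tau_1$ is a 1-form (so $\beta\wedge\tau_1=\tau_1\wedge\beta$), we get
\[
\dd\beta\wedge\psi=-\,\beta\wedge\dd\psi=-4\,\tau_1\wedge(\beta\wedge\psi)=0,
\]
where the last equality again uses $\beta\wedge\psi=0$. Therefore $\pi_1(\dd\beta)=0$, and $\dd\beta\in\Lambda^3_7(Y)\oplus\Lambda^3_{27}(Y)$, as claimed.

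The argument is essentially a one-line computation once the right representation-theoretic identity for $\Lambda^2_{14}$ is available. There is no real obstacle; the only place one must be careful is to use the integrability assumption $\tau_2=0$ at the correct step, since without it $\dd\psi$ would have an extra $*\tau_2$ term and the manipulation in the last display would fail (indeed $*\tau_2$ would feed back a nonzero $\Lambda^3_1$ contribution to $\dd\beta$ in general).
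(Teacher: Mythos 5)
Your proof is correct and follows essentially the same route as the paper: both use $\beta\wedge\psi=0$ for $\beta\in\Lambda^2_{14}$, apply $\dd$ to get $\dd\beta\wedge\psi=-\beta\wedge\dd\psi$, and then invoke $\dd\psi=4\,\tau_1\wedge\psi$ (valid since $\tau_2=0$) together with $\beta\wedge\psi=0$ to conclude. Your added remarks on why $\dd\beta\wedge\psi=0$ is equivalent to the vanishing of the $\Lambda^3_1$ component, and on where integrability is genuinely needed, are accurate but the paper leaves them implicit.
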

\begin{proof}
Consider
\begin{align*}
0= \dd(\beta\wedge\psi) = \dd\beta\wedge\psi + \beta\wedge\dd\psi
\end{align*}
Hence
\[
\dd\beta\wedge\psi = - \beta\wedge\dd\psi = - 4\, \beta\wedge\tau_1\wedge\psi = 0~.
\]
Therefore the result follows.

\end{proof}

We then have the following theorem:
\begin{theorem}\label{prop:dolbcomplex}
Let $Y$ be a manifold with a $G_2$ structure. Then
\begin{equation}
\label{eq:dolb}
0\rightarrow\Lambda^0(Y)\xrightarrow{\cd}\Lambda^1(Y)\xrightarrow{\cd}\Lambda^2_7(Y)\xrightarrow{\cd}\Lambda^3_1(Y)\rightarrow0
\end{equation}
is a differential complex, i.e. $\cd^2=0$ if and only if the $G_2$ structure is integrable, that is, $\tau_2 = 0$ .
\end{theorem}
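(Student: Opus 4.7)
The plan is to verify $\cd^2=0$ stage by stage. At the first slot $\cd_1\circ\cd_0 f = \pi_7(\dd^2 f) = 0$ automatically from $\dd^2=0$, so the entire content of the theorem sits at the next composition $\cd_2\circ\cd_1 : \Lambda^1(Y)\to\Lambda^3_1(Y)$. Since $\Lambda^2(Y)=\Lambda^2_7\oplus\Lambda^2_{14}$ has no singlet component, one may write $\pi_7(\dd\alpha) = \dd\alpha - \pi_{14}(\dd\alpha)$, and combined with $\dd^2\alpha=0$ this gives
\[
\cd_2\cd_1\alpha \;=\; \pi_1\!\bigl(\dd\,\pi_7(\dd\alpha)\bigr) \;=\; -\pi_1\!\bigl(\dd\,\pi_{14}(\dd\alpha)\bigr).
\]
The problem thus reduces to controlling the singlet component of $\dd\beta$ for $\beta\in\Lambda^2_{14}(Y)$.

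Next I would make this singlet piece explicit in terms of the torsion. A dimension count (since $\dim\Lambda^7(Y)=1$) together with $\varphi\wedge\psi = 7\,\dd{\rm vol}$ shows that the map $\wedge\psi:\Lambda^3(Y)\to\Lambda^7(Y)$ annihilates $\Lambda^3_7\oplus\Lambda^3_{27}$ and detects only $\Lambda^3_1$, so $\pi_1(\dd\beta)=0$ iff $\dd\beta\wedge\psi=0$. Starting from the defining identity $\beta\wedge\psi=0$ for $\beta\in\Lambda^2_{14}$ and differentiating, Leibniz together with the torsion expansion \eqref{eq:Intpsi} yields
\[
\dd\beta\wedge\psi \;=\; -\beta\wedge\dd\psi \;=\; -4\,\tau_1\wedge\beta\wedge\psi - \beta\wedge *\tau_2 \;=\; -\beta\wedge *\tau_2,
\]
the $\tau_1$-term dropping because $\beta\wedge\psi=0$. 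Substituting $\beta=\pi_{14}(\dd\alpha)$ gives the clean criterion $\cd_2\cd_1\alpha = 0 \Leftrightarrow \pi_{14}(\dd\alpha)\wedge *\tau_2 = 0$.

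The forward implication is then immediate: if $\tau_2=0$ the right-hand side vanishes identically, which is precisely Lemma~\ref{lem:dontwo14} applied to $\beta=\pi_{14}(\dd\alpha)$. For the converse I would argue pointwise. Fix $p\in Y$; since the 1-jet of $\alpha$ at $p$ is free, in local coordinates centred at $p$ one can choose $\alpha_j = \tfrac12(\tau_2)_{ji}(p)\,x^i$ (extended globally by a bump function) so that $\dd\alpha|_p = \tau_2|_p$, and hence $\pi_{14}(\dd\alpha)|_p = \tau_2|_p$. The hypothesis $\cd^2=0$ then forces $\tau_2|_p\wedge *\tau_2|_p = |\tau_2|^2(p)\,\dd{\rm vol}|_p = 0$, so $\tau_2|_p=0$; since $p$ was arbitrary, $\tau_2\equiv 0$. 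The one delicate point is precisely this pointwise realisability of an arbitrary element of $\Lambda^2_{14}|_p$ as $\pi_{14}(\dd\alpha)|_p$ for some globally defined $\alpha$; everything else is a short Leibniz computation.
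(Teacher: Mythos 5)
Your proof follows essentially the same route as the paper's: reduce $\cd^2\alpha$ to $-\pi_1(\dd\,\pi_{14}(\dd\alpha))$, detect the singlet by wedging with $\psi$, and use Leibniz on $\pi_{14}(\dd\alpha)\wedge\psi=0$ together with \eqref{eq:Intpsi} to arrive at the criterion $\pi_{14}(\dd\alpha)\wedge *\tau_2=0$. The only difference is that you supply the pointwise realisability argument for the converse (choosing $\alpha$ with $\dd\alpha|_p=\tau_2|_p$ so that the criterion forces $|\tau_2|^2(p)=0$), a step the paper simply asserts with ``this can only hold true iff $\tau_2=0$''; your justification of it is correct.
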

\begin{proof}
Let $f\in \Lambda^0(Y)$.  Then
\[ \cd^2 f = \pi_1\dd(\dd f) = 0~.\]
Consider $\alpha\in\Lambda^1(Y)$.  In this case
\[ \cd^2\alpha = \pi_1\big(\dd(\pi_7(\dd\alpha))\big)
= \pi_1\big(\dd( \dd\alpha - \pi_{14}(\dd\alpha))\big)
= -  \pi_1\big(\dd(\pi_{14}(\dd\alpha))\big)
~.
\]
Hence
\[  \cd^2\alpha = 0\quad{\rm iff}\quad \dd(\pi_{14}(\dd\alpha))\in \Lambda_7^3\oplus\Lambda_{14}^3
\quad{\rm iff}\quad \dd(\pi_{14}(\dd\alpha))\wedge\psi = 0~,
\]
for all $\alpha\in \Lambda^1(Y)$.  We have
\[ \dd(\pi_{14}(\dd\alpha))\wedge\psi 
= \dd(\pi_{14}(\dd\alpha)\wedge\psi) - (\pi_{14}(\dd\alpha))\wedge\dd\psi
= - (\pi_{14}(\dd\alpha))\wedge*\tau_2
~.\]
Therefore
\[ \cd^2\alpha = 0\quad{\rm iff}\quad (\pi_{14}(\dd\alpha))\wedge*\tau_2 = 0~,\]
for all $\alpha\in \Lambda^1(Y)$.
This can only hold true iff $\tau_2=0$.

\end{proof}

We denote the complex \eqref{eq:dolb} by $\check\Lambda^*(Y)$. It should be mentioned that the complex \eqref{eq:dolb} is actually an elliptic complex \cite{reyes1993some}. We give a proof of this in appendix \ref{app:Elliptic}. We denote by $H_{\check \dd}^*(Y)$ the corresponding cohomology ring, which is often referred to as the canonical $G_2$-cohomology of $Y$ \cite{fernandez1998dolbeault}. 

One curiosity to note about $\cd$ is that in contrast to the familiar differentials like the de Rham operator $\dd$ or the Dolbeault operators $\bar\partial$ and $\partial$, $\cd$ does not generically satisfy a Poincare lemma. To see why, consider $\alpha\in\check\Lambda^1(Y)=\Lambda^1(Y)$. If there was a Poincare lemma, then $\cd\alpha=0$ would imply that $\alpha=\cd f=\dd f$ for some locally defined function $f$. But then we would have $\dd\alpha=0$, which is not true in general. In other words the complex \eqref{eq:dolb} is not locally trivial. Hence, it becomes harder to define a notion of sheaf cohomology for $\cd$.

Note that we can endow $H_{\check \dd}^*(Y)$ with a natural ring structure. Indeed, we have the following theorem
\begin{theorem}
\label{tm:ringY}
The wedge product induces a well-defined ring structure on the cohomology $H_{\check \dd}^*(Y)$. The corresponding symmetric product is denoted by
\begin{equation*}
(\:,\:)\::\: H_{\check \dd}^p(Y)\times H_{\check \dd}^q(Y)\rightarrow H_{\check \dd}^{p+q}(Y)\:,
\end{equation*}
and is given by, for $\alpha\in H_{\check \dd}^p(Y)$ and $\beta\in H_{\check \dd}^q(Y)$,
\begin{equation*}
(\alpha,\beta)=\pi_i(\alpha\wedge\beta)\:.
\end{equation*}
where $\pi_i$ denotes the appropriate projection onto the correct subspace $\Lambda^{p+q}_i(Y)$ of $\Lambda^{p+q}(Y)$.
\end{theorem}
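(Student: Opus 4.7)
The plan is a case-by-case verification, indexed by the bidegrees $(p,q)$ with $p,q,p+q\in\{0,1,2,3\}$ --- the only degrees for which the target space appears in the complex \eqref{eq:dolb}. The cases $(0,k)$ are immediate, because $\cd$-closed functions are locally constant (since $\cd_0=\dd$) and then $\cd(f\beta)=f\,\cd\beta$ for any $f\in\Lambda^0$. The substantive cases are $(1,1)$ and $(1,2)$.

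Two observations furnish the main algebraic tools. First, by Lemma~\ref{lem:dontwo14}, integrability forces $\pi_1\dd\eta=0$ for every $\eta\in\Lambda^2_{14}$. Second, the $G_2$-equivariant projection $\pi_1:\Lambda^2_{14}\wedge\Lambda^1\to\Lambda^3_1$ vanishes identically, because the trivial $G_2$-representation $\mathbf{1}$ does not appear in $\mathbf{14}\otimes\mathbf{7}$ (since $\mathbf{14}$ and $\mathbf{7}$ are distinct irreducibles; in fact $\mathbf{14}\otimes\mathbf{7}=\mathbf{7}\oplus\mathbf{27}\oplus\mathbf{64}$). An analogous vanishing holds for $\pi_1(\Lambda^1\wedge\Lambda^2_{14})$.

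For case $(1,1)$, I would take $\alpha,\beta\in\Lambda^1$ with $\dd\alpha,\dd\beta\in\Lambda^2_{14}$ (which is the content of $\cd\alpha=\cd\beta=0$) and split $\alpha\wedge\beta$ into its $\Lambda^2_7$ and $\Lambda^2_{14}$ components. Applying $\cd=\pi_1\circ\dd$ and the Leibniz rule yields $\cd\pi_7(\alpha\wedge\beta)=\pi_1(\dd\alpha\wedge\beta-\alpha\wedge\dd\beta)-\pi_1\dd\pi_{14}(\alpha\wedge\beta)$; the first two terms vanish by the second observation and the last by the first. Representative-independence works in the same spirit: if $\alpha=\cd\gamma=\dd\gamma$ for $\gamma\in\Lambda^0$, then $(\cd\gamma,\beta)=\pi_7(\dd\gamma\wedge\beta)=\cd(\gamma\beta)-\gamma\,\cd\beta=\cd(\gamma\beta)$. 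Case $(1,2)$ follows the same template: closedness is automatic because $\Lambda^3_1$ sits at the end of the complex, while exactness under $\beta=\cd\gamma$ is handled by writing $\pi_7\dd\gamma=\dd\gamma-\pi_{14}\dd\gamma$ and applying both observations once more, together with a Leibniz rearrangement $\alpha\wedge\dd\gamma=\dd\alpha\wedge\gamma-\dd(\alpha\wedge\gamma)$.

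The main obstacle is algebraic rather than conceptual: because $\cd$ is $\dd$ followed by a projection, one must systematically track which $G_2$-irreducibles appear in each of $\alpha\wedge\beta$, $\dd\alpha\wedge\beta$, and their further differentials, and in particular verify the representation-theoretic vanishing of observation two (and its degree-shifted analogues). Once that is isolated, the verification in each case reduces to a short Leibniz calculation, and bilinearity and associativity of the induced product on $H^*_{\cd}(Y)$ descend from the corresponding properties of the wedge product at the level of forms.
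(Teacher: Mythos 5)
Your proof is correct and follows essentially the same route as the paper's, which simply defers to the argument given for Theorem~\ref{tm:ringA}: a case-by-case check in degrees $(0,k)$, $(1,1)$ and $(1,2)$ using the Leibniz rule, Lemma~\ref{lem:dontwo14}, and the vanishing of $\pi_1$ on products involving a $\Lambda^2_{14}$ factor (which the paper obtains by wedging with $\psi$, and you obtain equivalently from the absence of a singlet in $\mathbf{14}\otimes\mathbf{7}$). The only cosmetic difference is this representation-theoretic phrasing of the key vanishing; the substance of the verification is identical.
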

\begin{proof}
The proof of this theorem is very similar in spirit to the proof of Theorem \ref{tm:ringA} below. One needs to show that if $\alpha$ and $\beta$ are $\cd$-closed, then $(\alpha,\beta)$ is $\cd$-closed. Also, in order to be a well-defined product, if either $\alpha$ or $\beta$ are $\cd$-exact, then the product should also be exact. We leave this as an exercise for the reader. 

\end{proof}

\subsubsection{A canonical $G_2$ cohomology for $TY$}
\label{subsec:thetaconnection}
 
 In the following, and in the accompanying paper \cite{delaOssa}, we will  discover that deformations of $G_2$ holonomy manifolds can be understood by means of a connection $\dth$ on the tangent bundle $TY$.    In anticipation of these results, in this subsection we define this connection and include a number of properties.   
   
 Let $\Delta^a$ be a $p$-form with values in $TY$, that is $\Delta\in\Lambda^p(TY)$. Let $\dth$ be a connection on $TY$ defined by
 \[ \dth \Delta^a = \dd\Delta^a + \theta_b{}^a\wedge \Delta^b~,\]
 where the connection one form $\theta_b{}^a$ is given by
 \[ \theta_b{}^a = \Gamma_{bc}{}^a \dd x^c~,\]
 and $\Gamma$ are the connection symbols of a metric connection $\nabla$ on $Y$ which is compatible with the $G_2$ structure, that is
 \[ \nabla\varphi = 0~,\qquad \nabla\psi = 0 ~.\]
On $G_2$ holonomy manifolds, this connection is unique, and corresponds to the Levi--Civita connection. Thus, we have 
\be
\label{eq:dthetahol}
 \dth \Delta_t^a =  \dd \Delta_t^a + \theta_b{}^a\wedge \Delta_t^b = \nabla^{LC}_b\, \Delta_{t\,c}{}^a\,\dd x^{bc}~.
 \ee
Note that this implies that the connection $\dth$ is metric.

 Given the connection $\dth$ on $TY$ defined in this subsection, one can define the operator $\check \dth$ as in definition \ref{def:checkdA}, and a complex $\check\Lambda^*(Y,TY)$ as in equation \eqref{eq:dolb}.   We then have:
\begin{theorem}\label{prop:dolbcomplex2}
Let $Y$ be a manifold with integrable $G_2$ structure. Then
\begin{equation}
\label{eq:dthcompl}
0\rightarrow\Lambda^0(TY)\xrightarrow{\check \dth}\Lambda^1(TY)\xrightarrow{\check \dth}\Lambda^2_7(TY)\xrightarrow{\check \dth}\Lambda^3_1(TY)\rightarrow0
\end{equation}
is a differential complex, i.e. $\check \dth^2=0$ if and only if  $\check R(\theta)$ is an instanton, {\it i.e.} $\check R(\theta)_a{}^b \w \psi = 0$.
\end{theorem}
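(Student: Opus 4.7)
The plan is to parallel the proof of Theorem \ref{prop:dolbcomplex}, the novel feature being that the connection $\dth$ on $TY$ has nontrivial curvature: $\dth^2$ acts on $TY$-valued forms as wedging with $R(\theta)_b{}^a$. In the scalar case the residual obstruction to $\cd^2=0$ was the torsion class $\tau_2$; here $\tau_2=0$ is built into the hypothesis of integrability, so the residual obstruction will instead be an algebraic expression in $R(\theta)$, and we expect its vanishing to match the stated instanton condition $R(\theta)_a{}^b\wedge\psi=0$.

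First I would check $\cdth^2$ on $\Lambda^0(TY)$. For a section $V$,
\[
\cdth^2 V \;=\; \pi_7(\dth^2 V) \;=\; \pi_7\bigl(R(\theta)_b{}^a\,V^b\bigr),
\]
and demanding this vanish for every $V$ forces $\pi_7(R(\theta)_b{}^a)=0$, equivalently $R(\theta)_b{}^a\wedge\psi=0$. Next, for $\alpha\in\Lambda^1(TY)$, the same algebraic manipulation as in the scalar proof yields
\[
\cdth^2\alpha \;=\; \pi_1(\dth^2\alpha)\;-\;\pi_1\bigl(\dth\,\pi_{14}(\dth\alpha)\bigr).
\]
The second term I would dispatch with a $TY$-valued analogue of Lemma \ref{lem:dontwo14}: for $\beta\in\Lambda^2_{14}(TY)$ one has $\beta^a\wedge\psi=0$ componentwise, so Leibniz combined with $\dd\psi=4\tau_1\wedge\psi$ (available by integrability, and $\psi$ being scalar-valued) yields $\dth\beta\wedge\psi=0$, placing $\dth\beta$ in $\Lambda^3_7(TY)\oplus\Lambda^3_{27}(TY)$ and so killing its $\pi_1$ part. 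The first term reduces to $\pi_1\bigl(R(\theta)_b{}^a\wedge\alpha^b\bigr)$, and using $\pi_1(\gamma)=0\Leftrightarrow\gamma\wedge\psi=0$ with $\alpha$ varied freely leads once more to $R(\theta)_b{}^a\wedge\psi=0$. As $\Lambda^3_1(TY)$ terminates the complex, no further compositions require checking.

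The step I expect to need the most care is the equivalence between $\pi_1\bigl(R(\theta)\wedge\alpha\bigr)=0$ for every one-form $\alpha$ and the instanton condition on each endomorphism component $R(\theta)_b{}^a$. This ultimately rests on the elementary observation that a six-form on a seven-manifold which annihilates every one-form is itself zero, but one must carry the $\End(TY)$ indices through the projection correctly so that the instanton equation emerges componentwise in $a,b$ rather than merely in some trace. Once this bookkeeping is verified the remainder of the argument is mechanical.
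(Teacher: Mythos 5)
Your proposal is correct and follows exactly the route the paper intends: the paper omits this proof, stating only that it is ``similar to the proofs of Theorems \ref{prop:dolbcomplex} and \ref{th:dacheck}'', and your argument is precisely that adaptation, with $\dth^2$ acting as the curvature $R(\theta)_b{}^a$ playing the role that $F$ plays in Theorem \ref{th:dacheck} and the $\tau_2$ obstruction absorbed into the integrability hypothesis. The final bookkeeping point you flag (that $\pi_1(R(\theta)_b{}^a\wedge\alpha^b)=0$ for all $\alpha$ forces the instanton condition componentwise) is handled correctly by choosing $\alpha^b$ supported on a single frame direction.
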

\begin{proof}
 We omit this proof, since it is similar to the proofs of Theorems \ref{prop:dolbcomplex} and  \ref{th:dacheck}.  \end{proof}
 
On a $G_2$ holonomy manifold, Theorem \ref{prop:dolbcomplex2} always holds, since the curvature 
 \[ R(\theta)_a{}^b = \dd\theta_a{}^b + \theta_c{}^b\wedge\theta_a{}^c~,\]
 equals the curvature of the Levi-Civita connection $\nabla$: 
\[
(R(\theta)_a{}^b)_{cd} = \partial_c \Gamma_{ad}{}^b + \Gamma_{ec}{}^b\wedge\Gamma_{ad}{}^e = 
\partial_c \Gamma_{da}{}^b + \Gamma_{ce}{}^b\wedge\Gamma_{da}{}^e
=
(R(\nabla)_a{}^b)_{cd}~.
\]
Consequently, we may denote the curvature for both connections by $R$. 
Moreover, integrability of the spinorial constraint \eqref{eq:covconstspin} for $G_2$ holonomy implies that $\nabla$ is an instanton 
\[
[\nabla_n, \nabla_p] \eta =0  \iff 
R_{np\,ab}\gamma^{ab} \eta = 0 \iff
R_a{}^b \w \psi = 0
\, .
\]
It thus follows that $G_2$ holonomy implies that $\theta$ is an instanton. As a consequence, $TY$ is an instanton bundle with connection $\theta$. We will discuss instanton bundles in complete generality in next section, and will prove that the complex \eqref{eq:dthcompl} is elliptic and that the associated cohomology groups $H_{\check \dd_{\theta}}^p(Y,TY)$ are finite-dimensional (if $Y$ is compact). 


\section{Instanton bundles on manifolds with integrable $G_2$ structure}
\label{sec:instbund}

In this section, we discuss vector bundles with an instanton connection over manifolds with $G_2$ structure.  Higher-dimensional instanton equations generalise the self-dual Yang-Mills equations in four dimensions, and were first constructed in \cite{1980PhR....66..213E,CORRIGAN1983452,WARD1984381}. The instanton condition can be reformulated as a $G_2$ invariant constraint \cite{2000math.....10015T,donaldson1998gauge, 2009arXiv0902.3239D, Harland:2009yu,2010JHEP...10..044B,Harland:2011zs,Ivanova:2012vz,Bunk:2014coa,Bunk:2014kva,Haupt:2014ufa,2015arXiv151104928H}, and explicit solutions to the instanton condition on certain $G_2$ manifolds are also known \cite{2011arXiv1109.6609W,2014JGP....82...84C}. Here, we show that the $G_2$ instanton condition is implied by a supersymmetry constraint in string compactifications, and that it, in turn, implies the Yang--Mills equations as an equation of motion of the theory. In the second part of this chapter, we define an elliptic Dolbeault cohomology on $G_2$ instanton bundles, which we will use in the subsequent discussion of the infinitesimal moduli space of $G_2$ manifolds with instanton bundles.

\subsection{Instantons and Yang--Mills equations}

Let $Y$ be a $d$-dimensional real Riemannian manifold and let $V$ be a vector bundle on $Y$ with connection $A$.
Suppose $Y$ has a $G$-structure and that $Q$ is a $G$-invariant four-form on $Y$.  The connection $A$ on $V$ is an instanton
if for some real number $\nu$ (typically $\nu = \pm 1$), the curvature $F= \dd A + A\wedge A$ satisfies (see e.g.~\cite{Harland:2011zs})
\begin{equation}
 F\wedge *Q = \nu\, *F~.\label{eq:inst1}
 \end{equation}
In fact, taking the Hodge dual, 
equation \eqref{eq:inst1} is
\begin{equation}
F\lrcorner Q = \nu\, F~.\label{eq:inst2}
\end{equation} 
In the case when $G=G_2$ and $d= 7$, the $G_2$-invariant four-form is $Q=\psi = *\varphi$, so
\[ F\wedge\varphi = - *F \iff F\lrcorner \psi = - F~,\]
where we have taken the Hodge dual in the second equality. This is the condition that $F\in \Lambda^2_{14}(Y,{\rm End}(V))$ and it is equivalent to
\begin{equation}
 F\wedge\psi = 0~.\label{eq:G2inst}
 \end{equation}

An instanton is supposed to satisfy the Yang-Mills equation, which in our case,  appears as an equation of motion of the superstring theory.  We will review how this works for the general $d$-dimensional case with non-zero torsion, specialising at the end of this section to $d=7$ and $G_2$ holonomy. Note also that the instanton equation is implied from the vanishing of the supersymmetric variation of the gaugino
\[ F_{mn}\,\gamma^{mn}\, \eta = 0~,\]
 whenever we are considering compactifications which preserve some supersymmetry (here $\eta$ is a nowhere vanishing globally well defined spinor which defines the $G$-structure on Y, {\it cf.} section \ref{sec:g2struc}).  Hence the Yang-Mills equation (as an equation of motion) is satisfied if this supersymmetry condition (as an instanton) is satisfied.

To see that equation \eqref{eq:inst1} satisfies the Yang-Mills equation, we begin by taking the exterior derivative of equation \eqref{eq:inst1}
\begin{equation}
\dd F \wedge *Q + F\wedge \dd*Q= \nu\, \dd *F~.\label{eq:dinst}
\end{equation}
Using the Bianchi identity for $F$  
\[ \dd_A F = \dd F + A\wedge F - F\wedge A =0~,\]
on the first term of the left hand side of equation \eqref{eq:dinst} we have 
\begin{equation*}
 \dd F \wedge *Q = (- A\wedge F + F\wedge A) \wedge *Q 
= \nu\, (- A \wedge *F + (-1)^d\, *F\wedge A)~.
\end{equation*}
Plugging this back into equation \eqref{eq:dinst} and rearranging 
we find
\begin{equation}
\nu\,  \dd_A*F = F\wedge \dd*Q~,
\label{eq:predinst}
\end{equation}
where
\[ \dd_A\beta = \dd\beta + A\wedge\beta - (-1)^k\, \beta\wedge A~,\]
for any $k$-form $\beta$ with values in ${\rm End}(V)$.

Recall that in $d$-dimensions, for any $k$-form with values in ${\rm End}(V)$
\begin{align*}
\dd^\dagger_A \beta &= (-1)^{ dk + d +1} * \dd_A *\beta \\[3pt]
&= \dd^\dagger\beta 
+ (-1)^{dk + d + 1}\, * (A\wedge*\beta + (-1)^{d + k + 1} *\beta\wedge A)~.
\end{align*}
Therefore, taking the Hodge dual of \eqref{eq:predinst} 
we find
\begin{equation}
\nu\,\dd^\dagger_A F= F\lrcorner \dd^\dagger Q~,\label{eq:YM}
\end{equation}
which should then be the Yang-Mills equation when there is non-vanishing torsion. In the $G_2$ holonomy case, we have that $Q=\psi$ is closed, by which we conclude that
\be
\label{eq:YMholo}
\dd^\dagger_A F = 0 \quad\quad \mbox{($G_2$ holonomy)}\; .
\ee
This is in fact the equation of motion for the dilaton in fluxless $\cN=1$ supersymmetric compactifications of the heterotic string, as can be seen using the identity \eqref{eq:ddaggeralpha} and comparing with equation (A.4d) in \cite{Gauntlett:2003cy}. In a similar fashion, one may show that \eqref{eq:YM} is indeed the 
equation of motion for the dilaton when there is non-vanishing torsion (as discussed in \cite{Gauntlett:2003cy} this is requires that $Y$ permits generalised calibrations, which relate the $H$-flux to $\dd^{\dagger}Q$).

\subsection{A  canonical $G_2$ cohomology for instanton bundles}
\label{sec:BundleCohomology}

Let us now construct a Dolbeault-type cohomology that generalizes the canonical $G_2$ cohomology of $Y$ to a vector bundle $V$ over $Y$, as was first done in \cite{carrion1998generalization,reyes1993some}. We assume that the connection $A$ on $V$ is an instanton, so that its curvature satisfies
\begin{equation}
\label{eq:holbund}
\psi\wedge F=0\:,
\end{equation}
or, equivalently, $F\in\Lambda^2_{14}(Y,\End(V))$.
We will state all results of this section in the most general terms, namely for integrable $G_2$ structures and for forms with values in a vector bundle $E$, where the bundle $E$ can be $V$, $V^*$, $\End(V) = V \otimes V^*$, or any other sum or product of these bundles.
We note first that Lemma \ref{lem:dontwo14} readily generalises to the exterior derivative $\dd_A$.

\begin{lemma}\label{lem:dAontwo14}
Let $\beta$ be a two form with values in a vector bundle $E$ defined above.  Let $A$ be any connection on $V$.
If $\beta\wedge\psi=0$, that is if $\beta\in \Lambda_{14}^2(Y,E)$, then
\[ \dd_A\beta\in \Lambda_{7}^3(Y,E)\oplus\Lambda_{27}^3(Y,E)~.\]
\end{lemma}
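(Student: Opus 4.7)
The plan is to imitate the proof of Lemma \ref{lem:dontwo14} almost verbatim, with the ordinary exterior derivative $\dd$ replaced by the covariant derivative $\dd_A$ induced on bundle-valued forms by the connection $A$ (and its dual/tensor extensions to $E = V$, $V^*$, $\End(V)$, etc.). The crucial point is that $\psi$ is a scalar four-form, so $\dd_A$ acts on $\psi$ just as $\dd$ would, and the Leibniz rule across the wedge product $\beta \w \psi$ involves only $\dd\psi$ on the second factor.

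Concretely, first I would apply $\dd_A$ to the assumed identity $\beta\w\psi=0$. Using the graded Leibniz rule,
\[
0 \;=\; \dd_A(\beta\w\psi) \;=\; (\dd_A\beta)\w\psi \;+\; (-1)^{2}\,\beta\w\dd\psi \;=\; (\dd_A\beta)\w\psi + \beta\w\dd\psi~.
\]
Next I would invoke integrability of the $G_2$ structure: $\tau_2=0$ implies, via \eqref{eq:Intpsi}, that $\dd\psi = 4\,\tau_1\w\psi$. Substituting this and commuting the one-form $\tau_1$ past $\beta$ (which costs only signs),
\[
(\dd_A\beta)\w\psi \;=\; -\,4\,\beta\w\tau_1\w\psi \;=\; \pm\,4\,\tau_1\w(\beta\w\psi) \;=\; 0~,
\]
where the last equality uses the hypothesis $\beta\w\psi=0$.

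To conclude, I would appeal to the $G_2$ representation decomposition of three-forms recalled in section \ref{sec:formdec}: the only component of $\Lambda^3(Y)$ with a nonzero wedge with $\psi$ is the singlet $\Lambda^3_1$, spanned by $\varphi$, since $\varphi\w\psi = 7\,\dd{\rm vol}_\varphi \neq 0$, while by representation-theoretic counting the wedge map $\Lambda^3\to\Lambda^7$ must vanish on $\Lambda^3_7 \oplus \Lambda^3_{27}$. Tensoring with the vector bundle $E$ does not affect this pointwise decomposition, so $(\dd_A\beta)\w\psi=0$ forces the $\Lambda^3_1(Y,E)$ component of $\dd_A\beta$ to vanish, giving $\dd_A\beta \in \Lambda^3_7(Y,E)\oplus\Lambda^3_{27}(Y,E)$.

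I do not foresee a serious obstacle: the argument is structurally identical to Lemma \ref{lem:dontwo14}. The only points requiring a little care are (i) ensuring the Leibniz rule holds with the correct signs when extending $\dd_A$ to the relevant tensor bundle $E$ (which is standard, because $\psi$ is scalar-valued so the connection passes through without producing a commutator term), and (ii) noting that the representation-theoretic statement about wedging three-forms with $\psi$ is a fibrewise statement and therefore survives tensoring by $E$. Neither point relies on $A$ being an instanton, consistent with the lemma's stated generality.
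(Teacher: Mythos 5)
Your proof is correct and follows essentially the same route as the paper: apply the Leibniz rule to $\dd_A(\beta\w\psi)=0$, substitute $\dd\psi = 4\,\tau_1\w\psi$ (using integrability, $\tau_2=0$), and use the hypothesis $\beta\w\psi=0$ to conclude $(\dd_A\beta)\w\psi=0$. The only difference is that you spell out the final representation-theoretic step (that wedging with $\psi$ kills exactly the $\Lambda^3_7\oplus\Lambda^3_{27}$ components) and the independence from the instanton condition, both of which the paper leaves implicit.
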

\begin{proof}
Consider
\begin{align*}
0= \dd_A(\beta\wedge\psi) = \dd_A\beta\wedge\psi + \beta\wedge\dd\psi
\end{align*}
Hence
\[
\dd_A\beta\wedge\psi = - \beta\wedge\dd\psi = - 4\, \beta\wedge\tau_1\wedge\psi = 0~.
\]
The result follows.

\end{proof}

We now define the following differential operator

\begin{definition}\label{def:checkdA}
The maps $\cd_{iA}, i=0,1,2$  are given by
\begin{align*}
\cd_{0A}&: \Lambda^0(Y,E)\rightarrow\Lambda^1(Y,E)~,\qquad  
\qquad 
\cd_{0A}f = \dd_A f~, \qquad\quad f\in\Lambda^0(Y,E)
~,\\
\cd_{1A}&: \Lambda^1(Y,E)\rightarrow\Lambda^2_7(Y,E)~,\qquad\qquad 
\cd_{1A}\alpha = \pi_7(\dd_A \alpha)
~, \quad \alpha\in\Lambda^1(Y,E)
~,\\
\cd_{2A} &: \Lambda^2(Y,E)\rightarrow\Lambda^3_1(Y,E)~,\qquad\qquad
\cd_{2A}\beta = \pi_1(\dd_A\beta)~, \quad \beta\in\Lambda_7^2(Y,E)~.
\end{align*}
where the $\pi_i$'s denote projections onto the corresponding subspace.
\end{definition}
It is easy to see that these operators are well-defined under gauge transformations. We then have:
\begin{theorem}
\label{th:dacheck}
Let $Y$ be a seven dimensional manifold with a $G_2$ structure. The complex
\begin{equation}
\label{eq:dolbV}
0\rightarrow\Lambda^0(Y,E)\xrightarrow{\cd_A}\Lambda^1(Y,E)\xrightarrow{\cd_A}\Lambda^2_7(Y,E)\xrightarrow{\cd_A}\Lambda^3_1(Y,E)\rightarrow0
\end{equation}
is a differential complex, i.e. $\cd_A^2=0$, if and only if the connection $A$ on $V$ is an instanton and the manifold has an integrable $G_2$ structure. We shall denote the complex \eqref{eq:dolbV} $\check\Lambda^*(Y,E)$, where $E$ is one of the bundles discussed above.
\end{theorem}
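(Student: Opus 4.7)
The plan is to mirror the proof of Theorem \ref{prop:dolbcomplex}, this time accounting for the curvature terms that appear because $\dd_A^2 \omega = F \cdot \omega$ for any $E$-valued form $\omega$, with $F$ acting via the representation of $E$ induced by $A$. Only two compositions in the complex need checking, namely $\cd_A^2 : \Lambda^0(Y,E) \to \Lambda^2_7(Y,E)$ and $\cd_A^2 : \Lambda^1(Y,E) \to \Lambda^3_1(Y,E)$, and the goal is to show that their vanishing is equivalent, respectively, to the instanton condition on $A$ and to integrability of the $G_2$ structure.

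For $f \in \Lambda^0(Y,E)$, the first composition gives
\[
\cd_A^2 f = \pi_7(\dd_A \dd_A f) = \pi_7(F \cdot f),
\]
which vanishes for all $f$ precisely when $\pi_7(F) = 0$, i.e.\ when $F \in \Lambda^2_{14}(Y, \End(V))$; this is the $G_2$ instanton condition $F \wedge \psi = 0$. For $\alpha \in \Lambda^1(Y,E)$, decomposing $\dd_A \alpha$ into its $\Lambda^2_7$ and $\Lambda^2_{14}$ parts yields
\[
\cd_A^2 \alpha = \pi_1(F \wedge \alpha) - \pi_1\bigl(\dd_A \pi_{14}(\dd_A \alpha)\bigr).
\]
Assuming the instanton condition extracted above, the first term vanishes, since $F \wedge \alpha \wedge \psi = \alpha \wedge (F \wedge \psi) = 0$ and the singlet component of a 3-form $\eta$ is detected by $\eta \wedge \psi$. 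For the second term, set $\gamma = \pi_{14}(\dd_A \alpha) \in \Lambda^2_{14}(Y,E)$, so that $\gamma \wedge \psi = 0$; differentiating this identity and using \eqref{eq:Intpsi},
\[
\dd_A \gamma \wedge \psi = -\gamma \wedge \dd\psi = -4\,\gamma \wedge \tau_1 \wedge \psi - \gamma \wedge *\tau_2 = -\gamma \wedge *\tau_2,
\]
the $\tau_1$ contribution dropping out because $\gamma \wedge \psi = 0$. Hence $\pi_1(\dd_A \gamma)$ is proportional to $\gamma \wedge *\tau_2$, and requiring it to vanish for every $\alpha$ forces $\tau_2 = 0$, that is, integrability of the $G_2$ structure.

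The main subtlety lies in the necessity direction: one has to check that the pointwise freedom in $f$ and in the $1$-jet of $\alpha$ is sufficient to probe every component of $F$ and of $\tau_2$. This reduces to a standard local argument: at any point the value of a section and the value of $\dd_A \alpha$ may be prescribed independently, so $\pi_{14}(\dd_A \alpha)$ realises an arbitrary element of $\Lambda^2_{14}(Y,E)$, whence the pairing $\gamma \wedge *\tau_2$ vanishing on all such $\gamma$ forces $\tau_2 = 0$; the same sort of freedom in $f$ kills $\pi_7(F)$. Combining both implications with the sufficiency direction (for which the computations above already show $\cd_A^2 = 0$ whenever $F$ is an instanton and $\tau_2 = 0$) establishes the stated equivalence.
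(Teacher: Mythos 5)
Your proposal is correct and follows essentially the same route as the paper's own proof: the zero-form composition isolates $\pi_7(F)$ and hence the instanton condition, while the one-form composition reduces, via $\pi_{14}(\dd_A\alpha)\wedge\psi=0$ and $\dd\psi=4\,\tau_1\wedge\psi+*\tau_2$, to the pairing $\pi_{14}(\dd_A\alpha)\wedge*\tau_2$ whose vanishing for all $\alpha$ forces $\tau_2=0$. Your extra remark on the pointwise surjectivity of $\alpha\mapsto\pi_{14}(\dd_A\alpha)$ is a small but welcome addition that the paper leaves implicit.
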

\begin{proof}
Let $f\in\Lambda^0(Y,E)$. Then
\begin{equation*}
\cd_A^2f=\pi_7(\dd_A^2f)=(\pi_7F)\, f~.
\end{equation*}
Hence
\[ \cd_A^2f=0~~\forall \, f\in\Lambda^0(Y, V)\quad{\rm iff}\quad  F\wedge\psi = 0~,\]
{\it i.e.}~the connection $A$ on the bundle $V$ is an instanton.
Now, consider $\alpha\in\Lambda^1(Y,E)$.  In this case
\[ \cd_A^2\alpha = \pi_1\big(\dd_A(\pi_7(\dd_A\alpha))\big)
= \pi_1\big(\dd_A( \dd_A\alpha - \pi_{14}(\dd_A\alpha))\big)
=  \pi_1\big(F\wedge\alpha - \dd_A(\pi_{14}(\dd_A\alpha))\big)
~,
\]
where we recall that we find the singlet representation of a three-form by contracting with $\varphi$, or wedging with $\psi$. Thus, the first term vanishes,  since $F$ is an instanton. Hence
\[  \cd_A^2\alpha = 0\quad{\rm iff}\quad \dd_A(\pi_{14}(\dd_A\alpha))\wedge\psi = 0~,
\]
for all $\alpha\in \Lambda^1(Y)$.  We have
\[ \dd_A(\pi_{14}(\dd_A\alpha))\wedge\psi 
= \dd_A(\pi_{14}(\dd_A\alpha)\wedge\psi) - (\pi_{14}(\dd\alpha))\wedge\dd\psi
= - (\pi_{14}(\dd_A\alpha))\wedge*\tau_2
~.\]
Therefore
\[ \cd^2\alpha = 0\quad{\rm iff}\quad (\pi_{14}(\dd_A\alpha))\wedge*\tau_2 = 0~,\]
for all $\alpha\in \Lambda^1(Y,E)$.
This holds true iff $\tau_2=0$.

\end{proof}
Note that  by a similar argument as given for the complex \eqref{eq:dolb} in appendix \ref{app:Elliptic}, it follows that the complex \eqref{eq:dolbV} is elliptic, as was also shown in \cite{carrion1998generalization}. As a consequence, the corresponding cohomology groups are of finite dimension, provided that $Y$ is compact.

Finally, we prove the following theorem, which generalises Theorem \ref{tm:ringY}: 
\begin{theorem}
\label{tm:ringA}
We have a ring structure on the cohomology $H_{\check \dd_{A}}^*(Y,\End(V))$,
\begin{equation*}
\pi_i[\:,\:]\::\: H_{\check \dd_{A}}^p(Y,\End(V)))\times H_{\check \dd_{A}}^q(Y,\End(V)))\rightarrow H_{\check \dd_{A}}^{p+q}(Y,\End(V))\:,
\end{equation*}
where $\pi_i$ denotes the appropriate projection.
\end{theorem}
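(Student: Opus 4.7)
The plan mirrors the template sketched for Theorem~\ref{tm:ringY}. Define the product on representatives by $[\alpha]\cdot[\beta] := [\pi_i[\alpha,\beta]]$, where $[\alpha,\beta] := \alpha\wedge\beta - (-1)^{pq}\beta\wedge\alpha$ is the graded commutator in the DGLA of $\End(V)$-valued forms (wedge product combined with composition in $\End(V)$), and $\pi_i$ denotes the projection onto the appropriate $G_2$-irreducible component of $\Lambda^{p+q}$ appearing in the complex $\check\Lambda^*(Y,\End(V))$: namely $\pi_7$ for $p+q=2$, $\pi_1$ for $p+q=3$, and the identity when no projection is needed. To show this descends to cohomology, one needs (i) that the product of two $\check\dd_A$-closed forms is $\check\dd_A$-closed, and (ii) that the product of a closed form with an exact form is exact.

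The two essential tools are the graded Leibniz rule
\begin{equation*}
\dd_A[\alpha,\beta] = [\dd_A\alpha,\beta] + (-1)^p [\alpha,\dd_A\beta]\,,
\end{equation*}
and Lemma~\ref{lem:dAontwo14}, which asserts that $\dd_A$ of a $\Lambda^2_{14}(Y,E)$-valued form has no $\Lambda^3_1$-component. For~(i), I take the paradigmatic case $\alpha,\beta \in \Lambda^1(Y,\End(V))$ with $\check\dd_A\alpha = \check\dd_A\beta = 0$, which forces $\dd_A\alpha,\dd_A\beta \in \Lambda^2_{14}$. Decompose $\pi_7[\alpha,\beta] = [\alpha,\beta] - \pi_{14}[\alpha,\beta]$ and apply $\check\dd_A = \pi_1\circ\dd_A$. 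The second term vanishes by Lemma~\ref{lem:dAontwo14}; Leibniz turns the first into $\pi_1\bigl([\dd_A\alpha,\beta] - [\alpha,\dd_A\beta]\bigr)$, and each summand vanishes because for $\gamma\in\Lambda^2_{14}$ and $\mu\in\Lambda^1$ one has $\gamma\wedge\mu\wedge\psi = \mu\wedge(\gamma\wedge\psi) = 0$. The cases $\Lambda^1\times\Lambda^2_7\to\Lambda^3_1$ and those involving a $\Lambda^0$-factor are handled analogously, often with less work since $\check\dd_A$ vanishes on $\Lambda^3_1$ automatically.

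For~(ii), suppose $\alpha = \check\dd_A f = \dd_A f$ with $f \in \Lambda^0(Y,\End(V))$. Leibniz rearranges $[\dd_A f,\beta]$ as $\dd_A[f,\beta] - [f,\dd_A\beta]$. Since the 0-form $f$ commutes with the $G_2$-projections, $[f,\beta]$ lies in the same irreducible representation as $\beta$, and hence $\pi_i\dd_A[f,\beta] = \check\dd_A[f,\beta]$, which is manifestly $\check\dd_A$-exact. In the second term, $\dd_A\beta$ lies in the $G_2$-restricted subspace dictated by $\check\dd_A\beta = 0$, and multiplication by $f$ preserves this restriction, so $\pi_i[f,\dd_A\beta] = 0$. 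The case where $\beta$ is the exact factor follows from graded symmetry of $[\cdot,\cdot]$.

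The main obstacle is not conceptual but combinatorial: each bidegree $(p,q)$ with $p+q\le 3$ requires its own verification, and one must correctly identify which components of $\dd_A\pi_i[\alpha,\beta]$ and of $[\dd_A\alpha,\beta]$, $[\alpha,\dd_A\beta]$ land in the target $G_2$-irreducible subspace. Via Lemma~\ref{lem:dAontwo14}, the identity $\beta\wedge\psi=0$ for $\beta\in\Lambda^2_{14}$, and integrability ($\tau_2=0$), each of these reduces to a routine representation-theoretic check, yielding the desired ring structure on $H^*_{\check\dd_A}(Y,\End(V))$.
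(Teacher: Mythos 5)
Your strategy is the same as the paper's: represent the product by the graded commutator, apply the Leibniz rule for $\dd_A$, detect the relevant projection by wedging with $\psi$, and invoke Lemma~\ref{lem:dAontwo14} to discard the $\Lambda^2_{14}$ contributions. Your closedness argument (i), and your exactness argument when the exact factor is a degree-one form $\dd_A f$ or $\dd_A\epsilon$, are correct and match the paper's computation essentially line by line.

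There is, however, one step that fails as written: the final sentence of (ii), claiming that ``the case where $\beta$ is the exact factor follows from graded symmetry of $[\cdot,\cdot]$.'' Graded symmetry disposes of the $(1,1)$ bidegree, but in the $(1,2)$ bidegree it merely reorders the bracket --- the exact factor is still the degree-two one, and a $\cd_A$-exact element of $\Lambda^2_7(Y,\End(V))$ is $\beta=\cd_A\gamma=\pi_7(\dd_A\gamma)$, which is \emph{not} of the form $\dd_A(\cdot)$: it differs from $\dd_A\gamma$ by a piece $\kappa\in\Lambda^2_{14}(Y,\End(V))$. Your argument in (ii) relies on the identity $\cd_A f=\dd_A f$, which holds only in degree zero where $\cd_A$ carries no projection, so it does not apply here. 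This is precisely the case the paper treats most carefully: write $\beta=\dd_A\gamma+\kappa$ with $\kappa\in\Lambda^2_{14}$, note that $\psi\wedge[\alpha,\kappa]=0$ because $\psi\wedge\kappa=0$, and then use the Leibniz rule together with $\psi\wedge\dd_A\alpha=0$ (closedness of $\alpha$) to obtain $\psi\wedge[\alpha,\beta]=-\psi\wedge\dd_A[\alpha,\gamma]$, i.e.\ $\pi_1[\alpha,\beta]=-\cd_A[\alpha,\gamma]$ is exact. The tools you already assembled suffice for this, but the step must actually be carried out; it does not follow from symmetry.
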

\begin{proof}
The cases $\{p=0,q=n\}$ for $n=\{0,1,2,3\}$ are easily proven. For the case $p=q=1$, note that if $\alpha_{1,2}\in\Lambda^1(Y,\End(V))$ are
are $\cd_A$-closed, then 
\begin{equation*}
\cd_A\pi_7([\alpha_1,\alpha_2])=0\:.
\end{equation*}
Indeed, we have
\begin{equation*}
\dd_A([\alpha_1,\alpha_2])=\dd_A\pi_7([\alpha_1,\alpha_2])+\dd_A\pi_{14}([\alpha_1,\alpha_2])\:.
\end{equation*}
Wedging this with $\psi$, using that $\alpha_{1,2}$ are $\cd_A$-closed, and applying Lemma \ref{lem:dAontwo14} on the last term after the last equality, the result follows. Note also that if e.g. $\alpha_2$ is trivial, that is $\alpha_2=\dd_A\epsilon_a$, we get
\begin{equation*}
[\alpha_1,\alpha_2]\wedge\psi=[\alpha_1,\dd_A\epsilon_a]\wedge\psi=-\dd_A([\alpha_1,\epsilon_a])\wedge\psi\:,
\end{equation*}
and so $\pi_7[\alpha_1,\alpha_2]=-\pi_7(\dd_A[\alpha_1,\epsilon_a])=-\cd_A[\alpha_1,\epsilon_a]$. We thus find a well-defined product on the level of one-forms. By symmetry of the product, the only case left to consider is $\{p=1,q=2\}$. We let $\alpha\in\Lambda^1(Y,\End(V))$ and $\beta\in\Lambda^2_{7}(Y,\End(V))$. Clearly
\begin{equation*}
\cd_A[\alpha,\beta]=0\:.
\end{equation*}
We only need to show that the product is well-defined. That is, let $\alpha=\cd_A\epsilon=\dd_A\epsilon$. We then have
\begin{equation*}
\pi_1[\alpha,\beta]=\pi_1[\dd_A\epsilon,\beta]=\cd_A[\epsilon,\beta]-\pi_1[\epsilon,\dd_A\beta]=\cd_A[\epsilon,\beta]\:,
\end{equation*}
as $\cd_A\beta=0$. Similarly, let $\beta=\cd_A\gamma=\pi_7\dd_A\gamma$ for $\gamma\in\Lambda^1(Y,\End(V))$. Then $\beta=\dd_A\gamma+\kappa$, where $\kappa\in\Lambda^2_{14}(Y,\End(V))$. We then have
\begin{equation*}
\psi\wedge[\alpha,\beta]=\psi\wedge[\alpha,\dd_A\gamma+\kappa]=\psi\wedge[\alpha,\dd_A\gamma]=-\psi\wedge\dd_A[\alpha,\gamma]\:,
\end{equation*}
where we have used that $\psi\wedge\dd_A\alpha=0$. Hence
\begin{equation*}
\pi_1[\alpha,\beta]=-\cd_A[\alpha,\gamma]\:.
\end{equation*}
It follows that the product is well defined. This concludes the proof.

\end{proof}
We will drop the projection $\pi_i$ from the bracket when this is clear from the context. As a corollary of Theorem \ref{tm:ringA} it is easy to see that the complex $\check\Lambda^*(Y,\End(V))$ forms a differentially graded Lie algebra. That is, there is a bracket
\begin{equation*}
[\cdot,\cdot]\::\;\;\;\check\Lambda^p(Y,\End(V))\otimes\check\Lambda^q(Y,\End(V))\;\;\;\rightarrow\;\;\;\check\Lambda^{p+q}(Y,\End(V))\:,
\end{equation*}
which is simply inherited from the Lie-bracket of $\End(V)$. As a result, this bracket also satisfies the Jacobi identity. Moreover, following similar arguments to that of the proof of Theorem \ref{tm:ringA}, it is easy to check that for $x\in\Lambda^p(Y,\End(V))$ and $y\in\Lambda^q(Y,\End(V))$ we have
\begin{equation}
\cd_A[x,y]=[\cd_Ax,y]+(-1)^p[x,\cd_Ay]\:.
\end{equation}
It follows that $\check\Lambda^*(Y,\End(V))$ forms a differentially graded Lie algebra. We will return to this in section \ref{sec:higherdef} when discussing higher order deformations of the bundle.

\subsubsection{Hodge Theory}

We now want to consider the Hodge-theory of the complex \eqref{eq:dolbV}. To do so, we need to define an adjoint operator of $\cd_A$. We have the usual inner product on forms on $Y$,
\begin{equation*}
(\alpha,\beta)=\int_Y\alpha\wedge*\beta
\end{equation*}
for $\{\alpha,\beta\}\in\Lambda^*(Y)$. Note that forms in different $G_2$ representations are orthogonal with respect to the inner product. We want to extend this to include an inner product on forms valued in $V$ and $\End(V)$. In the case of endomorphism bundles, we can make use 
of the trace
\begin{equation*}
(\alpha,\beta)=\int_Y\tr\:\alpha\wedge*\beta\:,
\end{equation*}
for $\{\alpha,\beta\}\in\Lambda^*(Y,\End(V))$. For a generic vector bundle $E$, we must specify a metric $G_{xy}\in\Lambda^0\left({\rm Sym}(E^*\otimes E^*)\right)$, in order to define the inner product
\begin{equation} \label{eq:inprod}
(\alpha,\beta)=\int_Y\alpha^x\wedge*\beta^y\,G_{xy}\:,
\end{equation}
for $\{\alpha^x,\beta^y\}\in\Lambda^*(Y,E)$.  As in the case of endomorphism bundles, we may choose a trivial metric $\delta_{xy}$, but other choices may be more natural. In order to simplify our analysis, we will keep the metric $G_{xy}$ arbitrary, but require it to be parallel to $\cd_A$: 
\begin{equation*}
\cd_AG_{xy}=\dd_AG_{xy}=0\:.
\end{equation*}
In the case of complex structures, this would be a Hermiticity condition that uniquely specifies the Chern-connection. For $G_2$ structures, things are a bit more subtle, and we will return to this discussion in the companion paper \cite{delaOssa}. Note however that when $E=TY$, we can use the canonical metric $g_{\varphi}$ in the inner product \eqref{eq:inprod}. In the case when $Y$ has $G_2$ holonomy, the connection on $TY$ will simply be the Levi-Civita connection, which is metric. 

Having specified an inner product on $E$, we would now like to construct the adjoint operators of $\cd_A$ and also use these to construct elliptic Laplacians. We have the following proposition
\begin{proposition}\label{prop:prop2}
With respect to the above inner-product, and with $G_{xy}$ is parallel to $\cd_A$, the adjoint of $\cd_A$ is given by
\begin{equation*}
\cd_A^\dagger=\pi\circ\dd_A^\dagger\:,\;\;\;\textrm{where}\;\;\;\dd_A^\dagger=-*\dd_A*\:,
\end{equation*}
Here $\pi$ denotes the appropriate projection for the degree of the forms involved.
\end{proposition}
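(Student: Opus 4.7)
The plan is to write $\cd_A = \pi \circ \dd_A$ explicitly at each degree of the complex \eqref{eq:dolbV}, and then exploit two independent facts: the $G_2$ representation decomposition of $\Lambda^k(Y,E)$ is orthogonal with respect to the inner product \eqref{eq:inprod}, so the projections $\pi_i$ are self-adjoint; and the standard integration-by-parts identity $(\dd_A\alpha,\beta) = (\alpha,\dd_A^\dagger\beta)$ with $\dd_A^\dagger = -*\dd_A*$ holds when the bundle metric $G_{xy}$ is parallel with respect to $\cd_A$ and $Y$ is compact (or suitable boundary conditions are assumed).

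First I would verify self-adjointness of the projection. Since different $G_2$ irreducible subspaces $\Lambda_i^k(Y) \subset \Lambda^k(Y)$ are pointwise orthogonal (the $G_2$ decomposition is orthogonal with respect to $g_\varphi$), tensoring by $E$ and integrating with $G_{xy}$ preserves this orthogonality in the $L^2$ pairing. Hence each $\pi_i$ is an orthogonal projection, and $(\pi_i\omega,\eta) = (\omega,\pi_i\eta)$ for any $E$-valued forms $\omega,\eta$ of the appropriate degree. In particular, if $\eta$ already lies in $\Lambda_i^k(Y,E)$, then $(\pi_i\omega,\eta)=(\omega,\eta)$, and symmetrically for $\omega$.

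Next I would establish the formula for $\dd_A^\dagger$. Writing out $\dd\bigl(\alpha^x \wedge *\beta^y\,G_{xy}\bigr)$ and applying the Leibniz rule, the assumption $\dd_A G_{xy} = \dd G_{xy} + A_x{}^z G_{zy} + A_y{}^z G_{xz} = 0$ is precisely what is needed for the connection terms in $\dd_A\alpha^x$ and $\dd_A(*\beta^y)$ to combine into a total derivative of an $E$-invariant scalar form. Compactness of $Y$ and Stokes then yields $(\dd_A\alpha,\beta) = (\alpha,\dd_A^\dagger\beta)$ with $\dd_A^\dagger = -*\dd_A*$ (degree-dependent signs absorbed into the compact notation used in the text, consistent with the formula for $\dd_A^\dagger$ displayed before \eqref{eq:YM}).

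Putting these pieces together, for any consecutive pair in \eqref{eq:dolbV} with $\alpha$ in the source and $\beta$ in the target of $\cd_A$, I compute
\begin{equation*}
(\cd_A\alpha,\beta) \;=\; (\pi\,\dd_A\alpha,\beta) \;=\; (\dd_A\alpha,\pi\beta) \;=\; (\dd_A\alpha,\beta) \;=\; (\alpha,\dd_A^\dagger\beta) \;=\; (\alpha,\pi\,\dd_A^\dagger\beta),
\end{equation*}
where in the third step I use $\pi\beta=\beta$ since $\beta$ already sits in the image of $\pi$, and in the last step I insert $\pi$ since $\alpha$ itself lies in that image. Reading off the adjoint gives $\cd_A^\dagger = \pi \circ \dd_A^\dagger$ as claimed. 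The only substantive case is $\cd_{2A}^\dagger:\Lambda^3_1(Y,E)\to\Lambda^2_7(Y,E)$, where $\dd_A^\dagger$ lands in $\Lambda^2 = \Lambda^2_7 \oplus \Lambda^2_{14}$ and the projection genuinely removes the $\mathbf{14}$ component; for $\cd_{0A}^\dagger$ and $\cd_{1A}^\dagger$ the target space is not further decomposed and $\pi$ acts as the identity. The main obstacle is bookkeeping: verifying that the parallelism $\cd_A G_{xy} = 0$ is strong enough to reproduce the ordinary formula $\dd_A^\dagger = -*\dd_A*$ on $E$-valued forms, since strictly speaking $\cd_A$ contains less data than $\dd_A$; however, the displayed equation $\cd_A G_{xy} = \dd_A G_{xy} = 0$ in the statement encodes the full parallelism needed for integration by parts to work.
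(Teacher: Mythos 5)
Your proposal is correct and follows essentially the same route as the paper: both arguments rest on the orthogonality of the $G_2$ representation decomposition (so the projections can be moved across the inner product and dropped when the other argument already lies in the relevant subspace) combined with the standard adjoint identity $(\dd_A\alpha,\beta)=(\alpha,\dd_A^\dagger\beta)$, which is where the parallelism of $G_{xy}$ enters. The only difference is that you spell out the integration-by-parts step and the degree-by-degree bookkeeping that the paper compresses into ``the cases for forms of other degrees are similar.''
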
 
\begin{proof}
Consider $\alpha\in\Lambda_7^2(Y,E)$ and $\gamma\in\Lambda_1^3(Y,E)$. Using definition \ref{def:checkdA}, the inner product \eqref{eq:inprod}, and the orthogonality of forms in different $G_2$ representations, we then compute
\begin{equation*}
(\alpha,\cd_A^\dagger\gamma)=(\alpha,\pi_7\circ\dd_A^\dagger\gamma)=(\alpha,\dd_A^\dagger\gamma)=(\dd_A\alpha,\gamma)=(\cd_A\alpha,\gamma)\:.
\end{equation*}
The cases for forms of other degrees are similar.
\end{proof}

Using a parallel metric $G_{xy}$, we can then construct the Laplacian
\begin{equation*}
\check\Delta_A=\cd_A\cd_A^\dagger+\cd_A^\dagger\cd_A\:.
\end{equation*}
With this Laplacian, we now prove a Hodge-theorem of the following form
\begin{theorem}
\label{tm:hodge}
The forms in the differential complex \eqref{eq:dolbV} have an orthogonal decomposition
\begin{equation*}
\check\Lambda^*(Y,E)={\rm Im}(\cd_A)\oplus{\rm Im}(\cd_A^\dagger)\oplus{\rm ker}(\check\Delta_A)\:.
\end{equation*}
\end{theorem}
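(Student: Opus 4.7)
The plan is to apply the standard Hodge-theoretic template for elliptic complexes on compact manifolds, adapted to the $G_2$-truncated complex $\check\Lambda^*(Y,E)$. First I would observe that the three subspaces on the right-hand side are pairwise orthogonal with respect to the inner product \eqref{eq:inprod}. Orthogonality of $\mathrm{Im}(\cd_A)$ and $\mathrm{Im}(\cd_A^\dagger)$ follows from $\cd_A^2=0$ (Theorem \ref{th:dacheck}) together with the adjoint property established in Proposition \ref{prop:prop2}: $(\cd_A\alpha,\cd_A^\dagger\beta)=(\cd_A^2\alpha,\beta)=0$. Orthogonality of $\ker(\check\Delta_A)$ to both images follows from the identity $(\check\Delta_A\eta,\eta)=\|\cd_A\eta\|^2+\|\cd_A^\dagger\eta\|^2$, which on a compact $Y$ implies $\ker(\check\Delta_A)=\ker(\cd_A)\cap\ker(\cd_A^\dagger)$; hence any harmonic $\eta$ is orthogonal to both $\cd_A\alpha$ and $\cd_A^\dagger\beta$.

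Next I would invoke ellipticity. By the argument sketched in appendix \ref{app:Elliptic} (and in \cite{carrion1998generalization}), the complex $\check\Lambda^*(Y,E)$ is elliptic, so the associated Laplacian $\check\Delta_A=\cd_A\cd_A^\dagger+\cd_A^\dagger\cd_A$ is a self-adjoint elliptic second-order operator on each of the representation-restricted spaces $\Lambda^0(Y,E)$, $\Lambda^1(Y,E)$, $\Lambda^2_7(Y,E)$, $\Lambda^3_1(Y,E)$. The standard theory of elliptic self-adjoint operators on compact manifolds (in the style of Wells or Warner) then supplies an orthogonal decomposition
\begin{equation*}
\check\Lambda^k(Y,E)=\ker(\check\Delta_A)\oplus\mathrm{Im}(\check\Delta_A)~,
\end{equation*}
with finite-dimensional kernel and closed range.

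To conclude, I would rewrite $\mathrm{Im}(\check\Delta_A)=\mathrm{Im}(\cd_A\cd_A^\dagger+\cd_A^\dagger\cd_A)\subseteq \mathrm{Im}(\cd_A)+\mathrm{Im}(\cd_A^\dagger)$, and note the reverse containment modulo harmonic forms: given any $\omega=\cd_A\alpha+\cd_A^\dagger\beta$, one can use the Green's operator $G$ inverting $\check\Delta_A$ on $(\ker\check\Delta_A)^\perp$ to write $\cd_A\alpha=\cd_A\cd_A^\dagger(G\cd_A\alpha)$ and similarly for the other term, exhibiting $\omega$ as a $\check\Delta_A$-image. Combining with the pairwise orthogonality established above gives the claimed triple decomposition.

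The main subtlety I expect is ensuring that the Hodge-theoretic machinery respects the $G_2$-representation truncation: $\cd_{2A}$ maps $\Lambda^2_7$ only into the singlet $\Lambda^3_1$, and so the adjoint $\cd_A^\dagger$ must be shown to preserve these subspaces (which is immediate from the projection formula of Proposition \ref{prop:prop2}, since projections commute with the orthogonal decomposition under the metric). A secondary technical point is that parallelism $\cd_A G_{xy}=0$ is needed to ensure $\cd_A^\dagger$ is a genuine formal adjoint, so that the integration-by-parts identity leading to $\ker(\check\Delta_A)=\ker(\cd_A)\cap\ker(\cd_A^\dagger)$ is valid; this hypothesis on $G_{xy}$ is built into the setup preceding Proposition \ref{prop:prop2} and must be invoked explicitly.
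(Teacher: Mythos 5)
Your proof is correct and follows essentially the same route as the paper's: both rest on the self-adjointness and ellipticity of $\check\Delta_A$ to obtain $\check\Lambda^*(Y,E)={\rm ker}(\check\Delta_A)\oplus{\rm Im}(\check\Delta_A)$, establish pairwise orthogonality of the three summands, and then identify ${\rm Im}(\check\Delta_A)$ with ${\rm Im}(\cd_A)\oplus{\rm Im}(\cd_A^\dagger)$. Your use of the Green's operator to show ${\rm Im}(\cd_A)+{\rm Im}(\cd_A^\dagger)\subseteq{\rm Im}(\check\Delta_A)$ is in fact a cleaner justification of the step the paper disposes of with the phrase ``by construction of $\check\Delta_A$,'' but it is the same argument in substance.
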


\begin{proof}

Note first that as $\check\Delta_A$ is self-adjoint, the orthogonal complement of ${\rm Im}(\check\Delta_A)$ is its kernel. Hence
\begin{equation*}
\check\Lambda^*(Y,E)={\rm Im}(\check\Delta_A)\oplus{\rm ker}(\check\Delta_A)
\end{equation*}
Moreover, it is easy to see that ${\rm Im}(\cd_A)$ and ${\rm Im}(\cd_A^\dagger)$ are orthogonal vector spaces, hence contained in ${\rm Im}(\check\Delta_A)$, and that they are both orthogonal to ${\rm ker}(\check\Delta_A)$. Indeed, consider e.g.
\begin{equation*}
\cd_A\beta=\alpha+\gamma\:,
\end{equation*}
where $\alpha\in{\rm Im}(\cd_A)$ and $\gamma\in{\rm ker}(\cd_A)$. It follows that
\begin{equation*}
(\gamma,\gamma)=(\gamma,\cd_A\beta-\alpha)=0\:,
\end{equation*}
and so $\gamma=0$. Similarly, one can show that ${\rm Im}(\cd_A^\dagger)\subseteq{\rm Im}(\check\Delta_A)$. We can then write a generic $\check\Delta_A\, \rho\in{\rm Im}(\check\Delta_A)$ as
\begin{equation*}
\check\Delta_A\, \rho=\cd_A\beta+\cd_A^\dagger\gamma+\kappa\:,
\end{equation*}
where $\kappa\in{\rm Im}(\check\Delta_A)$ is orthogonal to ${\rm Im}(\cd_A)$ and ${\rm Im}(\cd_A^\dagger)$. However, as ${\rm Im}(\check\Delta_A)$ is made up of sums of $\cd_A$-exact and $\cd_A^\dagger$-exact forms by construction of $\check\Delta_A$, it follows that $\kappa=0$. This concludes the proof.
\end{proof}

The Laplacian $\check\Delta_A$ is elliptic by construction (see Lemma \ref{lem:laplace} in appendix \ref{app:Elliptic}), and hence for compact $Y$ has a finite dimensional kernel. We refer to the kernel of $\check\Delta_A$ as harmonic forms and write
\begin{equation*}
\textrm{ker}\left(\check\Delta_A\right)=\mathcal{\check{H}}^*(Y,E)\:.
\end{equation*}
Moreover, it is easy to prove that $\mathcal{\check{H}}^*(Y,E)$ are in one to one correspondence with the cohomology classes of $H_{\check \dd_{A}}^*(Y,E)$ as usual. Indeed if $\alpha_1$ and $\alpha_2$ are harmonic representatives for the same cohomology class, then
\begin{equation*}
\alpha_1-\alpha_2=\cd_A\beta\:,
\end{equation*}
for some $\beta$. Applying $\cd_A^\dagger$ to this equation gives 
\begin{equation*}
\cd_A^\dagger\cd_A\beta=0\:,
\end{equation*}
which implies $\cd_A\beta=0$. Hence there is at most one harmonic representative per cohomology class. Moreover, if the class is to be non-trivial, by the Hodge-decomposition there must be at least one harmonic representative as well. Also, recall that by ellipticity of the complex, the cohomology groups $H_{\check \dd_{A}}^*(Y,E)$ are finite dimensional for compact $Y$.

\section{Infinitesimal moduli space of $G_2$ manifolds}
\label{sec:geommod}

We now discuss variations of $Y$ preserving the $G_2$ holonomy condition, a subject that has been discussed from different perspectives before. Firstly, Joyce has shown that, for compact $G_2$ manifolds, the infinitesimal moduli space maps to the space of harmonic three-forms, and thus has dimension $b^3$ \cite{joyce1996:1,joyce1996:2}.  Secondly, it has been shown by Dai {\it et.~al.} that this moduli space maps to the first $\cd$-cohomology group \cite{Dai2003}. This second result has also been found using a string theory analysis by de Boer {\it et.~al.} \cite{deBoer:2005pt}. In this section, we reproduce these results, using both the form and spinor description of the $G_2$ structure. 

Let $Y$ be a compact manifold with $G_2$ holonomy.  In this case
the three-form $\varphi$ is a harmonic three-form. Consider a one parameter family $Y_t$ of manifolds with a $G_2$ structure given by the associative three-form $\varphi_t$ with
$Y_0 = Y$ and $\varphi_0 =\varphi$. Below, we analyse the variations that preserve $G_2$ holonomy. For ease of presentation we relegate some of the details of the computation to \cite{delaOssa}, where variations of integrable $G_2$ structures will be discussed.

\subsection{Form perspective}\label{subsec:form}

Let us start by discussing the variation of $\psi$. This can be decomposed into $G_2$ representations as
\be
 \partial_t \psi = c_t\, \psi + \alpha_t\wedge\varphi + \gamma_t~,
  \label{eq:psihol}
 \ee
where $c_t$ is a function, $\alpha_t$ is a one-form, and $\gamma_t\in\Lambda^4_{27}$.  Equivalently, we may write the variation of $\psi$ (or any four form)  in terms of a one form $M_t$ with values in $TY$:
\begin{equation}
\partial_t\psi = \frac{1}{3!}\, M_t^a\wedge\psi_{bcda}\, \dd x^{bcd}
~,\qquad
M_t^a = M_{t\, b}{}^a\, \dd x^b~.\label{eq:varpsiM}
\end{equation}
We  can think of $M_t$ as a matrix, where its trace corresponds to forms in $\Lambda_1^4$ ({\it i.e.} $c_t$), its antisymmetric part ($\beta_{t\,ab}$) to $\Lambda_7^4$, and its  traceless symmetric part ($h_{t\,ab}$) to $\Lambda_{27}^4$. In particular, 
\begin{align}
c_t =  \frac{1}{7}\, \psi\lrcorner\partial_t\psi &= - \frac{4}{7}\,{\rm tr} M_t~,
\label{eq:ct}\\[5pt]
 \Delta_{t\, b}{}^a &= M_{t\, b}{}^a - \frac{1}{7}\, ({\rm tr} M_t)\, \delta_a{}^b~,
 \label{eq:Deltat}\\[3pt]
 \gamma_t &= \frac{1}{3!}\, h_t{}^a\wedge\psi_{bcda}\, \dd x^{bcd}\in \Lambda^4_{27}~,
 \qquad h_{t\, ab} = \Delta_{t\, (ab)}~,
 \label{eq:gammat}\\[3pt]
 \alpha_t &= \beta_t\lrcorner\varphi~, \qquad\beta_t = \frac{1}{2}\, \Delta_{t\, [ab]}\, \dd x^{ab}\in \Lambda^2_7(Y)~. 
\label{eq:alphat}
 \end{align}

The deformation of  $\varphi$ can be decomposed in an analogous manner. Moreover, using that $\psi = * \varphi$ one finds relations between the two variations, that give
\be
\partial_t \varphi = \hat c_t\, \varphi - \alpha_t\lrcorner\psi - \chi_t = - \frac{1}{2}\, M_t{}^a\wedge \varphi_{bca}\, \dd x^{bc}~,
 \label{eq:varphihol}
\ee
where $\hat c_t = 3\, c_t/4$ and $\gamma_t = *\chi_t$. Finally, using \eqref{eq:g2metric}, we may compute the variation of the $G_2$ metric:
\begin{equation}
\partial_t\, g_{\varphi\, ab} =  \frac{c_t}{2}\, g_{\varphi\, ab} - 2\,  h_{t\, ab}~,
\label{eq:varmetric}
\end{equation}
Note that the variation of the metric is only sensitive to the symmetric part of $\Delta^a$.

We now turn  to trivial deformations which correspond to diffeomorphisms. Again, we focus on $\psi$ (using the results above, we can compute the trivial variations of $\varphi$):
\be
{\cal L}_V \psi = \dd(v\lrcorner\psi) + v\lrcorner(\dd\psi)
= c_{triv}\, \psi+ \alpha_{triv}\wedge\varphi + \gamma_{triv}~,
\label{eq:liederpsi}
\ee
where ${\cal L}_V$ denotes a Lie derivative along vectors $V\in TY$, $v\in T^*Y$ is the one-form dual to $V$ using the metric, and we have included  the decomposition of the Lie derivatives in representations of $G_2$. The second term can be rewritten in terms of a two-form $\beta_{triv}\in \Lambda^2_7$ which is related to the one form $\alpha_{triv}$ by
\[ \beta_{triv} = \frac{1}{3}\, \alpha_{triv}\lrcorner\varphi~.\] We then have
\begin{theorem}\label{tm:diffeoG2}
On a $G_2$ manifold $Y$, deformations of the co-associative form $\psi$ due to diffeomorphisms of $Y$ are given by
\begin{equation}
{\cal L}_V\psi = - \frac{1}{3!}\, (\dth V^a)\wedge\psi_{bcda}\,\dd x^{bcd}~,\qquad V\in TY
\label{eq:trivvarpsi}
\end{equation}
where
\begin{equation}
 \dth V^a = \dd V^a + \theta_b{}^a\, V^b~,\qquad
 \theta_b{}^a = \Gamma_{bc}{}^a\, \dd x^c~,
 \label{eq:Dtheta}
 \end{equation}
is  a connection on $TY$, and $\Gamma_{bc}{}^a$ are the connection symbols of the Levi--Civita connection $\nabla$ compatible with the  $G_2$ structure on $Y$ determined by $\varphi$.  In fact, this is the connection $\dth$ defined in section \ref{subsec:thetaconnection}.

The correspondence with
\[ {\cal L}_V\psi = c_{triv}\, \psi + \alpha_{triv}\wedge\varphi + \gamma_{triv}~,\]
is given by
\begin{align}
c_{triv} &= \frac{4}{7}\, \nabla_a\, V^a 
= - \frac{4}{7}\, \dd^\dagger v~, 
\label{eq:ctriv}\\[3pt]
\beta_{triv} &= - \check \dd v ~,
\label{eq:betatriv}\\[3pt]
(h_{triv})_{ab} &= - \Big(\nabla_{(a} v_{b)} + \frac{1}{7}\, g_{\varphi\, ab}\, \dd^\dagger v\Big)~,
\label{eq:htriv}
\end{align} 
\end{theorem}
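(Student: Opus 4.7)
The proof plan is to start from Cartan's magic formula, ${\cal L}_V\psi = \dd(v\lrcorner\psi) + v\lrcorner \dd\psi$, which on a $G_2$ holonomy manifold reduces to $\dd(v\lrcorner\psi)$ since $\dd\psi=0$. The key geometric input is then that $\nabla\psi=0$ for the Levi-Civita connection $\nabla$, which by \eqref{eq:dthetahol} coincides with the connection $\dth$ appearing in the statement.

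In a first step I would unpack $\dd(v\lrcorner\psi)$ in coordinates. Writing $(v\lrcorner\psi)_{bcd}=V^a\psi_{abcd}$ and using that for forms partial derivatives may be traded for covariant ones (the Christoffel terms antisymmetrise to zero), the vanishing of $\nabla\psi$ gives
\[
\big(\dd(v\lrcorner\psi)\big)_{ebcd}=4\,(\nabla_{[e}V^a)\,\psi_{|a|bcd]}~.
\]
On the other side, expanding $-\frac{1}{3!}(\dth V^a)\wedge\psi_{bcda}\,\dd x^{bcd}$ using $\dth V^a=\nabla_e V^a\,\dd x^e$ and the antisymmetry identity $\psi_{bcda}=-\psi_{abcd}$ (three transpositions to move the $a$-index to the front) yields the same four-form, establishing \eqref{eq:trivvarpsi}.

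Next, I would read off the $G_2$ decomposition by identifying $M^a_{triv}=-\dth V^a$ in \eqref{eq:varpsiM} and feeding this into the general formulas \eqref{eq:ct}--\eqref{eq:alphat}. Lowering indices gives $M_{triv,ab}=-\nabla_b v_a$. Its trace is $\tr M_{triv}=-\nabla_a V^a=\dd^\dagger v$, so \eqref{eq:ct} immediately yields $c_{triv}=-\tfrac{4}{7}\dd^\dagger v=\tfrac{4}{7}\nabla_aV^a$, matching \eqref{eq:ctriv}. For the symmetric traceless part one finds $\Delta_{triv,(ab)}=-\nabla_{(a}v_{b)}+\tfrac{1}{7}\nabla_c V^c\,g_{\varphi\,ab}=-\bigl(\nabla_{(a}v_{b)}+\tfrac{1}{7}\dd^\dagger v\,g_{\varphi\,ab}\bigr)$, giving \eqref{eq:htriv}. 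For the antisymmetric part, torsion-freeness of $\nabla$ gives $\Delta_{triv,[ab]}=\nabla_{[a}v_{b]}=\tfrac{1}{2}(\dd v)_{ab}$, and projecting onto $\Lambda^2_7$ to retain only the component that survives in $\partial_t\psi$ produces $\beta_{triv}=-\cd v$.

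The main obstacle will be the careful bookkeeping of signs and combinatorial factors, especially in the $\beta_{triv}$ computation: the formula $\beta_t=\tfrac{1}{2}\Delta_{t,[ab]}\dd x^{ab}$ must be understood modulo $\Lambda^2_{14}$, since that component lies in the kernel of the map $M\mapsto\partial_t\psi$ and only the $\Lambda^2_7$ projection is seen by $\psi$. Once this projection and the associated normalisations are fixed (a detailed calculation deferred to the companion paper \cite{delaOssa}), the remaining steps reduce to routine applications of $G_2$ structure identities and the defining properties of the Levi-Civita connection.
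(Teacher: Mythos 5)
Your proposal is correct and follows essentially the same route as the paper, which states only that the theorem is ``proven by direct computation of the Lie derivatives'' and relegates all details to the companion paper \cite{delaOssa}: you carry out exactly that computation, using ${\cal L}_V\psi=\dd(v\lrcorner\psi)$ together with $\nabla\psi=0$ to obtain \eqref{eq:trivvarpsi}, and then reading off $c_{triv}$, $\beta_{triv}$ and $h_{triv}$ from the identification $M^a_{triv}=-\dth V^a$ via \eqref{eq:ct}--\eqref{eq:alphat}. The normalisation subtlety you flag for $\beta_{triv}$ (the $\Lambda^2_{14}$ ambiguity and the overall factor) is genuine, but it resides in the paper's own conventions in \eqref{eq:alphat} rather than in any gap in your argument.
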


\begin{proof} This is proven by direct computation of the Lie derivatives. We relegate this proof to \cite{delaOssa}, where variations of integrable $G_2$ structures will be discussed.
\end{proof}
Note that if $Y$ is compact, by the Hodge decomposition of the function $c_t$ appearing in equation \eqref{eq:psihol}, equation \eqref{eq:ctriv} means that one can take $c_t$ to be a constant.  
Moreover, \eqref{eq:betatriv}, uses the $\cd$ differential operator defined in subsection \ref{subsec:checkD}.  
By the $\cd$-Hodge decomposition, we can write $\beta_t$ as
\[ \beta_t = \cd B_t + \cd^\dagger\lambda _t + \beta_t^{har}~,\]
for some  one form $B_t$, three form $\lambda_t$, and $\cd$-harmonic two form $\beta_t^{har}$. 
This means we can choose $\beta_t$ to be $\cd$-coclosed, which 
implies that $\alpha_t$ may be taken to be $\cd$-closed:
\begin{equation}
\cd\alpha = 0~.\label{eq:alphatrivG2}
\end{equation}
By the $\cd$-Hodge decomposition we can write $\alpha_t$ as
\[ \alpha_t = \cd A_t + \alpha_t^{har} = \dd A + \alpha_t^{har}~,\]
for some function $A_t$, and $\cd$-harmonic one form $\alpha_t^{har}$. Note however that there are no $\cd$-harmonic one forms on a compact manifold with $G_2$ holonomy \cite{fernandez1998dolbeault}, therefore $\alpha_t$ can be chosen to be $\dd$-exact
\[ \alpha_t = \dd A_t~.\]

We now require that the variations preserve the $G_2$ holonomy, that is, 
\begin{equation}
 \dd\partial_t \psi = 0~,\qquad \dd\partial_t \varphi = 0~.\label{eq:G2holintegral}
 \end{equation}
The first equation, together with equation \eqref{eq:psihol} gives
\[ \dd\gamma_t = 0 \iff \dd^\dagger\chi_t = 0~.\]
The second, together with \eqref{eq:varphihol}, gives
\[ \dd(\chi_t + \alpha_t\lrcorner\psi) = 0~.\]
However
\[\alpha_t\lrcorner\psi = (\dd A)\lrcorner\psi = - *((\dd A)\wedge\varphi) = -*\dd (A\,\varphi) = 
- \dd^\dagger(A\, \psi)~,\]
which implies
\[ \dd(\chi_t - \dd^\dagger(A\, \psi)) = 0~.
\]
We conclude then that the three form
\[  \chi_t +\alpha_t\lrcorner\psi  = \chi_t - \dd^\dagger(A\, \psi) ~,\]
is harmonic, and therefore the infinitesimal moduli space of manifolds with $G_2$ holonomy has dimension $b_3$, including the scale factor $c_t$.

We would like to compare this result with Joyce's proof \cite{joyce1996:1,joyce1996:2} that the dimension of the infinitesimal moduli space of manifolds with $G_2$ holonomy has dimension $b_3$.  Without entering into the details of the proof, Joyce finds the dimension of the moduli space by imposing conditions \eqref{eq:G2holintegral} together with
\begin{equation}
\pi_7(\dd^\dagger\partial_t\varphi) = 0~.\label{eq:Joyce}
\end{equation}
This constraint comes from requiring that the variations $\partial_t\varphi$ are orthogonal to the trivial deformations given by ${\cal L}_V\varphi$
\[(\partial_t\varphi, {\cal L}_V\varphi) = 0~,\qquad \forall~ V\in\Gamma(TY).\]
 In fact,
\[ (\partial_t\varphi, {\cal L}_V\varphi) = (\partial_t\varphi, \dd(v\lrcorner\varphi)) =
(\dd^\dagger(\partial_t\varphi), v\lrcorner\varphi)~,\]
which vanishes for all $V\in\Gamma(TY)$ if and only if \eqref{eq:Joyce} is satisfied, or equivalently, when
$\dd^\dagger(\partial_t\varphi)\in \Lambda^2_{14}$.  Now,
\[ \dd^\dagger(\partial_t\varphi) = - \dd^\dagger(\chi_t + \alpha_t\lrcorner\psi) = - \dd^\dagger(\alpha_t\lrcorner\psi)~,\]
as $\chi_t$ is co-closed. Taking the Hodge-dual of the constraint \eqref{eq:Joyce} we find
\begin{align*}
0&= * \big(\dd^\dagger(\partial_t\varphi)\wedge\psi\big)
=  *\big(\psi\wedge*\dd *(\alpha_t\lrcorner\psi)\big) 
= - *\big(\psi\wedge*\dd (\alpha_t\wedge\varphi)\big)
\\
&= - \psi\lrcorner(\dd\alpha_t\wedge\varphi) = \dd\alpha_t\lrcorner\varphi = \check \dd \alpha_t,
\end{align*}
which is the same as \eqref{eq:alphatrivG2}.

Finally, we would like to discuss the map between $\Delta$ and $\tilde\gamma$, in particular we would like to describe the moduli space of compact manifolds with $G_2$ holonomy in terms of $\Delta$. We begin with the moduli equations which for this case are
\begin{align}
\dth\Delta_t^a\wedge\psi_{bcda}\, \dd x^{bcd} &= 0~, \label{eq:g2holone} \\
\dth\Delta_t^a\wedge\varphi_{bca}\, \dd x^{bc} &= 0~.\label{eq:g2holtwo}
\end{align}
The second  equation is equivalent to 
\begin{align}
((\cdth\Delta_t{}^a)\lrcorner\varphi)_a &= 0
~,\label{eq:g2holtau0}
\\[5pt]  
(\pi_{14}(\dth \Delta_t^a))_{ba} &= 0~,
 \label{eq:g2holtau1}
 \\[5pt]
 (\dth\Delta^c)_{d(a}\varphi_{b)c}{}^d - g_{\varphi\, c(a}\, (\cdth\Delta^c\lrcorner\varphi)_{b)}
&=0~,
\label{eq:g2holtau3}
\end{align}
Note that 
equation \eqref{eq:g2holtau0} is just the trace of equation \eqref{eq:g2holtau3}.
Equation \eqref{eq:g2holone} can be better understood by contracting with $\varphi$ (the contraction with $\psi$ just gives back equation \eqref{eq:g2holtau1}). We find
\begin{equation}
2\, (\cdth\Delta_t{}^a)_{d[b}\, \varphi_{c]a}{}^d 
= - (\cdth\Delta_t{}^a)_{ad}\, \varphi_{bc}{}^d~. 
\label{eq:g2holmod}
\end{equation}

Then, applying equation \eqref{eq:id2rep7} to $\cdth\Delta^e$,
and contracting indices,  we find 
\[ (\cdth\Delta^a)_{da}\, \varphi_{bc}{}^d = 
(\cdth\Delta^a)_{d[b}\, \varphi_{c]a}{}^d  + g_{\varphi\, a[b}\, ((\cdth\Delta^a)\lrcorner\varphi)_{c]}~.\]
With this identity  at hand, we can write the equation for moduli \eqref{eq:g2holmod} as
\begin{equation} 
(\cdth\Delta_t{}^a)_{d[b}\, \varphi_{c]a}{}^d 
= g_{\varphi\, a[b}\, (\cdth\Delta^a\lrcorner\varphi)_{c]}~. 
\label{eq:g2holmod2}
\end{equation}
Adding up this equation and equation \eqref{eq:g2holtau3} we find
\begin{equation}
(\cdth\Delta_t{}^a)_{db}\, \varphi_{ca}{}^d 
+ \big(\pi_{14}(\dth\Delta_t{}^a)\big)_{d(b}\, \varphi_{c)a}{}^d 
= g_{\varphi\, ab}\, (\cdth\Delta^a\lrcorner\varphi)_c~.
\label{eq:g2holmod3}
\end{equation}
Using identity \eqref{eq:beta14id} in the second term
\begin{align*}
 \big(\pi_{14}(\dth\Delta_t{}^a)\big)_{d(b}\, \varphi_{c)a}{}^d 
&= \frac{1}{2}\, \Big(
2\, \pi_{14}\big((\dth\Delta_t{}^a)\big)_{db}\, \varphi_{ca}{}^d
-\pi_{14}\big((\dth\Delta_t{}^a)\big)_{da}\, \varphi_{cb}{}^d\Big)
\\[5pt]
&= \pi_{14}\big((\dth\Delta_t{}^a)\big)_{db}\, \varphi_{ca}{}^d~,
\end{align*}
where we have used equation \eqref{eq:g2holtau1}.  Hence equation \eqref{eq:g2holmod3}
becomes
\begin{equation}
(\dth\Delta_t{}^a)_{db}\, \varphi_{ca}{}^d 
= g_{\varphi\, ab}\, (\cdth\Delta^a\lrcorner\varphi)_c~.
\label{eq:g2holmod4}
\end{equation}

The derivative $\dth$ acts on $\Delta_t^a$ as the Levi-Civita connection when $Y$ has $G_2$ holonomy
\[ \dth \Delta_t^a =  \dd \Delta_t^a + \theta_b{}^a\wedge \Delta_t^b = \nabla_b\, \Delta_{t\,c}{}^a\,\dd x^{bc}~,\]
where $\nabla$ is the Levi-Civita connetion. 
Then
\[ \dth \Delta_t^a\lrcorner\varphi = 
\varphi^{bc}{}_d\, \nabla_b\Delta_{t\, c}{}^a\, \dd x^d~,\]
and equation \eqref{eq:g2holmod4} is equivalent to
\begin{equation}
\nabla_c\, h_{t\, da} \, \varphi^{cd}{}_b 
= \nabla_a(\beta_t\lrcorner\varphi)_b~.
\label{eq:g2holmod5}
\end{equation}
Taking the trace and using \eqref{eq:onephiphi} we find that
\[ 0 = \dd^\dagger(\beta_t\lrcorner\varphi) = \dd^\dagger \alpha~.\]
However, recall that by using diffeomorphisms we may choose $\alpha_t$ to be closed. It then follows that $\alpha_t$ is an harmonic one-form, and then has to vanish on compact manifolds with $G_2$ holonomy. We conclude that $\alpha_t$ and hence $\beta_t$ vanish, and so \eqref{eq:g2holmod5} implies that
\be
({\dth\Delta_t{}^a})\lrcorner \varphi  = 
\nabla_c\, h_{t\, d}{}^a \, \varphi^{cd}{}_b \dd x^b = 0~,
\ee
where we have used that $\beta_t=0$. Using Theorem \ref{tm:diffeoG2}, which states that  diffeomorphisms correspond to changing $\Delta^a$ by $\cd_{\theta}$-exact forms, we see that $\Delta^a$ remains $\cd_\theta$-closed under diffeomorphisms. We can then conclude that the infinitesimal moduli space of compact $G_2$ manifolds maps to the canonical $G_2$ cohomology group $H_{\check \dd_{\theta}}^1(Y,TY)$.

\subsection{Spinor perspective}
\label{eq:spinor}
We now derive again the results obtained in previous section from another perspective.  As the $G_2$ holonomy on the manifold $Y$ is determined by a well defined nowhere vanishing spinor $\eta$  which is covariantly constant, we study in this section the moduli of $Y$ by deforming the spinor and the $G_2$ holonomy condition. 

Let us first recall the definition of the fundamental three-form $\varphi$  and four form $\psi$ in terms of the Majorana spinor $\eta$,
\begin{align}
\varphi_{abc}&=-i\,\eta^\dagger\gamma_{abc}\eta\:,\label{eq:defvarphi2}\\
\psi_{abcd} &= -\eta^\dagger\gamma_{abcd}\eta\:.\label{eq:defpsi2}
\end{align}
The gamma-matrices satisfy the usual Clifford algebra
\begin{equation}
\{\gamma^\alpha,\gamma^\beta\}=2{\delta}^{\alpha\beta}\:.
\end{equation}
where $\gamma_a={e_{a}}^\alpha\gamma_\alpha$, and ${e_{a}}^\alpha$ denote the vielbein corresponding to the metric
\begin{equation}
 g_{ab} = e_a{}^\alpha\, e_b{}^\beta\, \delta_{\alpha\beta}~.\label{eq:metric}
 \end{equation}
We use labels $\{\alpha,\beta,..\}$ to denote tangent space flat indices.  We take the $\gamma$ matrices to be hermitian and imaginary. 
We will need below some $\gamma$ matrix identities which can be found in e.g. \cite{CANDELAS1984415}.
The $G_2$ holonomy condition on $Y$ can be expressed in terms of the spinor $\eta$ by the fact that it is covariantly constant with respect to the Levi-Civita connection
\begin{equation}
\label{eq:Susy1}
\nabla_a\eta_i=\p_a\eta_i+\frac{1}{4}\Omega_{a\,\alpha\beta}(\gamma^{\alpha\beta})_i{}^j\,\eta_j=0\:,
\end{equation}
where $\{i, j, \ldots\}$ are spinor indices.  
Here $\Omega_{a\,\alpha\beta}$ is the spin connection defined by $\nabla_a \, e_b{}^\alpha = 0$, that is
\begin{equation}
 \Omega_{a\,\alpha\beta}= - e^b{}_\beta\, (\partial_a\, e_b{}_\alpha - \Gamma_{ab}{}^c\, e_c{}_\alpha)~.
 \label{eq:SpinO}
 \end{equation}
 Note that the $\gamma$ matrices are covariantly constant\footnote{Indeed, the $\gamma$ matrices with flat tangent space indices are covariantly constant with respect to {\it any connection}.} .  In, fact
 \begin{equation*}
 \nabla_a\,(\gamma^b)= \partial_a\, (\gamma^b)+ \Gamma_{ac}{}^b\, (\gamma^c)
 - \frac{1}{4}\, \Omega_{a\alpha\beta} \, e_\gamma{}^b\, [\gamma^\gamma, \gamma^{\alpha\beta}]~,
 \end{equation*}
 and therefore
 \begin{equation*}
 \nabla_a\,\gamma^b = \left(\partial_a e_\alpha{}^b + \Gamma_{ac}{}^b\, e_\alpha{}^c 
 + \, \Omega_{a\alpha}{}^\beta\, e_\beta{}^b \right)\, \gamma^\alpha = (\nabla_a\, e_\alpha{}^b)\, \gamma^\alpha = 0~,
 \end{equation*}
 where we have used the $\gamma$ matrix identity
 \be \label{eq:gammaid1}
  [\gamma^\gamma, \gamma^{\alpha\beta}] =  4\, \delta^{\gamma\,[\alpha}\,\gamma^{\beta]}~.\ee

The moduli problem is discussed in this section in terms of those variations of $\eta$ and the vielbein $e_a{}^\alpha$ which preserve the $G_2$ holonomy condition
\eqref{eq:Susy1}. On manifold with a $G_2$ structure, a general variation of $\eta$ is given by
\begin{equation}
\label{eq:spinordef}
{\partial_t}\eta=d_t\,\eta+i\,{b_t}_a\gamma^a\eta\:,
\end{equation}
where $d_t$ is a real function and ${b_t}$ a real one form.  Any other terms would be of the form $\gamma^{ab}\eta$ or $\gamma^{abc}\eta$, however one can use the identities in equation (3.8) in \cite{Kaste:2003zd} to show that this is in fact the general form of an eight dimensional Majorana spinor on a manifold with a $G_2$ structure. A trivial deformation of $\eta$ corresponding to a diffeomorphism of $Y$ is given by the Lie derivative of $\eta$ along a vector $V$
\[
 {\cal L}_V\, \eta = V^a\, \nabla_a\eta + \frac{1}{4}\, \nabla_{[a}\, V_{b]} \, \gamma^{ab}\eta~.
 \]
Using the $G_2$ holonomy condition $\nabla_a\eta=0$ we obtain 
\begin{equation}
 {\cal L}_V\, \eta =  \frac{1}{4}\, \nabla_{[a}\, V_{b]} \, \gamma^{ab}\eta
 =  i\, \frac{1}{4}\, (\nabla_a V_b)\, \varphi^{ab}{}_c\, \gamma^c\, \eta ~,
 \end{equation}
 where in the last equality we have used  the identity (see \cite{Kaste:2003zd})
 \begin{equation}
  \gamma^{ab}\,\eta = i\, \varphi^{ab}{}_c \, \gamma^c\, \eta~.
  \label{eq:gammaid}
  \end{equation}
 Therefore trivial deformations of $\eta$ correspond to one forms $b_t$ in the variations of $\eta$ of the form
 \[ b_{triv} = \frac{1}{4}\, \dd v\lrcorner\varphi~.\]

The computation of the deformations of the $G_2$ holonomy condition \eqref{eq:Susy1} requires that we first compute  the variations of the Christoffel connection, the spin connection and the vielbein\footnote{These quantities can be found in the literature (see for example \cite{forger2004currents}), however we briefly sketch here the computations in order to make this section self contained.}.  The variations of the Christoffel connection are easily computed in terms of the variations of the metric
\begin{equation}
\partial_t\, \Gamma_{ab}{}^c = \frac{1}{2}\, g^{cd}\big( \nabla_a\, \partial_t g_{bc}
+ \nabla_b\, \partial_t g_{ac}  - \nabla_d\, \partial_t g_{ab}\big)~.\label{eq:varLC}
\end{equation}
The variations of the vielbein can be obtained from equation \eqref{eq:metric}
\begin{equation*}
\partial_t g_{ab} = 2 \big(\partial_t e_{(a}{}^\alpha\big)\, e_{b)}{}_\alpha
= 2\left( \big(\partial_t e_a{}^\alpha\big)\, e_b{}_\alpha - \Lambda_{t\, ab}\right)
~,
\end{equation*}
where we have defined 
\begin{equation}
\Lambda_{t\, ab} = \big(\partial_t e_{[a}{}^\alpha\big)\, e_{b]}{}_\alpha~.\label{eq:Lambda}
\end{equation}
Hence
\begin{equation}
\partial_t e_a{}^\alpha = e^{b\alpha}\,\left( \frac{1}{2}\,  \partial_t g_{ab} + \Lambda_{t\, ab}\right)~.
\label{eq:var7bein}
\end{equation}
Note that $\Lambda_t$ corresponds to deformations of the vielbein which do not change the metric. For the inverse of the vielbein we find
\begin{equation}
\partial_t e^a{}_\alpha = g^{ab}\, e^c{}_\alpha\, \left( - \frac{1}{2}\,  \partial_t g_{bc} + \Lambda_{t\, bc}\right)~.
\label{eq:var7beinInv}
\end{equation}
The variations of the spin connection are computed using equations \eqref{eq:varLC},  \eqref{eq:var7bein} and \eqref{eq:var7beinInv} and is left as an exercise for the reader
\begin{align}
\partial_t\Omega_{a\alpha\beta} &= e^b{}_\alpha\, e^c{}_\beta\big( \nabla_a\Lambda_{t\, bc} - \nabla_{[b}\, \partial_t g_{c]a}\big)~.
\label{eq:varspinO}
\end{align}

Next, we consider the deformations of the $G_2$ holonomy condition \eqref{eq:Susy1}.
Varying equation \eqref{eq:Susy1}, and using equations \eqref{eq:varspinO} and\eqref{eq:gammaid},  we have
\begin{align*}
\partial_t\nabla_a\eta&=\p_a\partial_t\eta+\frac{1}{4}\Omega_{a\,\alpha\beta}\gamma^{\alpha\beta}\partial_t\eta
 +\frac{1}{4}(\partial_t\Omega_{a\,\alpha\beta})\gamma^{\alpha\beta}\eta
= 
\nabla_a\partial_t\eta
 +\frac{1}{4}(\partial_t\Omega_{a\,\alpha\beta})\gamma^{\alpha\beta}\eta
 \\[5pt]
 &= \nabla_a\partial_t\eta
 +\frac{i}{4}\, \big( \nabla_a\Lambda_{t\, bc} - \nabla_{[b}\, \partial_t g_{c]a}\big)\,\varphi^{bc}{}_d\, \gamma^{d}\,\eta 
 \\[5pt]
 &= \nabla_a\left(\partial_t\eta + \frac{i}{2}\, (\Lambda_t\lrcorner\varphi)_b\, \gamma^b\eta\right)
 - \frac{i}{4}\, \nabla_{[b}\, \partial_t g_{c]a}\,\varphi^{bc}{}_d\, \gamma^{d}\,\eta
 \\[5pt]
 &= (\partial_a\, d_t)\, \eta + i\, \nabla_a\left( b_{t\, c} + \frac{1}{2}\, (\Lambda_t\lrcorner\varphi)_c\, \right)\, \gamma^{c}\,\eta
 - \frac{i}{4}\, \nabla_{[b}\, \partial_t g_{c]a}\,\varphi^{bc}{}_d\, \gamma^{d}\,\eta
  = 0
 \:,
\end{align*}
where in the last equality we have used equation \eqref{eq:spinordef} and the fact that the
$\gamma$ matrices are covariantly constant.  Therefore 
\[ \partial_a d_t = 0~,\] 
so $d_t$ is a constant, and 
\begin{equation}
 \nabla_a\left( b_{t\, d} + \frac{1}{2}\, (\Lambda_t\lrcorner\varphi)_d\, \right)
 = \frac{1}{4}\, \nabla_{[b}\, \partial_t g_{c]a}\,\varphi^{bc}{}_d~.
 \label{eq:g2holmod6}
\end{equation}
This equation is precisely  equation \eqref{eq:g2holmod5} as we discuss below.  

To compare  this analysis with our previous discussion in section \ref{subsec:form}, and in particular, to see how the moduli of the spinor and the vielbein are related to the moduli of $\varphi$, we consider an infinitesimal variation of $\varphi$ in equation \eqref{eq:defvarphi2}
\begin{align*}
\partial_t\varphi_{abc} &= -i\, \big(
\p_t\eta^\dagger \, \gamma_{abc}\, \eta + \eta^\dagger\, \gamma_{abc} \,\p_t\eta
+ 3\, (\p_t e_{[a}{}^\alpha)\, e^d{}_\alpha\eta^\dagger\, \gamma_{bc]d}\,\eta\big)
\\
&=
2\,d_t\,\varphi_{abc}
- b_{t\, d}\, \eta^\dagger\, [\gamma^d, \gamma_{abc}]\, \eta
 + 3\,e^d{}_\alpha\, ({\partial_t}{e_{[a}}^\alpha)\varphi_{bc]d}~.
\end{align*}
Using the $\gamma$ matrix identity
\[ [\gamma_{\alpha\beta\gamma}, \gamma_\delta] = 2\, \gamma_{\alpha\beta\gamma\delta}~,\]
and equation \eqref{eq:defpsi2}, we find
\begin{equation}
\label{eq:varphidef}
{\partial_t}\varphi=2\,d_t\,\varphi + 2\,{b_t}\lrcorner\psi 
+ \frac{1}{2}\, \,e^d{}_\alpha\, ({\partial_t}{e_{[a}}^\alpha)\varphi_{bc]d}\, \dd x^{abc}
\:.
\end{equation}
Comparing \eqref{eq:varphidef} with \eqref{eq:varphihol}
\begin{equation}
M_a{}^\alpha= -\frac{2}{3}\, d_t\,{e_a}^\alpha - \frac{2}{3}\,(b_t\lrcorner\varphi)_{ab}\, e^{b\alpha} - {\partial_t}{e_a}^\alpha\:.
\end{equation}
Moreover,
\begin{align*}
c_t &= \frac{8}{3}\, d_t + \frac{4}{7}\, (\p_t e_a{}^\alpha)\, e^a{}_\alpha~,
\\[5pt]
M_{(ab)} &= -\frac{2}{3}\, d_t\, g_{ab} - \frac{1}{2}\, \p_t g_{ab}
= - \left(\frac{4}{3}\, d_t + \frac{1}{7}\,(\p_t e_a{}^\alpha)\, e^a{}_\alpha\right)\, g_{ab} + h_{t\, ab}
~,
\\[5pt]
\beta_t\lrcorner\varphi&= -2\, b_t - \Lambda_t\lrcorner\varphi~,
\end{align*}
where $\beta_t$ is defined in equation \eqref{eq:alphat}.  The equation for moduli \eqref{eq:g2holmod6} then becomes
\begin{equation}
 \nabla_a  (\beta_t\lrcorner\varphi)_d)
 = -\frac{1}{2}\, \nabla_{[b}\, \partial_t g_{c]a}\,\varphi^{bc}{}_d~,
 \label{eq:g2holmod7}
\end{equation}
and hence, we obtain  the same conclusions as in subsection \ref{subsec:form}, as it should be. It is worth noting that the parameters $b_t$ and
$\Lambda_t$ do not contribute to deformations of the metric.  As we can shoose $\beta_t=0$ using diffeomorphisms, these two are related by
\[ 2\, b_t =- \Lambda_t\lrcorner\varphi~.\]
Moreover $\pi_{14}(\Lambda_t)$ corresponds to deformations of the vielbein which
do not change the $G_2$ structure.


\section{Infinitesimal moduli space of $G_2$ instanton bundles}
\label{sec:bundmod}

Consider a one parameter family of pairs $(Y_t, V_t)$ with $(Y_0, V_0)=(Y, V)$, where the curvature $F$ on the bundle $V$ satisfies the instanton equation $F\wedge\psi = 0$.  We want to study simultaneous deformations of the $G_2$ structure on $Y$ together with those of the bundle which preserve both the $G_2$ holonomy of $Y$ and the instanton equation. To achieve this, we deform the system to first order which gives the infinitesimal moduli space. We will then discuss how this result relates to an extension bundle, and finally give a few remarks on higher order obstructions.

\subsection{Form perspective}
\label{sec:bundmodform}

We start by varying the instanton equation: 
\[ 0 = \partial_t(F\wedge\psi) = \partial_t F\wedge\psi + F \wedge\partial_t\psi~.\]
Note that in the first term, the wedge product of $\partial_t F$ with $\psi$ picks out the part of  $\partial_t F$ which is in $\Lambda^2_7$. 
Noting that 
\[ \partial_t F = \dd_A\partial_t A~,\]
and contracting with $\psi$ we obtain
\begin{equation}\label{eq:InstMod}
 \pi_7(\dd_A\partial_t A)= \check \dd_A \partial_t A 
= - \frac{1}{3}\, \psi\lrcorner (F \wedge\partial_t\psi)~,
\end{equation}
where we have used equation \eqref{eq:twophiphi}.

Keeping the $G_2$ structure fixed ($\partial_t\psi = 0$) on the base manifold gives the equation for the bundle moduli, that is, let $t$ be a bundle parameter, then 
\begin{equation} 
\check \dd_A \partial_t A = 0~,\label{eq:Bmod}
\end{equation}
Moreover, it is clear that $\check \dd_A$-exact one-forms correspond to gauge transformations, so
the bundle moduli are in correspondence with the cohomology group $H_{\check \dd_A}^1(Y, \rm{End}(V))$.\footnote{Recall that under an infinitesimal gauge transformation $\epsilon\in\Omega^0(\End(V))$, the connection transforms as
$A\rightarrow A+\dd_A\epsilon=A+\cd_A\epsilon$.}

Suppose now that $t$ is a deformation of the $G_2$ structure.  Then equation \eqref{eq:InstMod} is a constraint on the geometric moduli $\Delta_t$, that is required if the deformed  bundle connection shall be an instanton. Recall that we may decompose the variations of $\psi$ as 
\[ \partial_t\psi = c_t\, \psi + \tilde\gamma_t~,\]
where $c_t$ is a constant, and $\tilde\gamma_t\in \Lambda^4_7 + \Lambda^4_{27}$ is related to the traceless matrix $\Delta_{t\, ab}$:
\[ \tilde\gamma_t =  \frac{1}{3!} \, \Delta_{t\, a}{}^e\, \psi_{bcde}\,\dd x^{abcd}~.\]
 
To understand the right hand side of equation \eqref{eq:InstMod} we define the map
\begin{align*}
{\cal F} :\quad  \Lambda^p(Y, TY)\qquad &\longrightarrow\qquad \Lambda^{p+1}(Y, {\rm End}(V))\\
 \Delta\quad\qquad&~\mapsto \qquad 
{\cal F}(\Delta) = - F_{ab}\,\dd x^b\wedge\Delta^a~.
 \end{align*}
 We also define the map 
 \begin{equation*}
\check{\cal F} :\quad  \Lambda_{\bf r}^p(Y, TY)\qquad \longrightarrow\qquad \Lambda_{\bf r'}^{p+1}(Y, {\rm End}(V))~, 
\end{equation*}
where $\Lambda_{\bf r}^p(Y, {\rm End}(V))\subseteq \Lambda^p(Y, {\rm End}(V))$, 
$\Lambda_{\bf r'}^{p+1}(Y, {\rm End}(V))\subseteq \Lambda^{p+1}(Y, {\rm End}(V))$,  and ${\bf r}$ and ${\bf r'}$ are appropriate irreducible $G_2$ representations as follows:
\begin{align*}
 \check{\cal F}(\Delta) &= {\cal F}(\Delta) = -  F_{ab}\,\dd x^{b}\, \Delta^a~, &{\rm for}\quad\Delta\in \Lambda^0(TY)~,
 \\
 \check{\cal F}(\Delta) &= \pi_7({\cal F}(\Delta)) =  - \pi_7(F_{ab}\,\dd x^{b}\wedge \Delta^a)~, &{\rm for}\quad\Delta\in \Lambda^1(TY)~,
 \\
  \check{\cal F}(\Delta) &= \pi_1({\cal F}(\Delta)) = -  \pi_1(F_{ab}\,\dd x^{b}\wedge \Delta^a)~, &{\rm for}\quad\Delta\in \Lambda_7^2(TY)~.
\end{align*}
Note that the projections that define $\check{\cal F}$ are completely analogous to those that define the derivatives $\check \dd_A$.  It will become clear why we need this map shortly. 

 \begin{proposition}
 The equation for the moduli of instantons, \eqref{eq:InstMod}, is equivalent to
 \begin{equation}
  \check \dd_A(\partial_t A) = - \check{\cal F}(\Delta_t)~,
  \label{eq:Atiyah}
  \end{equation}
 where $\Delta_t\in \Lambda^1(Y, TY)$.
 \end{proposition}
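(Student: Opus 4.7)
My plan is to reduce \eqref{eq:Atiyah} to a direct algebraic identity between two-forms valued in $\End(V)$, starting from \eqref{eq:InstMod} and exploiting the instanton condition together with the $G_2$ decomposition of $\partial_t\psi$.

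First I would substitute the decomposition
\[
\partial_t\psi = c_t\,\psi + \tilde\gamma_t, \qquad
\tilde\gamma_t = \frac{1}{3!}\,\Delta_{t\,a}{}^e\,\psi_{bcde}\,\dd x^{abcd},
\]
into the right-hand side of \eqref{eq:InstMod}. The scalar piece drops out immediately because $F\wedge\psi=0$ by the instanton condition, so we are left with
\[
\check \dd_A\partial_t A = -\tfrac{1}{3}\,\psi\lrcorner(F\wedge\tilde\gamma_t).
\]
Since the scale part $c_t$ is killed, only the traceless matrix $\Delta_t$ enters, which is exactly the object appearing in $\check{\cal F}(\Delta_t)$.

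Next I would compute $F\wedge\tilde\gamma_t$ explicitly. Writing $F=\tfrac{1}{2}F_{ef}\,\dd x^{ef}$, this is a six-form whose coefficient is a contraction of $F$ against a product of $\psi$'s and one factor of $\Delta_t$. Applying the contraction with $\psi$ produces a two-form valued in $\End(V)$ whose tensorial expression involves $F_{ab}\,\Delta_t^a$ contracted against the $G_2$ invariants $\varphi$, $\psi$ and $g_\varphi$. The relevant identities are the standard $\psi$–$\psi$ and $\varphi$–$\psi$ contraction formulas collected in the appendix (of the type already invoked around equation \eqref{eq:twophiphi}); the computation is essentially the seven-dimensional analogue of the calculation that shows $\psi\lrcorner(F\wedge\psi)\propto F$ for $F\in\Lambda^2_{14}$.

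After simplification the outcome should be expressible as a multiple of $\pi_7(F_{ab}\,\dd x^b\wedge\Delta_t^a)$. To see why the projector $\pi_7$ emerges automatically, I would argue as follows: the left-hand side $\check \dd_A\partial_t A$ lives in $\Lambda^2_7(Y,\End(V))$ by definition of $\check \dd_A$, so the right-hand side must lie in $\Lambda^2_7$ too; equivalently, the part of $F_{ab}\,\dd x^b\wedge\Delta_t^a$ lying in $\Lambda^2_{14}$ is annihilated once we contract with $\psi$, which is precisely the statement that $\psi\wedge\Lambda^2_{14}=0$. Combining this observation with the explicit $G_2$-contraction calculation fixes the proportionality constant and reproduces $\check{\cal F}(\Delta_t)$ on the nose.

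The main obstacle is the combinatorial bookkeeping in step two: one has to verify that the numerical factor works out to exactly $-1$ (equivalently, that the factor $1/3$ in \eqref{eq:InstMod} cancels the multiplicity arising from the three $\psi$-index contractions). This is a purely routine but tedious $G_2$ identity check; the conceptual content, namely that the instanton moduli equation couples bundle and geometric moduli via the map $\check{\cal F}$, is already captured by the two observations above.
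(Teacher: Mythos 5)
Your strategy coincides with the paper's own: discard the singlet part of $\partial_t\psi$ using $F\wedge\psi=0$, then evaluate $\psi\lrcorner(F\wedge\tilde\gamma_t)$ by direct $G_2$ tensor algebra until it collapses onto $\pi_7\big(F_{ab}\,\dd x^b\wedge\Delta_t{}^a\big)$. Your structural observations are all sound, including the remark that the right-hand side of \eqref{eq:InstMod} automatically lands in $\Lambda^2_7$ (equivalently, that $\psi\wedge\Lambda^2_{14}=0$ kills the $\mathbf{14}$ part), which explains why only $\pi_7({\cal F}(\Delta_t))$ can survive.

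The gap is one of execution rather than conception: the entire content of the proposition is the identity $\psi\lrcorner(F\wedge\tilde\gamma_t)=3\,\check{\cal F}(\Delta_t)$, and you do not establish it --- your representation-theoretic argument shows only that the answer is proportional to $\check{\cal F}(\Delta_t)$, and does not exclude a wrong (or vanishing) constant. Moreover the computation is not quite as routine as you suggest; two inputs absent from your outline are indispensable. First, one needs the inversion formula
\begin{equation*}
\frac{1}{3!}\,\psi^{a c_1c_2c_3}\,\tilde\gamma_{t\,b c_1c_2c_3}=-2\,\Delta_{t\,b}{}^a-\Delta_{t\,d}{}^c\,\psi^{ad}{}_{bc}\,,
\end{equation*}
which follows from \eqref{eq:ct}--\eqref{eq:alphat} and converts the contractions of $\tilde\gamma_t$ back into $\Delta_t$. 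Second, one needs the identity \eqref{eq:betapsi}, $F^{ae}\,\psi_{ebcd}=3\,F_{e[b}\,\psi_{cd]}{}^{ae}$, which holds only because $F\in\Lambda^2_{14}$ and is what allows the cross terms to recombine into $\tfrac{1}{4}\,{\cal F}(\Delta_t)_{cd}\,\varphi^{cd}{}_e\,\varphi_{ab}{}^e\,\dd x^{ab}=\big({\cal F}(\Delta_t)\lrcorner\varphi\big)\lrcorner\varphi=3\,\pi_7({\cal F}(\Delta_t))$ via \eqref{eq:twophiphi}. Until these steps are carried out, the coefficient $-1$ in \eqref{eq:Atiyah} --- on which the later identification of the moduli space with $\ker\check{\cal F}$ depends --- remains unproven.
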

 \begin{proof}
 The proof is a straightforward computation.
 \begin{align*}
 \psi\lrcorner(F\wedge\tilde\gamma) 
 &= \frac{6\cdot 5}{4\cdot 4!}\, \psi^{c_1c_2c_3c_4}\,F_{[c_1c_2}\, \tilde\gamma_{c_3c_4ab]}\,\dd x^{ab}
 \\
 &= \frac{1}{2\cdot 4!}\, \psi^{c_1c_2c_3c_4}\, ( 6\, F_{c_1c_2}\, \tilde\gamma_{c_3c_4ab}
 + 8\, F_{c_1a}\, \tilde\gamma_{bc_2c_3c_4} + F_{ab}\, \tilde\gamma_{c_1c_2c_3c_4})\,\dd x^{ab}
 \\[5pt]
 &= \frac{1}{4!}\, \psi^{c_1c_2c_3c_4}\, ( 3\, F_{c_1c_2}\, \tilde\gamma_{c_3c_4ab}
 + 4\, F_{c_1a}\, \tilde\gamma_{bc_2c_3c_4})\,\dd x^{ab}
 + (\psi\lrcorner\tilde\gamma)\, F~.
 \end{align*}
 The last term vanishes because $\tilde\gamma\in \Lambda^4_7 + \Lambda^4_{27}$, and using $F = - F\lrcorner\psi$ (as $F\in\Lambda^2_{14}$) in the first term we have 
 \begin{equation}
 \psi\lrcorner(F\wedge\tilde\gamma) 
 = \Big(- \frac{1}{4}\, F^{c_1c_2}\, \tilde\gamma_{c_1c_2ab}
 + \frac{1}{3!}\, \psi^{c_1c_2c_3c_4}\, F_{c_1a}\, \tilde\gamma_{bc_2c_3c_4}
 \Big)\dd x^{ab}~.\label{eq:preprop}
 \end{equation}
 By equations \eqref{eq:ct}, \eqref{eq:gammat} and \eqref{eq:alphat}, it is easy to check 
 that
 \[ \frac{1}{3!}\, \psi^{ac_1c_2c_3}\, \tilde\gamma_{bc_1c_2c_3}
 = - 2 \, \Delta_{t\, b}{}^a - \Delta_{t\,d}{}^c\, \psi^{ad}{}_{bc}~.\]
 Therefore, \eqref{eq:preprop} becomes
 \begin{align*}
 \psi\lrcorner(F\wedge\tilde\gamma) 
 &=\big(-  F^{c_1c_2}\, \Delta_{t\, [c_1}{}^d\, \psi_{c_2 ab]d} 
 + (
 - 2 \, \Delta_{t\, b}{}^c - \Delta_{t\,d}{}^e\, \psi^{cd}{}_{be})\, F_{ca}
  \big)\,\dd x^{ab}
  \\[3pt]
  &= - \frac{1}{2}\,\big( F^{c_1c_2}\, (\Delta_{t\, c_1}{}^d\, \psi_{c_2 abd}
 + \Delta_{t\, a}{}^d\, \psi_{bc_1c_2d})  
 +2\, (
 2 \, \Delta_{t\, b}{}^c + \Delta_{t\,d}{}^e\, \psi^{cd}{}_{be})\, F_{ca}
  \big)\,\dd x^{ab}
  \\[4pt]
   &= - \frac{1}{2}\, \big( 
   2\,   F_{ca}\, \Delta_{t\, b}{}^c
   -  F^{c_1c_2}\,  \psi_{c_1 abd}\, \Delta_{t\, c_2}{}^d
    +2\, F_{ca}\, \psi^{cd}{}_{be}\, \Delta_{t\,d}{}^e
 \big)\,\dd x^{ab}~,
  \end{align*}
  where in the last step we have used again $-F = F\lrcorner\psi$.
  We now use the identity (see equation \eqref{eq:betapsi} proven in  Appendix \ref{app:formulas})
 \[  F^{ae}\, \psi_{ebcd}= 3\, F_{e[b}\,\psi_{cd]}{}^{ae}~,\]
 in the second term, and we obtain
 \begin{align*}
 \psi\lrcorner(F\wedge\tilde\gamma) 
 &= - \frac{1}{2}\, \big( 
   2\,   F_{ca}\, \Delta_{t\, b}{}^c
   - 3\, F_{c[a}\, \psi_{be]}{}^{cd}\, \Delta_{t\, d}{}^e
   +2\, F_{ca}\, \psi_{be}{}^{cd}\, \Delta_{t\,d}{}^e
 \big)\,\dd x^{ab}
 \\[4pt]
 &=   - \frac{1}{2}\, \big( 
   2\,   F_{ca}\, \Delta_{t\, b}{}^c
  - F_{ce}\, \psi_{ab}{}^{cd}\, \Delta_{t\, d}{}^e
   \big)\,\dd x^{ab}
   \\[4pt]
 &= - \frac{1}{2}\, F_{cd}\, \Delta_{t\, e}{}^c\, 
 ( 2\, \delta_a^d\, \delta_b^e\, + \psi_{ab}{}^{de})\,\dd x^{ab}
   \\[4pt]
   &= \frac{1}{4}\, {\cal F}(\Delta_t)_{cd}\, 
 ( 2\, \delta_a^c\, \delta_b^d\, + \psi_{ab}{}^{cd})\,\dd x^{ab}
 = \frac{1}{4}\, {\cal F}(\Delta_t)_{cd}
 \, \varphi^{cd}{}_e\, \varphi_{ab}{}^e \,\dd x^{ab}
   \\[4pt]
   &= ({\cal F}(\Delta_t)\lrcorner\varphi)\lrcorner\varphi
   = 3\, \pi_7({\cal F}(\Delta_t))
 = 3\, \check{\cal F}(\Delta_t)~.
 \end{align*}
 This result, together with equation \eqref{eq:InstMod}, gives equation \eqref{eq:Atiyah}.
  
 \end{proof}

The map $\check{\cal F}$ is actually a map between cohomologies, moreover, it maps the metric moduli space of $G_2$ manifolds into the $\check \dd_A$-cohomology.  As we will see below this is a consequence of the Bianchi identity $\dd_A F = 0$.  We begin with a useful lemma. 

 \begin{lemma}
 The exterior covariant derivative $\dd_A$ of ${\cal F}$ is given by
 \begin{equation}
\dd_A({\cal F}(\Delta)) + {\cal F}(\dd\Delta) = - \dd_A (F_{ab}\, \dd x^b)\wedge\Delta^a ~.
\label{eq:cool}
\end{equation}
for any $p$-form $\Delta$ with values in $TY$.
Moreover, due to the Bianchi identity
\[\dd_A F = 0~,\]
 the right hand side of equation \eqref{eq:cool} becomes
\begin{equation}
\dd_A({\cal F}(\Delta)) + {\cal F}(\dd\Delta) 
= - (\partial_a F + A_a\, F - F\, A_a)\wedge\Delta^a~.\label{eq:cooltwo}
\end{equation} 
 \end{lemma}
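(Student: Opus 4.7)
The plan is to prove the two identities separately, first establishing \eqref{eq:cool} by a direct application of the graded Leibniz rule, then deducing \eqref{eq:cooltwo} from \eqref{eq:cool} by invoking the Bianchi identity in components.

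For \eqref{eq:cool}, I would regard the expression ${\cal F}(\Delta) = -F_{ab}\,\dd x^b\wedge\Delta^a$ as a sum (over $a$) of wedge products of two factors of definite degree: the one-form $F_{ab}\,\dd x^b$ with values in $\End(V)$, and the scalar-valued $p$-form $\Delta^a$. The graded Leibniz rule for $\dd_A$ acting on $\End(V)$-valued forms, combined with the ordinary Leibniz rule in the second slot, then gives
\begin{equation*}
\dd_A {\cal F}(\Delta) = -\dd_A(F_{ab}\,\dd x^b)\wedge \Delta^a + F_{ab}\,\dd x^b\wedge \dd\Delta^a\:.
\end{equation*}
The second term is precisely $-{\cal F}(\dd\Delta)$ by definition of ${\cal F}$ applied to the $TY$-valued $(p+1)$-form $\dd\Delta$, and rearranging yields \eqref{eq:cool}. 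The only subtlety is the sign $(-1)^{1}$ coming from the degree of $F_{ab}\,\dd x^b$.

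For \eqref{eq:cooltwo}, I would expand the right-hand side of \eqref{eq:cool} in components. Writing $\dd_A(F_{ab}\,\dd x^b) = \nabla_c F_{ab}\,\dd x^c\wedge\dd x^b$, where $\nabla_c F_{ab} = \partial_c F_{ab} + [A_c,F_{ab}]$, the Bianchi identity $\dd_A F = 0$ reads in components
\begin{equation*}
\nabla_c F_{ab} + \nabla_a F_{bc} + \nabla_b F_{ca} = 0\:.
\end{equation*}
Substituting this into $\nabla_c F_{ab}\,\dd x^c\wedge\dd x^b\wedge \Delta^a$ and using the antisymmetry of both $F$ and the wedge product to relabel indices, the two cyclic permutations reassemble into a single expression with the derivative acting along the direction contracted with $\Delta$, giving
\begin{equation*}
\dd_A(F_{ab}\,\dd x^b)\wedge\Delta^a = \tfrac{1}{2}\nabla_a F_{bc}\,\dd x^b\wedge\dd x^c\wedge\Delta^a = (\partial_a F + A_a F - F A_a)\wedge\Delta^a\:.
\end{equation*}
Combining with \eqref{eq:cool} produces \eqref{eq:cooltwo}.

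The main obstacle is purely bookkeeping: tracking the graded sign in the Leibniz rule and carefully using the antisymmetry of $F_{ab}$ and of the wedge product when collapsing the cyclic sum from Bianchi into the form $\nabla_a F_{bc}\,\dd x^b\wedge\dd x^c$. No deeper input beyond the Bianchi identity and the definition of $\dd_A$ is required.
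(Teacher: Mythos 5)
Your proposal is correct and follows essentially the same route as the paper: the first identity is the graded Leibniz rule applied to $-F_{ab}\,\dd x^b\wedge\Delta^a$ (the paper just writes out $\dd_A=\dd+A\wedge\,\cdot\,-(-1)^k\,\cdot\,\wedge A$ explicitly), and the second identity is obtained by contracting the component form of $\dd_A F=0$ with $\dd x^{bc}$ at fixed $a$, which is exactly your cyclic-sum rearrangement yielding $\dd_A(F_{ab}\,\dd x^b)=\partial_a F+A_a F-FA_a$. The signs and the factor of $\tfrac12$ in your reassembly check out against the paper's computation.
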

 \begin{proof}
 \begin{align*} 
\dd_A({\cal F}(\Delta)) 
&= -  \dd_A(F_{ab}\, \dd x^b\wedge\Delta^a) \\
&= -  \dd (F_{ab}\, \dd x^b\wedge\Delta^a) -  A\wedge F_{ab}\, \dd x^b\wedge\Delta^a
+ F_{ab}\, \dd x^b\wedge\Delta^a\wedge A\\
& = - \dd_A (F_{ab}\, \dd x^b)\wedge\Delta^a - {\cal F}(\dd\Delta)~,
\end{align*}
where
\[{\cal F}(\dd\Delta) = - F_{ab}\, \dd x^b\wedge\dd\Delta^a~.\]
To obtain equation \eqref{eq:cooltwo}, we re-write the Bianchi identity 
\[ 0= \dd_A F = \dd F + A\wedge F - F \wedge A~,\]
 in a form that will prove very useful.  
\begin{align*}
0&= (\dd_A F)_{abc}\, \dd x^{bc} = 
3\big(\partial_{[a}F_{bc]} + A_{[a}\,F_{bc]} - F_{[ab}\, A_{c]}\big)\, \dd x^{bc}
\\
&= 2\,(\partial_a F + A_a\, F - F\, A_a) - 2 \, \dd_A(F_{ab}\, \dd x^b)~.
\end{align*}
Hence
\begin{equation} 
\dd_A(F_{ab}\, \dd x^b) = \partial_a F + A_a\, F - F\, A_a~.
\label{eq:BItwo}
\end{equation}
Using this equation into \eqref{eq:cool} we obtain \eqref{eq:cooltwo}.  Note that this proof makes it clear that this equation is covariant. In fact, one can make this explicit by writing it as
\[\dd_A({\cal F}(\Delta)) + {\cal F}(\dth\Delta) 
= - (\dd_A (F_{ab}\, \dd x^b) - \theta_a{}^b\wedge F_{bc}\, \dd x^c)\wedge\Delta^a~.\]
See theorem below for details. 

 \end{proof}

 \begin{theorem}
 Let $\Delta$ be a $p$-form with values in $TY$. The Bianchi identity $\dd_A F = 0$ implies that the map $\check{\cal F}$ satisfies 
 \begin{equation}
 \check{\cal  F}(\check \dth(\Delta)) + \check \dd_A(\check{\cal F}(\Delta))= 0~.\label{eq:idcohom}
 \end{equation}
This implies that forms $\Delta\in \Lambda^p(Y, TY)$ which are $\check \dth$-exact
 are mapped into $\check \dd_A$-exact forms in $\Lambda^{p+1}(Y, {\rm End}(V))$.
 Therefore, $\check{\cal F}$ maps the infinitesimal moduli space of $Y$, given by elements of $H^1_{\cd_\theta}(Y,TY)$, into elements of the cohomology
$H^2_{\check \dd_A}(Y, {\rm End}(V))$.\\
 
 \end{theorem}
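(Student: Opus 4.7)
The plan is to apply the $G_2$-representation projections $\pi_7$ (for $p=0$) and $\pi_1$ (for $p=1$) to the unprojected identity of the preceding lemma,
$$\dd_A({\cal F}(\Delta)) + {\cal F}(\dth\Delta) = -{\cal D}_a F\wedge\Delta^a~,\qquad {\cal D}_a F := \dd_A(F_{ab}\,\dd x^b) - \theta_a{}^b\wedge F_{bc}\,\dd x^c~,$$
and show that the projections reconcile to yield the stated identity. The crucial structural fact is that ${\cal D}_a F$ is pointwise in $\Lambda^2_{14}(Y,\End(V))$ for each index $a$: since $\theta$ is $G_2$-compatible ($\nabla\psi=0$) and $F$ is an instanton ($F\wedge\psi=0$), the Leibniz rule gives ${\cal D}_a F \wedge\psi=0$.

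On the left-hand side I must identify $\pi_i(\dd_A{\cal F}(\Delta))$ with $\check\dd_A(\check{\cal F}(\Delta))$ and $\pi_i({\cal F}(\dth\Delta))$ with $\check{\cal F}(\check\dth\Delta)$. For $p=1$ the first identification relies on Lemma \ref{lem:dAontwo14}: $\dd_A$ of a $\Lambda^2_{14}$-valued form has no $\Lambda^3_1$ piece, so the $\pi_{14}$ part of ${\cal F}(\Delta)$ does not contribute under $\pi_1\circ\dd_A$. The second identification relies on the instanton condition: for any $\xi \in \Lambda^2_{14}(Y,TY)$ one has ${\cal F}(\xi)\wedge\psi = -F_{ab}\,\dd x^b\wedge(\xi^a\wedge\psi)=0$, whence $\pi_1({\cal F}(\pi_{14}(\dth\Delta)))=0$. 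For $p=0$ both matches are automatic, as ${\cal F}(\Delta)$ and $\dth\Delta$ lie in $\Lambda^1$ and have no $\pi_{14}$ content. On the right-hand side, $\pi_i({\cal D}_a F\wedge\Delta^a)=0$ in both cases: for $p=0$ the product is pointwise $\Lambda^2_{14}$, so $\pi_7=0$; for $p=1$, any $\beta\in\Lambda^2_{14}$ and $1$-form $\gamma$ satisfy $\beta\wedge\gamma\wedge\psi = \gamma\wedge(\beta\wedge\psi)=0$, so the $\Lambda^3_1$ part vanishes.

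Combining these pieces yields $\check{\cal F}(\check\dth\Delta) + \check\dd_A(\check{\cal F}(\Delta)) = 0$. The consequences for cohomology follow at once: if $\check\dth\Delta=0$ then $\check\dd_A(\check{\cal F}(\Delta))=0$, while if $\Delta=\check\dth\Delta_0$ then $\check{\cal F}(\Delta)=-\check\dd_A(\check{\cal F}(\Delta_0))$ is $\check\dd_A$-exact. Hence $\check{\cal F}$ descends to a well-defined map $H^1_{\cd_\theta}(Y,TY) \to H^2_{\cd_A}(Y,\End(V))$. The main obstacle is the bookkeeping of $G_2$-representation projections and sign conventions; once the three structural facts — ${\cal D}_a F \in \Lambda^2_{14}$, Lemma \ref{lem:dAontwo14}, and the behaviour of $\Lambda^2_{14}\wedge\Lambda^*$ under the $G_2$ decomposition — are deployed together, every term slots into place.
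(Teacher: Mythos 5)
Your proof is correct, and it is built from the same three ingredients as the paper's: the covariant form of the preceding lemma (a consequence of the Bianchi identity), Lemma \ref{lem:dAontwo14} to identify $\check\dd_A(\check{\cal F}(\Delta))$ with $\pi_i(\dd_A({\cal F}(\Delta)))$, and the vanishing of the singlet part of ${\cal F}$ applied to $\Lambda^2_{14}$-valued forms to identify $\pi_1({\cal F}(\dth\Delta))$ with $\check{\cal F}(\check\dth\Delta)$. The genuine difference is in the middle step. The paper contracts the non-covariant identity with $\varphi$, converts to Hodge duals, and uses Lemmas \ref{lem:idone} and \ref{lem:idfortwo14} to turn $-(\partial_a F\wedge\Delta^a)\lrcorner\varphi$ into $-({\cal F}(\theta_a{}^b\wedge\Delta^a))\lrcorner\varphi$, thereby reassembling ${\cal F}(\dth\Delta)$. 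You instead absorb the $\theta$-term into the right-hand side from the outset and prove the single pointwise statement ${\cal D}_aF\wedge\psi=0$; this is a strictly stronger claim (the covariant derivative of $F$ remains in $\Lambda^2_{14}$, not merely that certain contractions of it against $\varphi$ vanish after wedging with $\Delta^a$), and it makes the final projection bookkeeping immediate for both $p=0$ and $p=1$. The one point you should spell out is the identification of ${\cal D}_aF$ with the genuine covariant derivative $\frac{1}{2}\nabla_aF_{bc}\,\dd x^{bc}$: this requires first using the Bianchi identity in the form of equation \eqref{eq:BItwo} to rewrite $\dd_A(F_{ab}\,\dd x^b)$ as $\partial_aF+[A_a,F]$, after which the Leibniz argument $0=\nabla_a(F\wedge\psi)=(\nabla_aF)\wedge\psi+F\wedge\nabla_a\psi=(\nabla_aF)\wedge\psi$ is exactly Lemmas \ref{lem:idone} and \ref{lem:idfortwo14} in disguise. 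Both routes are computationally equivalent; yours trades the paper's explicit $*$-manipulations for a cleaner covariant statement.
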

 
 \begin{proof}
 To compute $\check \dd_A(\check {\cal F}(\Delta))$, we need to consider the map ${\cal }$ acting on $\Delta\in\Lambda^p(Y,TY)$ for each $p$. For $p=0$,
  \begin{equation}
 \check \dd_A(\check{\cal F}(\Delta)) = \check \dd_A({\cal F}(\Delta))
= \pi_7(\dd_A({\cal F}(\Delta))) = \frac{1}{3}\, \big(\dd_A({\cal F}(\Delta))\lrcorner\varphi\big)\lrcorner\varphi
~.\label{eq:pzero}
\end{equation}
For $p=1$, we have
\begin{equation*}
\check \dd_A(\check{\cal F}(\Delta)) = \check \dd_A(\pi_7({\cal F}(\Delta))) = 
\pi_1(\dd_A(\pi_7({\cal F}(\Delta))))= \check \dd_A({\cal F}(\Delta))~,
\end{equation*}
because, $\check \dd_A(\pi_{14}({\cal F}(\Delta))) = 0$ automatically.
In fact, by Lemma \ref{lem:dAontwo14}, for any $\beta\in \Lambda_{14}^2$, 
\[\check \dd_A(\beta) \in \Lambda_7^3\oplus\Lambda_{14}^3~.\]
Hence
\begin{equation}
\check \dd_A(\check{\cal F}(\Delta)) = 
\check \dd_A({\cal F}(\Delta)) 
= \pi_1(\dd_A({\cal F}(\Delta)))= 
\frac{1}{7}\, \big(\varphi\lrcorner \dd_A({\cal F}(\Delta))\big)\, \varphi~.
\label{eq:pone}
 \end{equation}
 Finally, when $p=2$, we have
 \[
 \check \dd_A(\check{\cal F}(\Delta)) = \check \dd_A(\pi_1({\cal F}(\Delta))) = 0~.
 \]
 
 Next,  we need to consider the projections onto $\Lambda_7^2$ and $\Lambda_1^3$, for $p = 0, 1$, respectively, as shown in equations \eqref{eq:pzero} and \eqref{eq:pone}.  Note that in both cases we need to compute the contraction of  $\d\dd_A({\cal F}(\Delta))$ with $\varphi$.  Recall equation \eqref{eq:cooltwo}
 \begin{equation*}
\dd_A({\cal F}(\Delta)) + {\cal F}(\dd\Delta) 
= - (\partial_a F + A_a\, F - F\, A_a)\wedge\Delta^a~.
\end{equation*}
Contracting with $\varphi$ and using $\varphi\lrcorner F = 0$, we have
 \begin{align}
\big(\dd_A({\cal F}(\Delta)) + {\cal F}(\dd\Delta)\big)\lrcorner\varphi &= 
- \big(\partial_a F\wedge \Delta^a\big)\lrcorner\varphi
= - * \big(\partial_a F\wedge \Delta^a\wedge\psi
\big)\nn\\
&=  * \big(F\wedge(\partial_a \psi)\wedge \Delta^a\big)
\nn\\
&= \frac{1}{3!}\, * \Big( F\wedge \psi_{cdeb}\, \dd x^{cde}\, \wedge \theta_a{}^b \wedge\Delta^a\Big)
~,\label{eq:precomp}
 \end{align}
 where in the last step we have used Lemma \ref{lem:idone}, and where
 \[ \theta_a{}^b = \Gamma_{ac}{}^b\, \dd x^c~,\]
 with $\Gamma$ being the connection symbols for a metric connection $\nabla$ which is compatible with the integrable $G_2$ structure $\varphi$ on $Y$.
 By Lemma \ref{lem:idfortwo14}, equation \eqref{eq:precomp} gives
 \begin{align*}
 \big(\dd_A({\cal F}(\Delta)) + {\cal F}(\dd\Delta)\big)\lrcorner\varphi 
  &= 
 * \Big( F_{bc}\,\dd x^c\, \wedge\psi \wedge \theta_a{}^b \wedge\Delta^a\Big)
 = - *\Big( {\cal F}(\theta_a{}^b \wedge\Delta^a)\wedge\psi\Big)
 \\
 &= - \big( {\cal F}(\theta_a{}^b \wedge\Delta^a)\big)\lrcorner\varphi~.
 \end{align*}
 Recall that we have defined the connection $\dth$ as
 \begin{equation*}
 \dth\Delta^a = \dd\Delta^a + \Gamma_{bc}{}^a\, \dd x^c\wedge\Delta^b 
= \dd \Delta^a + \theta_b{}^a\,\wedge \Delta^b~,
 \qquad \Delta\in \Lambda^p(Y, TY)~. \end{equation*}
 Therefore 
 \begin{equation}\label{eq:coolthree}
 \big(\dd_A({\cal F}(\Delta)) + {\cal F}(\dth\Delta)\big)\lrcorner\varphi = 0~.
 \end{equation}
 Returning to equations \eqref{eq:pzero} and \eqref{eq:pone} we find
 \begin{align*}
  \check \dd_A(\check{\cal F}(\Delta))  
&=  - \frac{1}{3}\, \big({\cal F}(\dth\Delta)\lrcorner\varphi\big)\lrcorner\varphi
= - \pi_7({\cal F}(\dth\Delta)) = - \check{\cal F}(\dth\Delta)~,
 \\[5pt]
 \check \dd_A(\check{\cal F}(\Delta)) &=  
- \frac{1}{7}\, \big({\cal F}(\dth\Delta)\lrcorner\varphi\big)\, \varphi
= - \pi_1({\cal F}(\dth\Delta)) = -  \check{\cal F}(\dth\Delta) 
= -  \check{\cal F}(\check \dth\Delta)~.
 \end{align*}
 where in the last step in the second equation we have used the fact that for any two form $\beta\in\Lambda_{14}^2(Y,TY)$,
 \[ {\cal F}(\beta)\wedge\psi = - F_{ab}\, \dd x^b\wedge\beta^a\wedge\psi = 0~,\]
implying that  ${\cal F}(\beta) \in \Lambda_7^3\oplus\Lambda_{14}^3$. Therefore
\[  \check{\cal F}(\dth\Delta) = 
\check{\cal F}\big(\check \dth\Delta + \pi_{14}(\dth\Delta)\big)
= \check{\cal F}(\check \dth\Delta)~.\]

 \end{proof}
 
\subsection{Spinor perspective}
\label{sec:bundmodspin}

It can be useful to understand the infinitesimal deformations of the instanton condition from more perspectives. Let us therefore describe the infinitesimal deformations in terms of the spinorial perspective of section \ref{eq:spinor}. 

Recall that the gaugino supersymmetry condition reads
\begin{equation}
\label{eq:instcondbund}
F_{ab}\gamma^{ab}\eta=0\:.
\end{equation}
On a manifold with a $G_2$-structure, this is equivalent to \eqref{eq:G2inst}. A generic variation of \eqref{eq:instcondbund} gives
\be \label{eq:gauginovar}
\partial_t(F_{ab}\gamma^{ab})\eta +
F_{ab}\gamma^{ab} \partial_t\eta=0
\ee
where in the second term we can use \eqref{eq:spinordef}:
\[
F_{ab}\gamma^{ab} \partial_t\eta = F_{ab}\gamma^{ab} (d_t\eta + i b_{t c} \gamma^c \eta) = i F_{ab} b_{t c} ([\gamma^{ab},\gamma^c] +\{\gamma^{ab},\gamma^c\})\eta =i F_{ab} b_{t c} (-4 g^{ca}\gamma^{b} +2 \gamma^{abc})\eta \; .
\]
The second equality is a consequence of the gaugino supersymmetry condition, and in the last step we use \eqref{eq:gammaid1} and the identity
\be
\{\gamma^{ab},\gamma^c\} = 2 \gamma^{abc}\, .
\ee
We then use the identities in equation (3.8) in \cite{Kaste:2003zd}, and that $F \in \Lambda^2_{14}(Y,\rm{End}(V))$, which implies that $F \lrcorner \varphi = 0$ and $-F = F\lrcorner\psi$: 
\be
F_{ab} \gamma^{abc} \eta = F_{ab} (i \varphi^{abc} + \psi^{abcd}\gamma_d)\eta = -2 F^c{}_{d} \gamma^d \eta
\; .
\ee
Consequently,
 \[
F_{ab}\gamma^{ab} \partial_t\eta = -8 i  F^a{}_{b} b_{t a} \gamma^b \eta = -\frac{4}{3} ((b_t \lrcorner F) \lrcorner \varphi)_{ab} \gamma^{ab}\eta \; ,
\]
where we have used \eqref{eq:gammaid}. 
The remaining terms in \eqref{eq:gauginovar} are given by
\[
\begin{split}
\big[\partial_t(F_{ab})\gamma^{ab} + F_{ab} e_c{}^{\alpha} \partial_t e^a{}_{\alpha} \gamma^{bc}\big]\eta &= 
\big[{\dd_A}_{[a}\partial_tA_{b]} +
F_{ca} e_b{}^{\alpha} \partial_t e^c{}_{\alpha}
\big]\gamma^{ab}\eta \\
&=
\big[{\dd_A}_{[a}\partial_tA_{b]} +
F_{ca} g^{cd} \big(-\frac{1}{2} \partial_t g_{bd} - \Lambda_{t \, bd}\big)
\big]\gamma^{ab}\eta \\
&=
\big[{\dd_A}_{[a}\partial_tA_{b]} +
\big(h_{t\,a}{}^c - \Lambda_{t \, a}{}^c\big)F_{bc} 
\big]\gamma^{ab}\eta
\end{split}
\]
where we have used \eqref{eq:var7beinInv} and \eqref{eq:instcondbund}. We conclude
\[
\partial_t(F_{ab}\gamma^{ab}\eta)=
\big[{\dd_A}_{[a}\partial_tA_{b]} +
\big(h_{t\,a}{}^c - \Lambda_{t \, a}{}^c\big)F_{bc} 
-\frac{4}{3} ((b_t \lrcorner F) \lrcorner \varphi)_{ab}
\big]\gamma^{ab}\eta
\:.
\]
We then note that
\[
\begin{split}
\big[\Lambda_{t \, a}{}^c F_{bc} +\frac{4}{3} ((b_t \lrcorner F) \lrcorner \varphi)_{ab} 
\big]\gamma^{ab}\eta &=
i \big[\Lambda_{t \, a}{}^c F_{bc} +\frac{4}{3} ((b_t \lrcorner F) \lrcorner \varphi)_{ab} 
\big]\varphi^{abd} \gamma_{d}\eta \\
&=
i \big[-(\Lambda_{t}\lrcorner\varphi)\lrcorner F +2 (b_t \lrcorner F)\big]_d \gamma^{d}\eta \\
&=
- i [(\beta_{t}\lrcorner\varphi)\lrcorner F]_d \gamma^{d}\eta
=
-\beta_{t \, a}{}^c F_{bc} 
\gamma^{ab} \eta \; .
\end{split}
\]
Here we have used $F \in \Lambda^2_{14}(Y,\rm{End}(V))$, which ensures that
\be
\label{eq:FcheckLambda}
\begin{split}
&{\cal F}(\Lambda_t) \lrcorner \varphi= 
\Lambda_{t \, a}{}^c F_{bc}\varphi^{ab}{}_d \dd x^d = 
\Lambda_{t}^{ac} (3 F_{b[c}\varphi_{da]}{}^b - F_{bd} \varphi_{ac}{}^b - F_{ba}\varphi_{cd}{}^b )\dd x^d \\
\implies 
&\check{\cal F}(\Lambda_t)  = {\cal F}(\Lambda_t) \lrcorner \varphi =
- (\Lambda_t \lrcorner \varphi) \lrcorner F = {\cal F}(\Lambda_t \lrcorner \varphi) \; ,
\end{split}
\ee
where the maps $\check{\cal F}, \cal F$ were introduced in section \ref{sec:bundmodform}.

Variations that preserve the gaugino supersymmetry equation must thus satisfy
\begin{equation}
\label{eq:Atiyah2}
\left({\dd_A}_{[a}\partial_tA_{c]}+{\Delta_{t\,a}}^d\,F_{dc}\right)\gamma^{ac}\eta=0\:,
\end{equation}
which is equivalent to \eqref{eq:Atiyah}, upon using \eqref{eq:gammaid}. We hence arrive at the same condition from the spinorial perspective as well. This was of course expected, as the two descriptions should be equivalent.

\subsection{The infinitesimal moduli space}
\label{sec:bundmod}
The constraint \eqref{eq:Atiyah} (or equivalently \eqref{eq:Atiyah2}) on the variations $\Delta_t\in {\cal TM}$ of the $G_2$-holonomy metric of $Y$, means that $\check{\cal F}(\Delta_t)$ must be
$\check \dd_A$-exact, that is
\[ \Delta_t\in {\rm ker}(\check{\cal F})\subseteq {\cal TM}~.\]
 Therefore, the tangent space of the moduli space of the combined deformations of $G_2$-holonomy metrics and bundle deformations is given by
\[ {\cal T}{\cal M} = H^1_{\check \dd_A}(Y, {\rm End}(V))\oplus {\rm ker}(\check{\cal F})~,\]
where elements in $H^1_{\check \dd_A}(Y, {\rm End}(V))$ correspond to bundle moduli. Recall however that the infinitesimal moduli space of $G_2$-holonomy metrics does not span the cohomology group $H^1_{\cd_\theta}(Y,TY))$. Let us take a closer look at this.

We first recall the isomorphism between the cohomology $H^1_{\cd_\theta}(Y,TY))$ and the harmonic one-forms,
\begin{equation*}
H^1_{\cd_\theta}(Y,TY))\cong{\cal\check H}^1(Y,TY)\:,
\end{equation*}
where the harmonic forms are in the kernel of the laplacian
\begin{equation*}
\check\Delta_\theta\::\;\;\;\Omega^*(Y,TY)\rightarrow\Omega^*(Y,TY)\:,
\end{equation*}
which is constructed using the Levi-Civita connection and the $G_2$-holonomy metric. It is easy to check that the diffeomorphism gauge which sets $\alpha_t=0$ in \eqref{eq:g2holmod5} ensuring the $\cd_\theta$-closure of $h_t$, also makes $h_t$ harmonic with respect to the Laplacian $\check\Delta_\theta$ as an element of $\Omega^1(Y,TY)$. Indeed, recall that in this gauge we have that the three-form
\begin{equation*}
\chi_t=\frac{1}{2}{h_{t\,a}}^d\varphi_{bcd}\,\dd x^{abc}
\end{equation*}
is harmonic. Applying $\dd^\dagger$ to $\chi_t$ we find
\begin{equation*}
\left(\nabla^a{h_{t\,a}}^d\right)\varphi_{bcd}+2\left(\nabla^a{h_{t\,[b}}^d\right)\varphi_{c]ad}=0\:.
\end{equation*}
The last term vanishes by the symmetry property of $h_{t\,ab}$ and the $\cd_\theta$-closure of $h_t$ as an element of $\Omega^1(Y,TY)$. From this it follows that 
\begin{equation*}
\nabla^a{h_{t\,a}}^d=0\:,
\end{equation*}
establishing the co-closure of $h_t$. Now, the harmonic forms further decompose as
\begin{equation*}
{\cal\check H}^1(Y,TY)={\cal\check S}^1(Y,TY)\oplus{\cal\check A}^1(Y,TY)\:,
\end{equation*}
where ${\cal\check S}^1(Y,TY)$ denote the {\it symmetric} elements of ${\cal\check H}^1(Y,TY)$, viewed as a $(7\times7)$-matrix. These are precisely the traceless symmetric deformations $h_t$, plus the singlet deformation corresponding to a re-scaling of $\varphi$. Together, these span all the non-trivial deformations of the three-form $\varphi$. We must therefore have
\begin{equation*}
{\cal\check S}^1(Y,TY)\cong H^3(Y)\:.
\end{equation*}
${\cal\check A}^1(Y,TY)$ denote the {\it anti-symmetric} elements of ${\cal\check H}^1(Y,TY)$, viewed as a $(7\times7)$-matrix. 

There is a further decomposition of ${\cal\check A}^1(Y,TY)$ into the $\mathbf 7$ and $\mathbf{14}$ representations
\begin{equation*}
{\cal\check A}^1(Y,TY)={\cal\check A}_{\mathbf 7}^1(Y,TY)\oplus {\cal\check A}_{\mathbf{14}}^1(Y,TY)\:.
\end{equation*}
However, on a compact manifold of $G_2$-holonomy it can be shown that 
\begin{equation*}
{\cal\check A}_{\mathbf 7}^1(Y,TY)=0\:.
\end{equation*}
The proof of this follows a similar procedure to the argument of section \ref{sec:geommod}, where it was shown that the one-form $\alpha_t$ could be set to zero by an appropriate diffeomorphism. Basically, one can use an element of ${\cal\check A}_{\mathbf 7}^1(Y,TY)$ to construct a $\cd$-harmonic one-form. This contradicts the fact that there are no such one forms on a compact $G_2$-holonomy manifold. The remaining $\mathbf{14}$-representation has an interpretation as $B$-field deformations \cite{deBoer:2005pt}. We will come back to these in more detail in a future publication \cite{delaOssa}, but we note that by a similar computation as that of \eqref{eq:FcheckLambda}, we can easily show that 
\begin{equation*}
{\cal\check A}_{\mathbf{14}}^1(Y,TY)\subseteq\ker(\check{\cal F})\:.
\end{equation*}
We can hence extend the notion of ${\cal TM}$ to include both the metric deformations and the $B$-field deformations, with the further requirement from the instanton condition that we need to restrict to elements $\Delta_t\in H^1_{\cd_\theta}(Y,TY))$ in the kernel of ${\cal \check F}$. Note that $\Delta_t$ can in principle include the $B$-field deformations in the $\mathbf{14}$-representation as well.

We can rephrase this result in terms of a cohomology group defined on an extension bundle $E$. Define the bundle $E$ which is the extension of $TY$ by the bundle ${\rm End}(V)$, given by the short exact sequence
\begin{equation}
\label{eq:SeS}
0 \longrightarrow {\rm End}(V)\longrightarrow E \longrightarrow TY\longrightarrow 0~,
\end{equation}
with extension class $\check{\cal F}$, and a connection ${\cal D}_E$ on the bundle $E$
\begin{equation*}
{\cal D}_E = \left(
\begin{array}{cc}
\check \dd_A &  \check{\cal F}\\
0 &  \cdth
\end{array}
\right)~.
\end{equation*}
It is not too difficult to show that this connection satisfies ${\cal D}_E^2= 0$ by equation \eqref{eq:idcohom}. The resemblance of the above sequence with that of the Atiyah algebroid \cite{atiyah1957complex} is clear, and it tempting to suggest that the infinitesimal moduli space is counted by the first cohomology as in that case. Let us see if this is correct.

Consider the cohomology group $H_{{\cal D}_E}^1(Y, E)$ and let
\[ x = \left(
\begin{array}{c}
\alpha\\
\Delta
\end{array}
\right)\in H_{{\cal D}_E}^1(Y,E)
~.
\]
where $\alpha$ is a one form with values in ${\rm End}(V)$ and $\Delta$ is a one form with values in $TY$.
Then a ${\cal D}_E$-closed one form $x$ is equivalent to
\begin{equation*}
{\cal D}_E\, x =
\left(
\begin{array}{cc}
\check \dd_A &  \check{\cal F}\\
0 &  \check \dth
\end{array}
\right)\,
\left(
\begin{array}{c}
\alpha\\
\Delta
\end{array}
\right)
=
\left(
\begin{array}{c}
\check \dd_A\alpha + \check{\cal F}(\Delta)\\
\cdth\Delta
\end{array}
\right)\,= 0
~.
\end{equation*}
That is
\[ \check \dd_A\alpha + \check{\cal F}(\Delta) = 0~,\qquad
\cdth\Delta = 0~.\]
These are just the equations which must be satisfied by the moduli of the instanton connection on the bundle $V$ over $Y$, together with the variations of the $B$-field and 
variations of the $G_2$ holonomy structure on $Y$ which preserve the instanton conditions.
Consider now one forms $x$ which are ${\cal D}_E$-exact
\[ x = {\cal D}_E \lambda~,\]
for some section $\lambda$ of $E$.
Let
\[ \lambda
= \left(
\begin{array}{c}
\epsilon\\
\delta
\end{array}
\right)~.\]
Then
\[ \check \dd_A\epsilon + \check{\cal F}(\delta) = \alpha~,\qquad
\cdth\delta = \Delta~.\]
Modulo such ${\cal D}_E$-exact terms, the second equation then tells us that 
\begin{equation*}
\Delta\in H^1_{\cd_\theta}(Y,TY))\:,
\end{equation*}
which are precisely the metric deformations preserving $G_2$-holonomy and $B$-field deformations inclusive as described above. Fixing the gauge of these deformations by e.g. considering harmonic forms, we are free to set $\cdth\delta=0$. However, as there are no globally covariantly constant vector fields on a manifold of $G_2$-holonomy, we have
\begin{equation*}
H^0_{\cd_\theta}(Y,TY) = 0\:.
\end{equation*}
It follows that $\delta=0$. The first equation then says that the bundle moduli should be modded out by the remaining gauge symmetries, which are given by exact forms $\cd_A\epsilon$. In summary, we can claim that the infinitesimal moduli space is given by
\begin{equation}
\label{eq:kernel2}
{\cal T}{\cal M} =H^1_{\check\dd_A}(Y, {\rm End}(V))\oplus {\rm ker}(\check{\cal F})=H^1_{{\cal D}_E}(Y, E)~.
\end{equation}
Indeed, this can be seen by computing the long exact sequence in cohomology associated to  the short exact sequence $E$
\begin{align*}
0\rightarrow H^1_{\cd_A}(Y,\End(V))\xrightarrow{i} H^1_{{\cal D}_E}(E)\xrightarrow{p}H^1_{\cd_\theta}(Y,TY))\xrightarrow{\check{\cal F}} H^2_{\cd_A}(Y,\End(V))\rightarrow...\:.
\end{align*}
We can the compute $H^1_{{\cal D}_E}(E)$ using exactness of the sequence. That is, we have
\begin{equation*}
H^1_{{\cal D}_E}(Y,E)\cong\textrm{Im}(i)\oplus\textrm{Im}(p)
\end{equation*}
Indeed by injectivity of the first map we see that $\textrm{Im}(i)\cong H^1_{\cd_A}(Y,\End(V))$, while $\textrm{Im}(p)=\ker(\check{\cal F})$. The result \eqref{eq:kernel2} follows.

It is interesting to see that the infinitesimal deformations of the extension bundle $E$, defined by the differential ${\cal D}_E$, are computed exactly as in the even-dimensional holomorphic case \cite{atiyah1957complex} by the first cohomology $H^1_{{\cal D}_E}(Y,E)$. Indeed, the bundle valued cohomologies we have defined in this paper have many similarities with their holomorphic cousins. We will study many of these similarities further in \cite{delaOssa}.

\subsection{Higher order obstructions and integrability}
\label{sec:higherdef}
Let us now go a step further and consider obstructions to higher order deformations of the instanton bundles. To do so, we will keep the $G_2$ geometry fixed for now. We will return to higher order deformations of the instanton condition together with the base, or equivalently the above defined extension $E$ in a future publication. In this section, we will also return to setups where the base geometry is some integrable $G_2$-structure manifold, generalising the $G_2$-holonomy condition.

We Let $\{A,B,C,..\}$ denote an infinitesimal direction in the vector space spanned by $H^1_{\cd_A}(Y,\End(V))$. As we saw in section \ref{sec:BundleCohomology}, the triple $(\check\Lambda^*(\End(V)),\cd_A,[\cdot,\cdot])$ forms a differentially graded Lie algebra. Furthermore, inserting a finite change of the connection
\begin{equation*}
A\rightarrow A+\Delta A
\end{equation*}
into the instanton condition produces the following condition on $\Delta A$
\begin{equation}
\label{eq:MA}
\cd_A\Delta A+\frac{1}{2}[\Delta A,\Delta A]=0\:.
\end{equation}
That is, $\Delta A$ should be a Maurer-Cartan element of the differentially graded Lie algebra, as is usual when one studies these kinds of deformation problem. Let $X^A$ correspond to the bundle moduli. We now assume that $\Delta A$ can be expanded in moduli fields as
\begin{equation*}
\Delta A=X^A\p_AA+\frac{1}{2}X^AX^B\,\p_A\p_BA+...\:.
\end{equation*}
Since the $X^A$ are arbitrary, if we plug this expansion back into \eqref{eq:MA} we must have
\begin{align*}
\cd_A\p_B A&=0\\
\cd_A\p_B\p_CA+[\p_BA,\p_CA]&=0\:,
\end{align*}
and so on. The first equation is just the statement that the infinitesimal deformations take values in $H^1_{\cd_A}(Y,\End(V))$. The second equation gives the first obstruction to these deformation. Indeed, recall from Theorem \ref{tm:ringA}, that $[\cdot,\cdot]$ is a  well-defined product in cohomology. Thus, this product of infinitesimal variations of $A$ is required to vanish in cohomology, otherwise we have an obstruction to the infinitesimal deformations of $A$ at second order in perturbation theory. Note that higher order deformations give higher order obstructions in a similar fashion. In string compactifications, is expected that these obstructions correspond to Yukawa couplings in the lower-dimensional effective theory, a question we hope to return to in future publications.

Let us also take a moment to speculate about the behaviour of the higher order deformations when we also include deformations of the base. In this case, it is perhaps instructive to restrict to bundles over a base with $G_2$-holonomy, whose moduli space is unobstructed \cite{joyce1996:1,joyce1996:2,joyce2000}. We expect the full deformation problem to give rise to a similar Maurer-Cartan equation, but now with a $\Delta\in\Lambda^1(E)$, so that
\begin{equation}
{{\cal D}_E}\Delta+\frac{1}{2}[\Delta,\Delta]=0\:,
\end{equation}
where $[\cdot,\cdot]$ is an appropriate bracket on $\Omega^*(E)$ to be discerned. Should this happen, one would get that obstruction classes counted by $H^2_{{\cal D}_E}(Y,E)$, just like for the Atiyah algebroids in complex geometry. We hope to return to this question in the future.


\section{Conclusions and Outlook}
In this paper, we have studied the infinitesimal deformations of a pair $(Y,V)$, where $Y$ is a manifold of $G_2$ holonomy, and $V$ is a vector bundle with a connection satisfying the $G_2$ instanton condition. We found that the structure of the infinitesimal moduli space very much resembles that of the Atiyah algebroid in the case of holomorphic bundles over complex manifolds \cite{atiyah1957complex}. Indeed, we found that the infinitesimal geometric deformations of the base, corresponding to elements of $H^1_{\cd_\theta}(Y,TY)$, must be in the kernel of an appropriate $G_2$ generalisation of the Atiyah map, 
\begin{equation*}
\check{\cal F}\::\;\;\;H^1_{\cd_\theta}(Y,TY)\rightarrow H^2_{\cd_A}(Y,\End(V))\:,
\end{equation*}
just like in the holomorphic case. The map $\check{\cal F}$ is given in terms of the curvature of the bundle. 

This structure is very interesting and prompts further investigation. In particular, just as the K\"ahler condition on the base is not necessary in the holomorphic case, the $G_2$ holonomy condition can also be relaxed in the seven-dimensional case. Indeed, as we have seen we only need the base to have an {\it integrable $G_2$ structure} in order to define the $\check \dd_A$-cohomologies which are used in computing the infinitesimal deformations. We have taken some steps in this direction in the current paper, and will investigate this further in an upcoming publication \cite{delaOssa}. Furthermore, in order to make more contact with physics and the heterotic string, one also needs to consider the heterotic Bianchi Identity. We will investigate this further in \cite{delaOssa}, but give a brief prelude here. There is evidence that the combined structure fits neatly into a double extension of the form
\begin{equation*}
0\rightarrow T^*Y\rightarrow{\cal Q}\rightarrow E\rightarrow 0\:,
\end{equation*}
where $E$ is the $G_2$ Atiyah algebroid of section \ref{sec:bundmod}, just as in the holomorpic case of \cite{Anderson:2014xha, delaOssa:2014cia, Garcia-Fernandez:2015hja}. The corresponding extension map is defined by the Bianchi Identity. Equivalently, as in the holomorphic case we hope to show that the system of heterotic BPS-equations together with the Bianchi Identity can be used to construct a differential $\check{\cal D}$ on ${\cal Q}$, and that the infinitesimal moduli are counted by $H^1_{\check{\cal D}}({\cal Q})$ with respect to this differential. 

Having discussed the infinitesimal moduli space, we hope to also address the issue of higher order and integrable deformations. There is a lot of mathematical literature on the deformations of the holomorphic Atiyah algebroid, see e.g. \cite{atiyah1957complex, donaldson1989connected, huybrechts1995tangent, Anderson:2011ty}. Since the structure of the corresponding differential complexes and extensions are so similar in the $G_2$ case, there is hope that many of the results of the Atiyah algebroid can be carried over to the $G_2$ setting without too much effort. We hope to investigate some of these aspects in the future.

One other interesting application of the results of this paper comes when we consider reductions to $SU(3)$ structure three-folds. Indeed, upon reducing on
\begin{equation*}
Y=X_6\times\mathbb{R}\:,
\end{equation*}
where $X_6$ is a complex three-fold with an appropriate $SU(3)$ structure, it is easy to see that the seven-dimensional instanton condition splits into the requirement that the bundle is holomorphic, in addition to the Yang-Mills condition,
\begin{equation*}
F^{(0,2)}=0\:,\;\;\;g^{a\bar b}F_{a\bar b}=0\:.
\end{equation*}
Ignoring issues related to compactness of $Y$, it is interesting to see how both these conditions can be incorporated in the same structure $\check\dd_A$. Due to this fact, it is also conceivable that one can learn a lot about the $\check\dd_A$-cohomologies by what is already known about stable holomorphic bundles, and this is an interesting direction of further investigation.

\section*{Acknowledgements}
The authors would like to thank Mario Garcia-Fernandez, Spiro Karigiannis, George Papadopoulos, Carlos Shahbazi and Carl Tipler  for interesting discussions. ML and EES thank the organisers for the 2016 Benasque workshop {\it Superstring solutions, supersymmetry and geometry} where part of this work was finalised. ML acknowledges partial support from the COST Short Term Scientific Mission MP1210-32976, and thanks the Mathematical Institute at Oxford University for hospitality  during part of the preparation of this paper. XD and EES would like to thank the Department of Physics and Astronomy at Uppsala University for its hospitality when part of the work was finalised. The work of EES, made within the Labex Ilp (reference Anr-10-Labx-63), was supported by French state funds managed by the Agence nationale de la recherche, as part of the programme Investissements d'avenir under the reference Anr-11-Idex-0004-02.  The work of XD is supported in part by the EPSRC grant BKRWDM00.

.


\begin{appendix}

\section{Formulas}\label{app:formulas}

In this appendix we gather a number of formulas and identities which we have used in this paper.
The contractions between $\varphi$ and $\psi$ can be found in a number of references, see for example \cite{Bryant:2005mz}. More useful relations can be found in \cite{delaOssa:2014lma} 
\begin{align}
\varphi^{abc}\, \varphi_{abc} &=42~,
\\
\varphi^{acd}\, \varphi_{bcd} &= 6\, \delta_b^a~,
\label{eq:g2ident2}\\
\varphi^{eab}\, \varphi_{ecd} &= 2\, \delta_{[c}^a\, \delta_{d]}^b
+ \psi^{ab}{}_{cd}~.
\label{eq:g2ident3}
\\[7pt]
\varphi^a{}^{d_1 d_2}\, \psi_{bc d_1 d_2} &= 4\, \varphi^a{}_{bc}~,
\label{eq:g2ident4}\\
\varphi^{abf}\, \psi_{cdef} &= - 6\, \delta^{[a}{}_{[c}\, \varphi^{b]}{}_{de]}~,
\label{eq:g2ident5}
\\[7pt]
\psi^{abcd}\psi_{abcd} &= 7\cdot 24 = 168~,
\\
\psi^{acde}\psi_{bcde} &= 24\, \delta_b^a~,\\
\psi^{abe_1e_2}\psi_{cde_1e_2} &= 8\, \delta_{[c}^a\, \delta_{d]}^b + 2\,\psi^{ab}{}_{cd}~,\\
\psi^{a_1a_2a_3c}\psi_{b_1b_2b_3c} &= 6\, \delta_{[b_1}^{a_1}\, \delta_{b_2}^{a_2} \, \delta_{b_3]}^{a_3}
+ 9\, \psi^{[a_1a_2}{}_{[b_1b_2}\, \delta^{a_3]}_{b_3]} - \varphi^{a_1a_2a_3}\,\varphi_{b_1b_2b_3}~,
\\
\psi^{a_1a_2a_3a_4}\psi_{b_1b_2b_3b_4} &= 24\, \delta_{[b_1}^{a_1}\, 
\delta_{b_2}^{a_2} \, \delta_{b_3}^{a_3}\, \delta_{b_4]}^{a_4}
\\[2pt]
&\qquad
+ 72\, \psi^{[a_1a_2}{}_{[b_1b_2}\, \delta_{b_3}^{a_3}\,\delta^{a_4]}_{b_4]}- 16\,  \varphi^{[a_1a_2a_3}\,\varphi_{[b_1b_2b_3}\, \delta^{a_4]}_{b_4]}~.
\end{align}

Let $\alpha$ be a one form (possibly with values in some bundle)
\begin{align}
\varphi\lrcorner(\alpha\wedge\varphi) &= (\alpha\lrcorner\psi)\lrcorner\psi = -4\, \alpha~, 
\label{eq:onepsipsi}\\
\psi\lrcorner(\alpha\wedge\psi) &= (\alpha\lrcorner\varphi)\lrcorner\varphi =3\, \alpha~, 
\label{eq:onephiphi}\\
\varphi\lrcorner(\alpha\wedge\psi) &= 2\, \alpha\lrcorner\varphi~.
\label{eq:onephipsi}
\end{align}

Let $\alpha$ be a two form (possibly with values in some bundle)
\begin{align}
\varphi\lrcorner(\alpha\wedge\varphi) &= -\, (\alpha\lrcorner\psi)\lrcorner\psi = 2\, \alpha + \alpha\lrcorner\psi~,
\label{eq:twopsipsi}\\
\psi\lrcorner(\alpha\wedge\psi) &= (\alpha\lrcorner\varphi)\lrcorner\varphi = 3\, \pi_7(\alpha) = 
\alpha + \alpha\lrcorner\psi~.\label{eq:twophiphi}
\end{align}

\begin{lemma}
Let $\beta\in \Lambda^2_{14}$  (possibly with values in some bundle).  Then the identities
\begin{align}
 \label{eq:beta14id}
\beta_{d[a}\, \varphi_{bc]}{}^d &= 0~,\\
 \beta^{ae}\, \psi_{ebcd} &= - 3\, \beta_{e[b}\, \psi_{cd]}{}^{ae}~,
 \label{eq:betapsi}
 \end{align}
follow from $\beta = - \beta\lrcorner\psi$.  
\end{lemma}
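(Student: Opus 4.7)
Both identities express the fact that $\Lambda^2_{14}\subset\Lambda^2$ is precisely the Lie algebra $\mathfrak{g}_2\subset\mathfrak{so}(7)$, so that an element $\beta\in\Lambda^2_{14}$ acts trivially on the $G_2$-invariant tensors $\varphi$ and $\psi$. In other words, the identities are respectively $L_\beta\varphi=0$ and $L_\beta\psi=0$, re-expressed after raising one index using the antisymmetry of $\beta$ together with the total antisymmetry of $\varphi$ and $\psi$. Since the lemma is phrased as following from $\beta=-\beta\lrcorner\psi$, my plan is to give a self-contained computational derivation from this characterization using the contraction identities already gathered in the appendix.

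For \eqref{eq:beta14id}, I would first expand the antisymmetrization,
\[
3\,\beta_{d[a}\varphi_{bc]}{}^{d} \;=\; \beta_{da}\varphi_{bc}{}^{d} + \beta_{db}\varphi_{ca}{}^{d} + \beta_{dc}\varphi_{ab}{}^{d} \;=:\; \omega_{abc},
\]
which is manifestly a three-form on $Y$, and then show $\omega=0$ by verifying the vanishing of each $G_2$-irreducible component in the decomposition $\Lambda^3=\Lambda^3_1\oplus\Lambda^3_7\oplus\Lambda^3_{27}$. For the $\Lambda^3_1$ part, contracting with $\varphi^{abc}$ and using \eqref{eq:g2ident2} reduces $\omega\lrcorner\varphi$ to a multiple of the trace of $\beta$, which vanishes by antisymmetry. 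For the $\Lambda^3_7$ part, contracting with $\psi^{abcd}$ and using \eqref{eq:g2ident4} to evaluate $\varphi_{bc}{}^{e}\psi^{abcd}=4\,\varphi^{ead}$ yields an expression proportional to $(\beta\lrcorner\varphi)^{d}$; the hypothesis $\beta=-\beta\lrcorner\psi$ combined with \eqref{eq:twophiphi} forces $\beta\lrcorner\varphi=0$, so this component vanishes as well. For the remaining $\Lambda^3_{27}$ piece, the cleanest route is to substitute $\beta_{da}=-\tfrac12\beta^{ef}\psi_{efda}$ directly into $\omega$ and apply \eqref{eq:g2ident5} to each resulting $\psi\!\cdot\!\varphi$ contraction; after collecting terms, the right-hand side reduces to a multiple of $\omega$ itself, so $\omega=0$.

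For \eqref{eq:betapsi}, the strategy is entirely parallel. I would substitute $\beta^{ae}=-\tfrac12\,\beta^{fg}\psi_{fg}{}^{ae}$ into the left-hand side and contract using the four-index $\psi$-$\psi$ identity $\psi^{abe_1e_2}\psi_{cde_1e_2}=8\,\delta^{a}_{[c}\delta^{b}_{d]}+2\,\psi^{ab}{}_{cd}$ listed above. This splits the result into two pieces: the Kronecker piece, which upon antisymmetrization over $b,c,d$ is recognised as an instance of the first identity \eqref{eq:beta14id} contracted with $\beta$ and hence vanishes; and the $\psi$-piece, which reassembles into exactly $3\,\beta_{e[b}\psi_{cd]}{}^{ae}$ with the overall sign required by the statement. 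The principal technical obstacle throughout is keeping track of signs and index positions across many reorderings, especially in establishing the $\Lambda^3_{27}$ step of the first identity; I would sanity-check the final signs against an explicit elementary $\beta\in\Lambda^2_{14}$ (for instance $\beta=e^{13}+e^{46}$ in an orthonormal frame adapted to the standard $G_2$ structure on $\mathbb{R}^7$, which is readily verified to satisfy $\beta\wedge\psi=0$).
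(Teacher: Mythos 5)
Your treatment of the first identity is essentially the paper's. The paper simply substitutes $\beta_{ad}=-\tfrac12\,\beta_{ef}\,\psi^{ef}{}_{ad}$ into one term of the cyclic sum and applies $\varphi^{abf}\,\psi_{cdef}=-6\,\delta^{[a}{}_{[c}\,\varphi^{b]}{}_{de]}$, which immediately returns minus the other two terms and hence the vanishing of the whole cyclic sum; your preliminary checks of the $\Lambda^3_1$ and $\Lambda^3_7$ components of $\omega$ are correct but redundant, since the substitution step you reserve for the $\Lambda^3_{27}$ piece in fact proves the entire identity in one stroke (your version gives $\omega=-2\omega$, the paper's gives $\omega=0$ directly).

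For the second identity, however, two of the steps you describe would fail as written. After substituting $\beta^{ae}=-\tfrac12\,\beta_{f_1f_2}\,\psi^{f_1f_2ae}$, the product $\psi^{f_1f_2ae}\,\psi_{ebcd}$ is a contraction over a \emph{single} index, so the double-contraction identity $\psi^{abe_1e_2}\psi_{cde_1e_2}=8\,\delta^{a}_{[c}\delta^{b}_{d]}+2\,\psi^{ab}{}_{cd}$ that you invoke does not apply; you need the three-term single-contraction identity $\psi^{a_1a_2a_3c}\psi_{b_1b_2b_3c}=6\,\delta^{a_1}_{[b_1}\delta^{a_2}_{b_2}\delta^{a_3}_{b_3]}+9\,\psi^{[a_1a_2}{}_{[b_1b_2}\delta^{a_3]}_{b_3]}-\varphi^{a_1a_2a_3}\varphi_{b_1b_2b_3}$, whose $\varphi\varphi$ term must also be disposed of (it dies because $\beta\lrcorner\varphi=0$, which your first part does establish). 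More seriously, the Kronecker piece $6\,\beta_{[bc}\delta^{a}_{d]}$ does \emph{not} vanish on its own, and it is not an instance of \eqref{eq:beta14id} contracted with $\beta$. In the paper's computation it is cancelled by the part of the $9\,\psi\delta$ term proportional to $\beta_{f_1f_2}\,\psi^{f_1f_2}{}_{[bc}\,\delta^{a}_{d]}$, after using the defining relation in the form $\beta_{f_1f_2}\,\psi^{f_1f_2}{}_{bc}=-2\,\beta_{bc}$ a \emph{second} time; only the remaining part of the $\psi\delta$ term survives and reassembles into $-3\,\beta_{e[b}\,\psi_{cd]}{}^{ae}$. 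Your overall strategy — substitute the defining relation and contract with the appendix identities — is the right one and matches the paper's, but as described the second computation would not close.
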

\begin{proof}
\[
\beta = - \beta\lrcorner\psi\quad\Longrightarrow\quad -2\, \beta_{ab} = \beta_{cd}\, \psi^{cd}{}_{ab}~.
\]
We use this to prove both identities.

For the first one, we contract with $\varphi$ as follows:
\begin{align*}
\beta_{ad}\, \varphi^d{}_{bc} &= - \frac{1}{2}\, \beta_{ef}\, \psi^{ef}{}_{ad}\,\varphi^d{}_{bc}
= 3\, \beta_{ef}\, \delta_{[b}{}^{[e}\, \varphi_{c]}{}^{fd]}\, g_{da}
= 2\, g_{da}\, \beta_{ef}\, \delta_{[b}{}^e\, \varphi_{c]}{}^{fd}\\
&= - \beta_{bd}\, \varphi^d{}_{ca} - \beta_{cd}\varphi^d{}_{ab}~.
\end{align*}

For the second identity, we contract with $\psi$ as follows:
\begin{align*}
-2\, \beta^{ae}\, \psi_{ebcd} &= \beta_{f_1f_2}\, \psi^{f_1f_2ae}\, \psi_{ebcd}
\\
& = - \beta_{f_1f_2}\, (6\, \delta^{[f_1}_b\, \delta^{f_2}_c\, \delta^{a]}_d
+ 9\, \psi^{[f_1f_2}{}_{[bc}\, \delta^{a]}_{d]})
\\
&= - 6 \, \beta_{[bc}\, \delta^a_{d]}
 - 3\, \beta_{f_1f_2}\, ( \psi^{f_1f_2}{}_{[bc}\, \delta^{a}_{d]} + 2 \psi^{f_2a}{}_{[bc}\, \delta^{f_1}_{d]})
 \\
&= - 6 \, \beta_{[bc}\, \delta^a_{d]} + 6 \, \beta_{[bc}\, \delta^a_{d]}
- 6\, \psi^{ea}{}_{[bc}\, \beta_{d]e}~.
\end{align*}
Hence
\[ \beta^{ae}\, \psi_{ebcd} = - 3\, \psi^{ea}{}_{[bc}\, \beta_{d]e}
= - 3\, \beta_{e[b}\, \psi_{cd]}{}^{ae}
~.\]

\end{proof}

\begin{lemma}\label{lem:idfortwo14}
Let $\beta\in\Lambda^2_{14}(Y,E)$, where $E$ is a bundle over $Y$, so that
\[ \beta\wedge\psi = 0~.\]
Then, this equation is equivalent to 
\begin{equation}
\frac{1}{3!}\,\beta\wedge\psi_{bcda}\, \dd x^{bcd} 
= \beta_{ab}\,\dd x^b\wedge\psi~.\label{eq:idtwo14}
\end{equation}
Hence
\[ \pi_{14}(\beta\wedge\psi_{bcda}\, \dd x^{bcd}) = 0~.\]
\end{lemma}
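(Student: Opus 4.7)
The plan is to recognise the identity as a direct consequence of the Leibniz rule for the interior product $\iota_X$, applied to the hypothesis $\beta\wedge\psi=0$. Since $\iota_X$ is a graded derivation of degree $-1$ and $\beta$ is of even degree, contracting with the coordinate vector field $\partial_a$ gives
\[
0 = \iota_{\partial_a}(\beta\wedge\psi) = (\iota_{\partial_a}\beta)\wedge\psi + \beta\wedge(\iota_{\partial_a}\psi),
\]
which, once each term is rewritten in components, is exactly the claimed equation.

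Concretely, $\iota_{\partial_a}\beta = \beta_{ab}\,\dd x^b$ is immediate, and $\iota_{\partial_a}\psi = \tfrac{1}{3!}\psi_{abcd}\,\dd x^{bcd}$. The only bookkeeping step is to migrate the free index to the end of $\psi$: cycling $a$ past three indices produces a sign, so $\psi_{abcd}=-\psi_{bcda}$ and hence $\iota_{\partial_a}\psi = -\tfrac{1}{3!}\psi_{bcda}\,\dd x^{bcd}$. Substituting these into the Leibniz identity yields \eqref{eq:idtwo14}. For the converse, if \eqref{eq:idtwo14} holds for every free index $a$, the same chain reverses to give $\iota_{\partial_a}(\beta\wedge\psi)=0$ for all $a$, and therefore $\beta\wedge\psi=0$, since a nonzero differential form on a manifold cannot be annihilated by the interior product with every coordinate vector field.

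The corollary about $\pi_{14}$ then follows at once from the right-hand side of \eqref{eq:idtwo14}: for each fixed $a$, the expression $\beta_{ab}\,\dd x^b\wedge\psi$ is a $1$-form wedged with $\psi$, and the map $\alpha\mapsto\alpha\wedge\psi$ realises the canonical isomorphism $\Lambda^1(Y)\xrightarrow{\sim}\Lambda^5_{7}(Y)$ that picks out the $\mathbf 7$ component of the $G_2$ decomposition $\Lambda^5(Y) = \Lambda^5_7(Y)\oplus\Lambda^5_{14}(Y)$. Its $\mathbf{14}$-projection therefore vanishes. There is no genuine obstacle in the argument; the only care needed is with the sign $\psi_{abcd}=-\psi_{bcda}$ and with remembering that $\beta$ has even degree in the Leibniz rule.
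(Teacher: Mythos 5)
Your proof is correct and is essentially the paper's argument in coordinate-free clothing: the paper's expansion of $3\,\beta_{[e_1e_2}\wedge\psi_{bcda]}\,\dd x^{e_1e_2bcd}$ with the index $a$ held free is precisely the component form of the Leibniz rule for $\iota_{\partial_a}$ applied to $\beta\wedge\psi=0$, and your identification of the right-hand side as lying in $\Lambda^5_7(Y)$ matches the paper's concluding observation. Your explicit treatment of the converse direction is a small bonus the paper leaves implicit, but the underlying mechanism is the same.
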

\begin{proof}
Equation $\beta\wedge\psi = 0$ can be written as
\[0= 6\, \beta_{[e_1e_2}\wedge\psi_{bcda]}~.\]
Then
\begin{align*}
0&= 3\, \beta_{[e_1e_2}\wedge\psi_{bcda]}\, \dd x^{e_1e_2bcd} =
(3\, \beta_{e_1e_2}\, \psi_{bcda} - \beta_{ae_1}\, \psi_{e_2bcd})\, \dd x^{e_1e_2bcd}
\\
&= 4\, \beta\wedge \psi_{bcda}\, \dd x^{bcd} - 4!\, \beta_{ab}\, \dd x^b\wedge\psi~,
\end{align*}
hence equation \eqref{eq:idtwo14} follows.  The  statement that
\[  \pi_{14}(\beta\wedge\psi_{bcda}\, \dd x^{bcd}) = 0~,\]
is just the observation that the right hand side of
equation \eqref{eq:idtwo14} is a five-form in $\Lambda^5_{7}$.

\end{proof}

\begin{lemma}
Let $\beta\in\Lambda^2_7$ (perhaps $\beta$ also takes values in some vector bundle).  Then $\beta$ satisfies the identity
\begin{equation}
\beta_{da}\, \varphi_{bc}{}^d = 
\beta_{d[b}\, \varphi_{c]a}{}^d  + g_{\varphi\, a[b}\, (\beta\lrcorner\varphi)_{c]}~.
\label{eq:id2rep7}
\end{equation}
\end{lemma}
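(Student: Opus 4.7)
My plan is to exploit the standard characterisation of $\Lambda^2_7$ as the image of the map $\alpha \mapsto \alpha\lrcorner\varphi$ from $T^*Y$, and then reduce both sides of the claimed identity to expressions in a one-form via the algebraic $\varphi$--$\psi$ contraction identities collected in Appendix~\ref{app:formulas}.

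First, I would recall that every $\beta \in \Lambda^2_7$ can be written as $\beta_{ab} = \alpha^e\,\varphi_{eab}$ for a unique one-form $\alpha$, and that by equation~\eqref{eq:onephiphi} one has $(\beta\lrcorner\varphi)_c = 3\,\alpha_c$. Substituting this parametrisation into the left-hand side, the computation reduces to evaluating the contraction $\varphi_{eda}\,\varphi_{bc}{}^d$. Using the cyclic symmetry of $\varphi$ to rewrite this as $\varphi^d{}_{ae}\,\varphi_{dbc}$, identity~\eqref{eq:g2ident3} gives
\[
\varphi_{eda}\,\varphi_{bc}{}^d = g_{ab}\,g_{ec} - g_{ac}\,g_{eb} + \psi_{aebc}~,
\]
so that the left-hand side of \eqref{eq:id2rep7} becomes $\alpha_c\,g_{ab} - \alpha_b\,g_{ac} + \alpha^e\,\psi_{aebc}$.

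Second, I would treat the two terms on the right-hand side in the same fashion. The term $\beta_{d[b}\varphi_{c]a}{}^d$ is handled by applying the same contraction identity twice (with appropriate index relabellings) and antisymmetrising in $(b,c)$; the two $\psi$--contributions combine using the total antisymmetry of $\psi$ to give $\alpha^e\,\psi_{aebc}$ plus a metric piece $\tfrac{1}{2}(\alpha_b\,g_{ac} - \alpha_c\,g_{ab})$. The remaining term $g_{a[b}(\beta\lrcorner\varphi)_{c]}$ is immediate from $(\beta\lrcorner\varphi)_c = 3\alpha_c$, and yields $\tfrac{3}{2}(\alpha_c\,g_{ab} - \alpha_b\,g_{ac})$. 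Adding these contributions matches the left-hand side exactly.

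The only genuine subtlety is bookkeeping of permutation signs on $\psi_{abcd}$: one must verify, for example, that $\psi_{beca} = \psi_{aebc}$, which follows from two transpositions. No new input beyond cyclicity of $\varphi$, total antisymmetry of $\psi$, and identity~\eqref{eq:g2ident3} is required, so the proof is routine once the substitution $\beta = \alpha\lrcorner\varphi$ is made; the main obstacle is simply this careful tracking of indices and signs.
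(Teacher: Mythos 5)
Your proof is correct, but it follows a genuinely different route from the paper's. The paper starts from the eigenvalue characterisation of $\Lambda^2_7$, namely $2\beta = \beta\lrcorner\psi$, i.e. $4\beta_{ad} = \beta_{bc}\psi^{bc}{}_{ad}$, contracts with $\varphi_{bc}{}^d$, and invokes the mixed contraction identity \eqref{eq:g2ident5} for $\varphi^{abf}\psi_{cdef}$ to land directly on \eqref{eq:id2rep7}. You instead parametrise $\beta$ via the isomorphism $T^*Y \cong \Lambda^2_7$, writing $\beta_{ab} = \alpha^e\varphi_{eab}$, and reduce everything to the pure $\varphi$--$\varphi$ contraction \eqref{eq:g2ident3}; I have checked your intermediate expressions (the left-hand side $\alpha_c g_{ab} - \alpha_b g_{ac} + \alpha^e\psi_{aebc}$, the antisymmetrised term $\alpha^e\psi_{aebc} + \tfrac12(\alpha_b g_{ac} - \alpha_c g_{ab})$, the metric term $\tfrac32(\alpha_c g_{ab} - \alpha_b g_{ac})$, and the sign bookkeeping $\psi_{beca} = \psi_{aebc}$, $\psi_{ceba} = -\psi_{aebc}$) and they all combine as claimed. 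The trade-off is that your argument needs the (standard, and implicitly used elsewhere in the paper, e.g.\ in \eqref{eq:alphat}) fact that $\alpha \mapsto \alpha\lrcorner\varphi$ is a bijection from one-forms onto $\Lambda^2_7$, together with $(\alpha\lrcorner\varphi)\lrcorner\varphi = 3\alpha$ from \eqref{eq:onephiphi}, but in exchange it avoids any $\varphi$--$\psi$ contraction identity; the paper's version works directly with $\beta$ and is slightly shorter once \eqref{eq:g2ident5} is granted. Both are routine pointwise linear algebra and both remain valid for bundle-valued $\beta$.
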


\begin{proof}
A two form $\beta$ in $\Lambda^2_7$, must satisfy
\[ 2\, \beta = \beta\lrcorner\psi~,\]
or, equivalently
\[ 4\, \beta_{ad} = \beta_{bc}\, \psi^{bc}{}_{ad}~.\]
Contracting this with $\varphi_{bc}{}^d$ and using equation \eqref{eq:g2ident5} we have
\begin{align*}
4\, \beta_{ad}\, \varphi_{bc}{}^d &=
\beta_{e_1e_2}\, \psi^{e_1e_2}{}_{ad}\, \varphi_{bc}{}^d
= -6\, \beta_{e_1e_2}\, g_{\varphi\, a e_3}\,
\delta_{[b}^{[e_1}\, \varphi_{c]}{}^{e_2 e_3]} 
\\ 
&= -2 \, \beta_{e_1e_2}\,g_{\varphi\, a e_3}\, 
( 2\, \delta_{[b}^{e_1}\, \varphi_{c]}{}^{e_2 e_3} 
+ \delta_{[b}^{e_3}\, \varphi_{c]}{}^{e_1 e_2})
\\
&= -4\, \beta_{d [b}\,\varphi_{c]a}{}^d 
-2\, \beta_{de}\, g_{\varphi\, a[b}\, \varphi_{c]}{}^{de}
= -4 ( \beta_{d [b}\,\varphi_{c]a}{}^d 
+ g_{\varphi\, a[b}\, (\beta\lrcorner\varphi)_{c]})~,
\end{align*}
from which the identity \eqref{eq:id2rep7} follows. 
\end{proof}

\subsection{Identities involving Hodge duals}
Let $\alpha$ be a $k$-form and $\beta$ a $p+k$-form. Then
\be 
\alpha\lrcorner \beta = (-1)^{p(d-p-k)}\, *(\alpha\wedge*\beta)~,
\ee
\be
\label{eq:ddaggeralpha}
\dd^\dagger\alpha = - \frac{1}{(k-1)!}\,  g^{mn}\,\nabla^{LC}_m ( \alpha_{np_1\cdots p_{k-1}})
\, \dd x^{p_1}\wedge\cdots\wedge\dd x^{p_{k-1}}~.\ee

\subsection{Identities for derivatives of $\varphi$ and $\psi$}
 In our computations related to the moduli problem, we will need identities which relate derivatives of $\varphi$ and $\psi$ with the connection $\dth$.  We present these here by means of two lemmas. Note that the lemmas hold for manifolds with any $G_2$ structure, not only for $G_2$ holonomy. 
  
 \begin{lemma}\label{lem:idone}
Let $\psi$ be a four form on a seven dimensional manifold $Y$ (not necessarily a manifold with a $G_2$ structure).  Then we have the identity
\begin{equation*}
4\,\dd(\psi_{bcda}\, \dd x^{bcd})
 =  - 4!\, \partial_a\psi  + (\dd\psi)_{bcdea}\, \dd x^{bcde}~.
\end{equation*}
If the manifold $Y$ has a $G_2$ structure determined by $\varphi$ (not necessarily harmonic) with a connection $\nabla$ compatible with the $G_2$
structure we also have
\[ \partial_a\psi = - \frac{1}{3!}\, \theta_a{}^b\wedge \psi_{cdeb}\,\dd x^{cde}~.\]
where $\theta_a{}^b = \Gamma_{ac}{}^b\, \dd x^c$ and
$\Gamma_{ac}{}^b$ are the connection symbols of $\nabla$.
\end{lemma}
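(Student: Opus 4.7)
Both identities are local coordinate computations; the key conceptual input for the first is Cartan's magic formula and for the second is the compatibility $\nabla\psi = 0$. No representation-theoretic structure of $\psi$ as a $G_2$-invariant four-form is used in either step beyond its being a four-form.

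For the first identity, my plan is to recognise the left-hand side as the exterior derivative of a contraction. Using the four-form antisymmetry $\psi_{bcda} = (-1)^3\,\psi_{abcd} = -\psi_{abcd}$, one has
\[
\psi_{bcda}\,\dd x^{bcd} \;=\; -\,\psi_{abcd}\,\dd x^{bcd} \;=\; -3!\,\partial_a\lrcorner\psi~,
\]
so that
\[
4\,\dd\bigl(\psi_{bcda}\,\dd x^{bcd}\bigr) \;=\; -\,4!\;\dd(\partial_a\lrcorner\psi)~.
\]
Now I would invoke Cartan's magic formula applied to the coordinate vector field $\partial_a$, namely $\mathcal{L}_{\partial_a}\psi = \dd(\partial_a\lrcorner\psi) + \partial_a\lrcorner\dd\psi$, together with $\mathcal{L}_{\partial_a}\psi = \partial_a\psi$ in coordinates, to rewrite
\[
\dd(\partial_a\lrcorner\psi) \;=\; \partial_a\psi - \partial_a\lrcorner\dd\psi~.
\]
It remains to express $\partial_a\lrcorner\dd\psi$ in components: from $\partial_a\lrcorner\dd\psi = \tfrac{1}{4!}(\dd\psi)_{abcde}\,\dd x^{bcde}$ and the five-index antisymmetry $(\dd\psi)_{abcde} = (-1)^4(\dd\psi)_{bcdea} = (\dd\psi)_{bcdea}$, one obtains $4!\,\partial_a\lrcorner\dd\psi = (\dd\psi)_{bcdea}\,\dd x^{bcde}$, and plugging back yields the claimed formula.

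For the second identity, my plan is to expand $\partial_a\psi$ in components and apply the compatibility condition $\nabla\psi=0$. Writing this out gives
\[
\partial_a\psi_{bcde} \;=\; \Gamma_{ab}{}^f\,\psi_{fcde} + \Gamma_{ac}{}^f\,\psi_{bfde} + \Gamma_{ad}{}^f\,\psi_{bcfe} + \Gamma_{ae}{}^f\,\psi_{bcdf}~.
\]
Contracting with $\tfrac{1}{4!}\,\dd x^{bcde}$, the four terms are equal after a relabeling of the dummy indices (each swap of the relevant pair contributes a factor of $(-1)$ both to $\psi$ and to $\dd x^{bcde}$, which cancel), so one obtains
\[
\partial_a\psi \;=\; \frac{4}{4!}\,\Gamma_{ab}{}^f\,\psi_{fcde}\,\dd x^{bcde} \;=\; \frac{1}{3!}\,\theta_a{}^f\wedge \psi_{fcde}\,\dd x^{cde}~,
\]
using the definition $\theta_a{}^f = \Gamma_{ab}{}^f\,\dd x^b$. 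Finally, the antisymmetry $\psi_{fcde} = -\psi_{cdef}$ (moving $f$ from the first to the last slot requires three transpositions) and the relabeling $f\to b$ converts this into the claimed form
\[
\partial_a\psi \;=\; -\,\frac{1}{3!}\,\theta_a{}^b\wedge\psi_{cdeb}\,\dd x^{cde}~.
\]

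The main obstacle in both parts is purely bookkeeping: tracking signs under index reorderings and making sure the factors $1/k!$ from the form conventions combine correctly. Neither step is deep — the first is really just Cartan's formula dressed up in components, and the second is just $\nabla\psi = 0$ unpacked and tidied — but the two identities will be used as a convenient black box in the main text to convert $\partial_a$ acting on the canonical four-form into a connection one-form expression, so stating them cleanly is important.
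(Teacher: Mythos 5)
Your proof is correct, and both identities check out sign by sign ($\psi_{bcda}=-\psi_{abcd}$ gives $\psi_{bcda}\,\dd x^{bcd}=-3!\,\partial_a\lrcorner\psi$; $(\dd\psi)_{abcde}=(\dd\psi)_{bcdea}$; and $\psi_{fcde}=-\psi_{cdef}$ at the end). For the second identity your argument is the same as the paper's: expand $\partial_a\psi_{bcde}$ via $\nabla\psi=0$, note that the four Christoffel terms coincide under contraction with $\dd x^{bcde}$, and tidy the sign. For the first identity you take a genuinely cleaner route: you package the left-hand side as $-4!\,\dd(\partial_a\lrcorner\psi)$ and invoke Cartan's magic formula $\mathcal{L}_{\partial_a}\psi=\dd(\partial_a\lrcorner\psi)+\partial_a\lrcorner\dd\psi$ for the coordinate vector field $\partial_a$, whereas the paper works directly in components, writing $\partial_e\psi_{bcda}\,\dd x^{ebcd}$ as the full antisymmetrization $5\,\partial_{[e}\psi_{bcda]}$ minus the remainder terms and then solving for the desired combination. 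The two arguments are of course the same identity in different clothing --- Cartan's formula in components \emph{is} the antisymmetrization decomposition --- but your version makes the conceptual content transparent and reduces the bookkeeping to two sign checks, while the paper's version is self-contained at the index level and fixes the conventions (placement of $1/k!$ factors, ordering of indices) that are used elsewhere in the appendix. Either proof serves the stated purpose of converting $\partial_a$ acting on $\psi$ into an expression involving the connection one-form $\theta_a{}^b$.
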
 
\begin{proof}
For the first identity we compute
\begin{align*}
\dd(\psi_{bcda}\, \dd x^{bcd}) &= \partial_e\psi_{bcda}\, \dd x^{ebcd}
\\
&= (5\, \partial_{[e}\psi_{bcda]} - 3\, \partial_b\psi_{cdae} - \partial_a\psi_{ebcd})\, \dd x^{ebcd}
\\
&= (\dd\psi)_{bcdea}\, \dd x^{bcde} - 3\, \dd(\psi_{bcda}\, \dd x^{bcd})
- 4!\, \partial_a\psi~,
\end{align*}
which gives the result desired.

For the second we have, 
\begin{align*}
 \partial_a\psi &= \frac{1}{4!}\, \partial_a\psi_{bcde}\,\dd x^{bcde}
 = \frac{1}{4!}\, (\nabla_a\psi_{bcde} + 4\, \Gamma_{ab}{}^f\, \psi_{fcde})\,\dd x^{bcde}
 \\[3pt]
 &= 
 \frac{1}{3!}\, \Gamma_{ab}{}^f\, \psi_{fcde}\,\dd x^{bcde}
 = - \frac{1}{3!}\, \theta_a{}^b\wedge \psi_{cdeb}\,\dd x^{cde}~,
 \end{align*}
 where we have used $\nabla\psi = 0$. 
 
\end{proof}

\begin{lemma}\label{lem:idtwo}
Let $\varphi$ be a three form on a seven dimensional manifold $Y$ (not necessarily a manifold with a $G_2$ structure).  Then we have the identity
\begin{equation*}
3\,\dd(\varphi_{bca}\, \dd x^{bc})
 =  3!\, \partial_a\varphi  + (\dd\varphi)_{bcda}\, \dd x^{bcd}~.
\end{equation*}
If the manifold $Y$ has a $G_2$ structure determined by $\varphi$ (not necessarily harmonic) with a connection $\nabla$ compatible with the $G_2$
structure we also have
\[ \partial_a\varphi =  \frac{1}{2}\, \theta_a{}^b\wedge \varphi_{cdb}\,\dd x^{cd}~.\]
where $\theta_a{}^b = \Gamma_{ac}{}^b\, \dd x^c$ and
$\Gamma_{ac}{}^b$ are the connection symbols of $\nabla$.
\end{lemma}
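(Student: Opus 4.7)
The plan is to prove both identities by direct index computation, closely mirroring the proof of Lemma~\ref{lem:idone} but adapted to $\varphi$ being a 3-form rather than a 4-form.

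For the first identity, I would start from $\dd(\varphi_{bca}\, \dd x^{bc}) = \partial_d\varphi_{bca}\, \dd x^{dbc}$ and invoke the antisymmetrization identity $(\dd\varphi)_{bcda} = 4\,\partial_{[b}\varphi_{cda]} = \partial_b\varphi_{cda} - \partial_c\varphi_{bda} + \partial_d\varphi_{bca} - \partial_a\varphi_{bcd}$, coming from the definition of the components of a 4-form as the totally antisymmetric part of the partial derivative acting on the 3-form. Contracting both sides with $\dd x^{bcd}$, and relabeling dummy indices using the total antisymmetry of $\dd x^{bcd}$, the three terms in which the derivative does not act on the index $a$ all collapse to copies of $\dd(\varphi_{bca}\, \dd x^{bc})$, whereas the final term contributes $-3!\,\partial_a\varphi$ since $\varphi_{bcd}\,\dd x^{bcd}=3!\,\varphi$. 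Solving the resulting equation for $3\,\dd(\varphi_{bca}\, \dd x^{bc})$ yields the stated identity.

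For the second identity, I would write $\partial_a\varphi = \tfrac{1}{3!}\,\partial_a\varphi_{bcd}\, \dd x^{bcd}$ and replace $\partial_a\varphi_{bcd}$ using the compatibility condition $\nabla_a\varphi_{bcd} = 0$, which gives $\partial_a\varphi_{bcd} = \Gamma_{ab}{}^e\,\varphi_{ecd} + \Gamma_{ac}{}^e\,\varphi_{bed} + \Gamma_{ad}{}^e\,\varphi_{bce}$. The total antisymmetry of $\dd x^{bcd}$ makes the three Christoffel contributions equal, so $\partial_a\varphi = \tfrac{1}{2}\,\Gamma_{ab}{}^e\,\varphi_{ecd}\, \dd x^{bcd}$, and repackaging with $\theta_a{}^b = \Gamma_{ac}{}^b\, \dd x^c$ together with the cyclic symmetry $\varphi_{ecd} = \varphi_{cde}$ produces $\partial_a\varphi = \tfrac{1}{2}\,\theta_a{}^b \wedge \varphi_{cdb}\, \dd x^{cd}$.

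The hard part here is purely combinatorial bookkeeping of signs and symmetry factors in the antisymmetrizations over four indices; once the dummy indices are suitably relabeled, the coincidences among terms are forced and nothing conceptually new appears beyond what is already used in Lemma~\ref{lem:idone}. I note also that the first identity holds for any 3-form on any 7-manifold, whereas the second relies only on the compatibility $\nabla\varphi = 0$ and not on any further structure, paralleling the analogous dichotomy in Lemma~\ref{lem:idone}.
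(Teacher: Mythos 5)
Your proposal is correct and is essentially the computation the paper intends: the paper's own "proof" merely defers to the proof of Lemma \ref{lem:idone}, and your index manipulations (expanding the four-index antisymmetrization for the first identity, and using $\nabla_a\varphi_{bcd}=0$ with the cyclic symmetry $\varphi_{ecd}=\varphi_{cde}$ for the second, which correctly produces $+\tfrac{1}{2}$ rather than the $-\tfrac{1}{3!}$ of the $\psi$ case) reproduce that argument faithfully.
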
 

\begin{proof}
The proof is similar to the proof of lemma \ref{lem:idone} and is left as an exercise.
\end{proof}


\section{Elliptic Complex}
\label{app:Elliptic}

In this appendix we recall basic notions about ellipticity of operators and complexes, following the book of Gilkey \cite{gilkey_online}, to which we refer for a more detailed account. We then show that the complexes defined in section \ref{sec:g2struc} and \ref{sec:instbund} are elliptic. We first recall that a complex is elliptic if it is exact on the level of leading symbols:

\begin{definition}[Gilkey]
Let $V$ be a graded vector bundle: $V$ is a collection of vector bundles $\{V_j\}_{j\in \mathbb{Z}}$ such that $V_j \neq \{0\}$ for only a finite number of indices $j$. Let $P$ be a graded pseudo differential operator ($\Psi$DO) of order $d$: $P$ is a collection of $d^{th}$ order $\Psi$DOs $P_j: C^{\infty}(V_j) \to C^{\infty}(V_{j+1})$. Then $(P,V)$ is a {\it complex} if $P_{j+1} P_j = 0$ and $\sigma_L P_{j+1} \sigma_L P_j = 0$. $(P,V)$ is an {\it elliptic complex} if 
\be \label{eq:ellip}
N(\sigma_L P_j)(x,\xi) = R (\sigma_L P_{j+1}) (x,\xi) \ee
i.e. the complex is exact on the level of the leading symbol $\sigma_L$.\footnote{$N$ denotes the null space (i.e. kernel) of the operator, and $R$ denotes the range (or image).} 
\end{definition}

To be able to use this definition, we must define pseudo differential operators, as well as the symbol of an operator. Let  $\alpha = (\alpha_1, ... \alpha_m)$ a multi-index, and $|\alpha|= \alpha_1 + ...+\alpha_m$. Introduce a notation for multiple partial differential operator
\[D^{\alpha}_x = (-i)^{|\alpha|} (\frac{\partial}{\partial x_1})^{\alpha_1} ... (\frac{\partial}{\partial x_m})^{\alpha_m} \; ,
\]
where the factors of $i$ will simplify the expressions below. A  linear partial differential operator of order $d$ may then be expressed as
\[
P =  \sum_{|\alpha| \le d} a_{\alpha}(x) D^{\alpha}_x \; ,
\]
where $a_{\alpha}(x)$ are smooth functions. Using the dual variable $\xi$ that appears in the Fourier transform (where the measure $dx$ is defined so that factors of $2 \pi i$ are absorbed)
\be \label{eq:fourier}
\hat{f}(\xi) = \int e^{-i x \cdot \xi} f(x) d x \; ,
\ee
we define the symbol $\sigma P$  by
\be
\sigma P (x, \xi) = \sum_{|\alpha| \le d} a_{\alpha}(x) \xi^{\alpha}
\ee
This is a polynomial of order $d$ in $\xi$. The leading symbol, $\sigma_L P$, is the highest degree part of this polynomial
\be
\sigma_L P(x, \xi) = \sum_{|\alpha| = d} a_{\alpha}(x) \xi^{\alpha} \;.
\ee
As described in \cite{gilkey_online}, we may generalise to  non-polynomial symbols, for which the corresponding operator is called a pseudo differential operator ($\Psi$DO). The complexes of relevance here all have polynomial symbols.

The condition of ellipticity for a complex can also be stated in terms of a constraint on the associated Laplacian. Define the Laplacian of an operator $P_j$ as 
\be
\Delta_j = P_j^* P_j + P_{j-1} P_{j-1}^*: C^{\infty}(V_j) \to C^{\infty}(V_j) \; ,
\ee
where $P_j^*$ denotes the adjoint operator of $P_j$. The leading symbol of $\Delta_j$ is given by
\be
\sigma_L(\Delta_j) = p_j^* p_j + p_{j-1}p_{j-1}^*, \mbox{ where } p_j = \sigma_L(P_j)
\ee
An operator is elliptic if its leading symbol is non-singular for $\xi \neq 0$. We then have, as proven in \cite{gilkey_online}:
\begin{lemma}[Gilkey]
\label{lem:laplace} 
Let $(P,V)$ be a $d^{th}$ order partial differential complex. Then $(P,V)$ is elliptic iff $\Delta_j$ is an elliptic operator of order $2d$ for all $j$.
\end{lemma}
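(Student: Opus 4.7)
The plan is to establish both implications at the fiberwise symbol level, where everything reduces to linear algebra on finite-dimensional vector spaces $V_{j,x}$, and to handle the order statement separately as a trivial consequence of the composition law for orders of $\Psi$DOs. I would first fix Hermitian inner products on the fibers of each $V_j$ so that the adjoint $p_j^{*} := \sigma_L(P_j^{*})(x,\xi)$ is literally the Hermitian adjoint of $p_j := \sigma_L(P_j)(x,\xi)$ for each cotangent direction $(x,\xi)$. Since $P_j$ and $P_{j-1}^{*}$ are of order $d$, their compositions with $P_j^{*}$ and $P_{j-1}$ respectively are of order $2d$, and the leading symbol of $\Delta_j$ is $\sigma_L(\Delta_j)(x,\xi) = p_j^{*} p_j + p_{j-1}\,p_{j-1}^{*}$. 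This settles the order claim and reduces the problem to showing that this symbol is invertible on $\xi \neq 0$ precisely when the symbol complex is exact.

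For the forward direction, assuming $(P,V)$ elliptic, I would fix $\xi \neq 0$ and take $v \in V_{j,x}$ with $\sigma_L(\Delta_j)(x,\xi)\, v = 0$. Pairing with $v$ gives
\[
0 \;=\; \langle (p_j^{*} p_j + p_{j-1} p_{j-1}^{*}) v,\, v\rangle \;=\; \|p_j v\|^2 + \|p_{j-1}^{*} v\|^2,
\]
so $p_j v = 0$ and $p_{j-1}^{*} v = 0$. Exactness \eqref{eq:ellip} then yields $w$ with $v = p_{j-1} w$, and consequently $\|v\|^2 = \langle p_{j-1} w, v\rangle = \langle w, p_{j-1}^{*} v\rangle = 0$. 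Hence $v = 0$, proving $\sigma_L(\Delta_j)(x,\xi)$ is injective on a finite-dimensional fiber, thus invertible.

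For the converse, assuming each $\Delta_j$ elliptic, the inclusion $R(p_{j-1}) \subseteq N(p_j)$ is automatic from $p_j p_{j-1} = \sigma_L(P_j P_{j-1}) = 0$. For the opposite inclusion, given $v \in N(p_j)$ I would orthogonally decompose $v = v_1 + v_2$ with $v_1 \in R(p_{j-1})$ and $v_2 \in R(p_{j-1})^{\perp}$. Orthogonality gives $\langle p_{j-1}^{*} v_2,\, w\rangle = \langle v_2, p_{j-1} w\rangle = 0$ for all $w$, whence $p_{j-1}^{*} v_2 = 0$; moreover $p_j v_1 = 0$ and $p_j v = 0$ force $p_j v_2 = 0$. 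Therefore $\sigma_L(\Delta_j) v_2 = 0$, and ellipticity of $\Delta_j$ yields $v_2 = 0$, so $v = v_1 \in R(p_{j-1})$, as needed.

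The main obstacle is really bookkeeping rather than ideas: one must be careful about the existence and role of the fiber metrics used to define the symbol-level adjoints and verify that these choices are consistent with the formal-adjoint construction of $P_j^{*}$ on sections, so that $\sigma_L(P_j^{*}) = p_j^{*}$ genuinely holds. Once this is in place, both implications reduce to the short finite-dimensional arguments above, and no further analytic input is required.
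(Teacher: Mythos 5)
The paper does not actually prove this lemma: it is quoted from Gilkey's book with a citation in place of an argument, so there is no in-paper proof to compare against. Your proof is correct and is essentially the standard symbol-level argument from Gilkey (fix fiber metrics so that $\sigma_L(P_j^{*})=p_j^{*}$, then run the two finite-dimensional linear-algebra implications: positivity of $\langle\sigma_L(\Delta_j)v,v\rangle$ plus exactness gives injectivity of the symbol, and conversely the orthogonal decomposition of $N(p_j)$ against $R(p_{j-1})$ recovers exactness from invertibility). The only point worth flagging is notational: the paper's equation \eqref{eq:ellip} states exactness as $N(\sigma_L P_j)=R(\sigma_L P_{j+1})$, which is an index slip for $R(\sigma_L P_{j-1})$; you have correctly used the intended version, since $R(p_{j-1})$ and $N(p_j)$ are the subspaces of the same fiber $V_{j,x}$.
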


\subsection{Examples of elliptic complexes}
Let us now demonstrate that the complexes \eqref{eq:dolb} and \eqref{eq:dolbV} are elliptic. We will use that de Rham complex is elliptic, so we start by recalling this fact.

\paragraph{Ellipticity of the de Rham complex}(Gilkey)

\noindent
To prove that the de Rham comples is elliptic, we need the symbol of $\dd$. 
Recall that the symbol of $\frac{\partial}{\partial x_j}$ is given by the dual coordinate $i \xi_j$, via the Fourier transform \eqref{eq:fourier}.
To find the symbol for $\dd = \sum \dd x_j \wedge \frac{\partial}{\partial x_j}$, define the one-form $\xi = \sum \xi_j \dd x_j$. The symbol of $\dd$ is then
\be
\sigma(\dd) = i\rm{ext}(\xi) 
\ee 
where
\be
\rm{ext}(\xi) \omega = \xi \wedge \omega \; .
\ee
Note that for this example, the symbol only contains monomials of maximal degree = 1, and so the leading symbol coincides with the symbol.

We must now show that \eqref{eq:ellip} holds for $\sigma(\dd) = i\rm{ext}(\xi)$. Fix $\xi\neq0$ and choose a frame $\{e_1, ..., e_m\}$ in $T^*Y$ such that $\xi = e_1$. We then have
\be
i{\rm ext}(\xi) e_I = \Big{\{}
\begin{matrix}
0 &, i_1 = 1\\
e_J &, J = \{1,i_1,...,i_p\}
\end{matrix} \; .
\ee
It follows that $N\left({\rm ext}(\xi_p)\right) = R\left({\rm ext}(\xi_{p-1})\right)$, as required (note that the index $p$ and $p-1$ is not necessary, since the operator $\dd$ is the same for all $p$). This proves that the de Rham complex is exact on the symbol level, as required for an elliptic complex.

\paragraph{Ellipticity of the $\check{\dd}$-complex}
Let us now consider the complex \eqref{eq:dolb} of Fernandez-Ugarte. The ellipticity of this complex was shown in \cite{reyes1993some, carrion1998generalization}, but we repeat the argument here for convenience. We first recall the complex
\begin{equation}
\label{eq:dolb2}
0\rightarrow\Lambda^0(Y)\xrightarrow{\cd}\Lambda^1(Y)\xrightarrow{\cd}\Lambda^2_7(Y)\xrightarrow{\cd}\Lambda^3_1(Y)\rightarrow0
\end{equation}
where the corresponding nilpotent operator is
\begin{equation}
\cd=\pi\circ\dd\:,
\end{equation}
where $\pi$-denotes the projection onto the appropriate group. Note that for $\alpha\in\check\Lambda^*(Y)$ we have
\begin{equation}
\psi\wedge\alpha=\psi\wedge\pi(\alpha)\:.
\end{equation}
The corresponding symbol is hence
\begin{equation}
\sigma(\xi,x)=\pi\circ\rm{ext}({\xi})\:,
\end{equation}
modulo non-important prefactors.

Now let $\alpha$ be a one-form. If $\alpha$ is in the kernel of $\sigma$, it implies that
\begin{equation}
\psi\wedge\xi\wedge\alpha=0\:,
\end{equation}
or equivalently
\begin{equation}
\xi^m\alpha^n\varphi_{mnp}=0\:.
\end{equation}
It follows that $\alpha=f\,\xi$ for some function $f$, and $N(\sigma)=R(\sigma)$ at the level of one-forms. We next assume $\alpha\in\Lambda^2_7(Y)$. Being in the kernel of $\sigma$ is then equivalent to
\begin{equation}
\xi^m\alpha^{np}\varphi_{mnp}=0\:.
\end{equation}
Further, we can decompose $\alpha$ as
\begin{equation}
\alpha=\xi\wedge\gamma+\beta\:,
\end{equation}
where $\xi\lrcorner\beta=0$. It follows that
\begin{equation}
\xi^m\beta^{np}\varphi_{mnp}=0\:.
\end{equation}
As $\beta$ is orthogonal to $\xi$, this implies that
\begin{equation}
\beta^{np}\varphi_{mnp}=0\:,\;\;\;\Rightarrow\;\;\;\beta\wedge\psi=0\:.
\end{equation}
But then
\begin{equation}
\alpha=\pi(\alpha)=\pi(\xi\wedge\gamma)\:.
\end{equation}
Hence $\alpha\in\rm{Im}(\sigma)$. Finally, we note that $\sigma$ is surjective onto $\Lambda^3_1(Y)$, and so the symbol is exact at this level as well. Hence the complex \eqref{eq:dolb2} is elliptic. 

\paragraph{Ellipticity of the $\cd_A$-complex}
 It is now straight-forward to prove also that the complex \eqref{eq:dolbV} is elliptic. First, recall that  by Theorem \ref{th:dacheck}, we have
\be
\cd_A^2 = 0
\ee
as long as $A$ is an instanton connection. With
\be
\dd_A \omega = \sum_{j,I} \left(\frac{\partial \omega_I}{\partial x_j} + A_j \omega_I \right) \dd x_j \wedge \dd x^I \; ,
\ee
the symbol of $\cd_A$ is 
\be
\sigma(\cd_A) = \pi \circ {\rm ext}(i \xi+A)
\ee
so the leading symbol $\sigma_L(\cd_A)$ equals $\sigma_L(\cd)$, and hence also $(\cd_A, C^{\infty}(\Lambda(T^*M)))$ is an elliptic complex.

\end{appendix}

\newpage



\providecommand{\href}[2]{#2}\begingroup\raggedright\endgroup

\end{document}